\newcommand{\vol}{\text{vol}}
\newcommand{\BB}{\boldsymbol{B}}
\newcommand{\br}{\boldsymbol{r}}
\newcommand{\bgamma}{\boldsymbol{\gamma}}
\newcommand{\bflow}{\boldsymbol{f}}
\newcommand{\bc}{\boldsymbol{c}}
\newcommand{\bDeg}{\boldsymbol{deg_G}}
\newcommand{\bDelta}{\boldsymbol{\Delta}}
\newcommand{\bT}{\boldsymbol{\nabla}}
\newcommand{\bNull}{\boldsymbol{0}}
\newcommand{\bOnes}{\boldsymbol{1}}
\newcommand{\ex}{\boldsymbol{ex}_{\bflow}}
\newcommand{\abs}{\boldsymbol{abs}_{\bflow}}
\newtheorem{theorem}{Theorem}[section]
\newtheorem{corollary}[theorem]{Corollary}
\newtheorem{lemma}[theorem]{Lemma}
\newtheorem{claim}[theorem]{Claim}
\newtheorem{invariant}[theorem]{Invariant}
\newtheorem{fact}[theorem]{Fact}
\newtheorem{Informal Theorem}[theorem]{Informal Theorem}
\theoremstyle{definition}
\newtheorem{definition}[theorem]{Definition}
\newtheorem{remark}[theorem]{Remark}
\newtheorem*{theorem*}{Theorem}
\newtheorem*{corollary*}{Corollary}
\newtheorem*{conjecture*}{Conjecture}
\newtheorem*{lemma*}{Lemma}
\newtheorem*{thm*}{Theorem}
\newtheorem*{prop*}{Proposition}
\newtheorem*{obs*}{Observation}
\newtheorem*{definition*}{Definition}
\newtheorem*{remark*}{Remark}
\newtheorem*{rec*}{Recommendation}
\title{Maintaining Expander Decompositions via Sparse Cuts}
\author{Yiding Hua  \\ ETH Zurich \\ yidhua@student.ethz.ch \and Rasmus Kyng\thanks{The research leading to these results has received funding from the grant ``Algorithms and complexity for high-accuracy flows and convex optimization'' (no. 200021 204787) of the Swiss National Science Foundation.} \\ ETH Zurich  \\ kyng@inf.ethz.ch \and Maximilian Probst Gutenberg\footnotemark[1]\\ ETH Zurich \\ maxprobst@ethz.ch \and Zihang Wu \\ ETH Zurich \\ zihangwu98@gmail.com}
\date{}
\begin{document}
\pagenumbering{gobble}

\maketitle

\begin{abstract}
In this article, we show that the algorithm of maintaining expander decompositions in graphs undergoing edge deletions directly by removing sparse cuts repeatedly can be made efficient.

Formally, for an $m$-edge undirected graph $G$, we say a cut $(S, \overline{S})$ is $\phi$-sparse if $|E_G(S, \overline{S})| < \phi \cdot \min\{\vol_G(S), \vol_G(\overline{S})\}$. A $\phi$-expander decomposition of $G$ is a partition of $V$ into sets $X_1, X_2, \ldots, X_k$ such that each cluster $G[X_i]$ contains no $\phi$-sparse cut (meaning it is a $\phi$-expander) with $\tilde{O}(\phi m)$ edges crossing between clusters. A natural way to compute a $\phi$-expander decomposition is to decompose clusters by $\phi$-sparse cuts until no such cut is contained in any cluster. We show that even in graphs undergoing edge deletions, a slight relaxation of this meta-algorithm can be implemented efficiently with amortized update time $m^{o(1)}/\phi^2$. 

Our approach naturally extends to maintaining directed $\phi$-expander decompositions and $\phi$-expander hierarchies and thus gives a unifying framework while having simpler proofs than previous state-of-the-art work. In all settings, our algorithm matches the run-times of previous algorithms up to subpolynomial factors. Moreover, our algorithm provides stronger guarantees for $\phi$-expander decompositions.
For example, for graphs undergoing edge deletions, our approach is the first to maintain a dynamic expander decomposition where each updated decomposition is a refinement of the previous decomposition, and our approach is the first to guarantee a sublinear $\phi m^{1+o(1)}$ bound on the total number of edges that cross between clusters across the entire sequence of dynamic updates.
Our techniques also give by far the simplest, deterministic algorithms for maintaining Strongly-Connected Components (SCCs) in directed graphs undergoing edge deletions, and for maintaining connectivity in undirected fully-dynamic graphs, both matching the current state-of-the art run-times up to subpolynomial factors. 
\end{abstract}


\clearpage
\pagenumbering{arabic}

\section{Introduction}

During the last two decades, expanders and expander decompositions have been central to the enormous progress on fundamental graph problems. 

In static graphs expander decompositions were a fundamental tool to obtain the first near-linear time Laplacian solvers \cite{spielman2004nearly} and were used in many recent algorithms for maximum flow and min-cost flow problems \cite{kelner2014almost, van2020bipartite, van2021minimum, bernstein2021deterministic}. This, ultimately, led to an almost-linear time max flow and min-cost flow algorithm \cite{chen2022maximum} which crucially relies on techniques to maintain expanders undergoing edge deletions. Further, expanders have been central to all deterministic almost-linear time global min-cut algorithms for undirected graphs \cite{kawarabayashi2018deterministic, saranurak2021simple, li2021deterministic}, to compute short-cycle decompositions \cite{chu2020graph, parter2019optimal, liu2019short}, to find min-cut preserving vertex sparsifiers \cite{chalermsook2021vertex, liu2020vertex}, and have found many, many more applications.

In dynamic graphs, i.e.  graphs that are undergoing edge insertions and deletions over time, expanders played an equally important role in recent years. There, they have been behind new worst-case update time and derandomization results in dynamic connectivity \cite{wulff2017fully, nanongkai2017dynamic1, nanongkai2017dynamic, chuzhoy2020deterministic}, strongly-connected components \cite{bernstein2020deterministic}, single-source shortest paths  \cite{chuzhoy2019new, bernstein2020deterministic, chuzhoy2021deterministic, chuzhoy2021decremental, bernstein2021deterministic}, approximate $(s,t)$-max-flow and min-cut algorithms \cite{goranci2021expander}, and sparsifiers against adaptive adversaries \cite{bernstein2020fully}. They were also a key ingredient in the first subpolynomial update time $c$-edge connectivity algorithm \cite{jin2022fully}.

Given the enormous impact that expander techniques have had on the current state-of-the-art of graph algorithms, we therefore believe that it is important to further our understanding of expander maintenance. In this article, we give a new approach that goes well beyond previous techniques and that we believe is simple and accessible, works well in many settings (in directed graphs or graphs undergoing vertex splits, and so on), and even obtains stronger properties than previous algorithms. 
Concretely, our approach is the first to maintain an expander decomposition where each updated decomposition is a refinement of the previous one, and the first to achieve a sublinear bound $\phi m^{1+o(1)}$ on the total number of edges that cross between partitions summed across the entire sequence of updates.
We also show various interesting applications of our new techniques for many of the problems mentioned above, leading to simpler algorithms overall.

\subsection{Expanders and Expander Decompositions}

To advance the discussion let us formally define expanders. As expanders are objects closely related to flows, we let $G$ generally denote a \emph{directed}, unweighted multi-graph. We say $G$ is \emph{undirected}, if there is a one-to-one correspondence between edges $(u,v) \in E$ and $(v,u) \in E$. We let the degree of a vertex $v$ in $G$, denoted by $\deg_G(v)$, be the number of incident edges, i.e. edges with $v$ as tail or head. We define $\vol_G(X)$ for $X \subseteq V$ to be the sum of degrees, i.e. $\vol_G(X) = \sum_{v \in X} \deg_G(v)$. We let $E(A,B)$ for $A, B \subseteq V$ denote the edges in $E$ with tail in $A$ and head in $B$. We let $\overleftarrow{G}$ denote the graph $G$ with edges reversed, $G[X]$ be the graph induced by vertices in $X$, and let $G / X$ be the graph $G$ after contracting the vertices in $X$ into a single super-vertex. We say that a cut $(S, \overline{S})$ is $\phi$-out-sparse if $\vol_G(S) \leq \vol_G(\overline{S})$ and $|E(S, \overline{S})| < \phi \cdot vol_G(S)$ and $\phi$-sparse if it is $\phi$-out-sparse in $G$ or $\overleftarrow{G}$. This allows us to define the notion of expanders. 

\begin{definition}[Expander]
For any $\phi \in (0,1)$, we say that $G$ is a \emph{$\phi$(-out)-expander} if it has no $\phi$(-out)-sparse cut. 
\end{definition}

It is straight-forward to see that for undirected graphs, if $G$ is a $\phi$-out-expander, then it also is a $\phi$-expander, as we have symmetry in the cuts. Given the definition of an expander, we can define the following decomposition which is the central object of this article.

\begin{definition}[Expander Decomposition]
Given a directed graph $G$ and parameters $\phi \in (0,1],
\beta \geq 1$, we say that a tuple $(\mathcal{X}, E_{Rest})$ forms an $(\beta, \phi)$-expander decomposition of $G$ where $\mathcal{X}$ is a partition of $V$ and $E_{Rest} \subseteq E$ if 
(1) for each $X \in \mathcal{X}$, cluster $G[X]$ is a $\phi$-expander, and (2) $E_{Rest}$ is of size at most $\beta \phi m$, and (3) $G / \{X_i\}_i \setminus E_{Rest}$ is a DAG.
\end{definition}

We sometimes call $\beta$ the \emph{quality} of the expander decomposition. Note that for undirected graphs, we can extend the above set $E_{Rest}$ to always include the anti-parallel edge $(u,v)$ if already $(v,u)$ in $E_{Rest}$, and thus only loose a factor of $2$ in the size of $E_{Rest}$, but then obtain the property that $G / \{X_i\}_i \setminus E_{Rest}$ is a graph containing only self-loops. Put differently, $E_{Rest}$ contains all edges between clusters.

\begin{definition}[Undirected Expander Decomposition]
Given an \emph{undirected} graph $G$ and parameters $\phi \in (0,1],
\beta \geq 1$, we say that a tuple $(\mathcal{X}, E_{Rest})$ forms an $(\beta, \phi)$-expander decomposition of $G$ where $\mathcal{X}$ is a partition of $V$ and $E_{Rest} \subseteq E$ if 
(1) for each $X \in \mathcal{X}$, cluster $G[X]$ is a $\phi$-expander,
and (2) $E_{Rest}$ is the set of edges not in any cluster and is of size at most $2\beta \phi m$.
\end{definition}


\subsection{A Natural Meta-Algorithm for Expander Decomposition} 

To obtain a $(\tilde{O}(1), \phi)$-expander decomposition, the following meta-algorithm is folklore.

\begin{algorithm}
$\mathcal{X} \gets \{V\}$; $E_{Rest} \gets \emptyset$.\;
\While{there is a $\phi$-out-sparse cut $(S, X \setminus S)$ in $G[X]$ or $\overleftarrow{G}[X]$ for $X \in \mathcal{X}$}{
    Replace $X$ in $\mathcal{X}$ by sets $S$ and $X \setminus S$.\;
    Add to $E_{Rest}$ the smaller set of edges $E_{G[X]}(S, X \setminus S)$ or $E_{G[X]}(X \setminus S, S)$.\label{lne:addSparseCutInMeta}
}
\Return $(\mathcal{X}, E_{Rest})$
\caption{$\textsc{MetaAlgorithm}(G,\phi)$}
\label{alg:metaAlg}
\end{algorithm}

Let us analyze this meta-algorithm. To see that the while-loop terminates, it suffices to observe that each while-loop iteration decomposes a set $X \in \mathcal{X}$ further and thus after $n-1$ iterations, each set in $\mathcal{X}$ is a singleton set $\{v\}$ for some vertex $v \in V$. But $G[\{v\}]$ forms a trivial $\phi$-expander. Let us next argue that  there are at most $\tilde{O}(\phi m)$ edges in $E_{Rest}$ by the end of the algorithm: every time $E_{G[X]}(S, X \setminus S)$ or $E_{G[X]}(X \setminus S, S)$ is added in \Cref{lne:addSparseCutInMeta}, the number of added edges to $E_{Rest}$ is at most $\phi \vol_{G[X]}(S) \leq \phi \vol_G(S)$. But each vertex $s \in S$ is contained in a cluster with at most half the number of edges compared to the cluster $G[X]$. Thus, each vertex $v \in V$ can be at most $O(\log(m))$ times on the smaller side of the sparse cut. This implies our bound. It remains to use the condition of the while-loop to conclude that the output of the algorithm is indeed a $(\tilde{O}(1), \phi)$-expander decomposition.

\paragraph{Implementing the Meta-Algorithm Efficiently.} We point out that since finding a $O(1)$-approximate $\phi$-out-sparse cut even in an undirected graph is NP-hard \cite{chawla2006hardness} under the Unique Games Conjecture, any polynomial time implementation of the meta-algorithm has to resort to relaxing the algorithm to taking \emph{approximate} sparsest cuts. 

The first implementation of this relaxed meta-algorithm was already given in \cite{kannan2004clusterings} where expander decompositions were proposed. However, their straight-forward use of a static procedure to find a $\tilde{O}(\phi)$-sparse cut in each while-loop iteration caused them a $\Omega(mn)$ run-time since each iteration might only find a very unbalanced $\tilde{O}(\phi)$-sparse cut, leading to recursion depth of $\Omega(n)$ in the worst case.

Later, near-linear time algorithms were found that implement the meta-algorithm more loosely. The first such work in undirected graphs was by Spielman and Teng \cite{spielman2004nearly} who proposed spectral local methods to locate balanced $O(\sqrt{\phi})$-sparse cuts which allowed them to obtain \emph{near-expanders} (a weaker notion of expanders). 

The framework by Nanongkai, Saranurak and Wulff-Nilsen \cite{wulff2017fully, nanongkai2017dynamic1, nanongkai2017dynamic} finally gave the first efficient implementation of the meta-algorithm that only used (relatively) balanced $n^{o(1)}\phi$-sparse cuts, resulting in total time $m^{1+o(1)}$ to compute an expander decomposition in undirected graphs. A key ingredient in their work was the adaption of flow based techniques to obtain improved approximation guarantees over the framework of Spielman and Teng. These flow techniques were in turn pioneered in \cite{khandekar2009graph, peng2016approximate, orecchia2014flow}. The framework in \cite{bernstein2020deterministic} further extended this technique to directed graphs with similar guarantees (up to subpolynomial factors).

Recently, Saranurak and Wang \cite{saranurak2019expander} also gave an algorithm to compute $(\tilde{O}(1), \phi)$-expander decompositions in undirected graphs in time $\tilde{O}(m/\phi)$ which improves on the above runtime for the important case where $\phi = \tilde{\Omega}(1)$. The algorithm can be seen as an even further refinement of the flow based techniques in \cite{wulff2017fully, nanongkai2017dynamic1, nanongkai2017dynamic} obtaining almost optimal approximation guarantees and run-time. We point out however that this algorithm relaxes the above meta-algorithm even further by also using non-sparse cuts when convenient. 

\paragraph{The Meta-Algorithm for Dynamic Graphs.} Interestingly, the meta-algorithm is also natural for graphs undergoing edge deletions. More precisely, a natural way to extend the meta-algorithm is to run its while-loop after each edge deletion on the clusters given from before the deletion. The same analysis from before can now be made to conclude that even after $m$ deletions, the maximum number of edges to ever join the set $E_{Rest}$ (which is now a monotonically increasing set) is at most $\tilde{O}(\phi m)$. In fact, the above analysis even holds for graphs $G$ undergoing $\tilde{O}(m)$ edge deletions, vertex splits and self-loop insertions.

The main contribution of this article is to show that the meta-algorithm can even be implemented efficiently for graphs undergoing edge deletions, vertex splits and self-loop insertions (although at the additional cost of $m^{o(1)}$ in the sparsest cut approximation). This starkly differs from previous algorithms to maintain dynamic expander decompositions \cite{henzinger2020local,nanongkai2017dynamic1, nanongkai2017dynamic, wulff2017fully, bernstein2020deterministic, chuzhoy2020deterministic, goranci2021expander} which all take non-sparse cuts and have to maintain $E_{Rest}$ as a fully-dynamic set to retain reasonable size where $E_{Rest}$ has to undergo up to $\tilde{O}(m)$ total changes. This strengthening of properties on the expander decomposition then allows us to give a unified theorem that combines various previous results while losing at most subpolynomial factors in quality and run-time of the algorithm. We give a formal statement of our contribution in the next section and an overview of techniques in \Cref{subsec:techiques}.

\subsection{Our Contributions}

We summarize our main result in the Theorem below where the Theorem works for both directed and undirected graphs even though the definitions of expander decompositions differ slightly in these settings.

\begin{restatable}{theorem}{expanderMain}[Randomized Dynamic Expander Decomposition]\label{thm:mainTheoremRand}   \label{thm:mainTheoremDet} Given an $m$-edge graph $G$ undergoing a sequence of $\tilde{O}(m)$ updates consisting of edge deletions, vertex splits and self-loop insertions, parameters $\phi \in (0,1)$ and $1 \leq L_{max} = O(\sqrt{\log\log m})$.

Then, we can maintain a $(\gamma, \phi/\gamma)$-expander decomposition $\mathcal{X}$ for $\gamma = (\log(m))^{4^{O(L_{max})}}$ with the properties that at any stage (1) the current partition $\mathcal{X}$ is a refinement of all its earlier versions, and (2) the set $E_{Rest}$ is a super-set of all its earlier versions. The algorithm implements the meta-algorithm in \Cref{alg:metaAlg} and takes total time $\tilde{O}(m^{1+1/L_{max}}\gamma/\phi^2)$ and succeeds with high probability.
\end{restatable}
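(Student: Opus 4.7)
The plan is to implement the meta-algorithm of \Cref{alg:metaAlg} verbatim, but to equip every cluster with a recursive approximate-sparsest-cut oracle that is robust to edge deletions, vertex splits, and self-loop insertions. Correctness and the combinatorial guarantees then come essentially for free from the static analysis: properties (1) and (2) of the theorem hold by construction, since the algorithm only ever refines clusters and only ever appends to $E_{Rest}$. For the size bound, each time a $(\phi/\gamma)$-sparse cut $(S, X \setminus S)$ is extracted we add at most $(\phi/\gamma)\vol_{G[X]}(S)$ edges to $E_{Rest}$; each vertex $v$ is on the smaller side at most $O(\log m)$ times since its enclosing cluster's volume halves, giving total mass $\tilde{O}(\phi m/\gamma)$, well within the $\gamma \cdot (\phi/\gamma) m = \phi m$ budget. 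Vertex splits and self-loop insertions only contribute through the $\tilde{O}(m)$ update budget and are absorbed by the same charging.

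For the algorithmic core, the plan is a recursive oracle $\textsc{Cut}_L(H, \phi')$ of depth $L \in \{0,\dots, L_{max}\}$ that, given a cluster $H$, either returns a $\phi'$-sparse cut of $H$ or certifies that $H$ is a $(\phi'/\gamma_L)$-expander via an embedded expander witness. At the base level $L=0$ I would call a standard cut-matching procedure in the spirit of Saranurak--Wang for the undirected case or Bernstein et al.~for the directed case. At level $L \geq 1$, the cut-matching game's matching player routes unit flows by recursing into $\textsc{Cut}_{L-1}$; this trades a constant-power blow-up in the approximation factor for a per-call speedup of $m^{1/L_{max}}$, and iterating this across $L_{max}$ levels yields $\gamma = (\log m)^{4^{O(L_{max})}}$. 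Between updates, each cluster holds onto its witness; upon an update we attempt expander pruning against the witness and, if successful, export the pruned set $P$ as a sparse cut of the current cluster and append the crossing edges to $E_{Rest}$. This is exactly the step that preserves the refinement property: no previously-made split is ever undone, only new splits are created.

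The hard part will be the amortized runtime analysis across the $L_{max}$ recursion levels in the dynamic setting. Each unit of work done inside $\textsc{Cut}_L$ must be charged either to an edge added to $E_{Rest}$ (via a sparse cut discovered) or to the $\tilde{O}(m)$ stream of external updates, and during the recursion the inner cluster at level $L-1$ may itself be shrinking because of concurrent pruning, so the matching player at level $L$ must tolerate this. I would resolve this with a potential function counting total volume pruned across all levels, arguing that each unit of pruned volume corresponds to $\Omega(1)$ edges added to $E_{Rest}$; the per-level work then telescopes to $\tilde{O}(m^{1+1/L_{max}}\gamma/\phi^2)$. Randomness enters only through the internal cut-matching game and is responsible for the high-probability qualifier; the upper bound $L_{max} = O(\sqrt{\log\log m})$ ensures $\gamma$ remains $m^{o(1)}$.
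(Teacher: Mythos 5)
There is a genuine gap: your plan treats ``expander pruning against the witness'' as an available black box that, upon an update, either repairs the witness or exports the pruned set as a \emph{sparse cut} of the current cluster. That primitive is precisely the paper's main technical contribution (the $\textsc{PruneOrRepair}$ procedure of \Cref{lma:asymPruning}), and it does not follow from known pruning results. Standard expander pruning (Saranurak--Wang style) only identifies a large subset $X' \subseteq X$ such that $G[X']$ remains an expander; the complementary pruned set is \emph{not} certified to be separated by a $\phi$-sparse cut, which is exactly why all previous dynamic expander decompositions must maintain $E_{Rest}$ as a fully dynamic set rather than a monotone one. To get genuine sparse cuts (and hence your claimed ``refinement for free'' and the monotone, $\tilde{O}(\phi m/\gamma)$-size $E_{Rest}$ via the volume-halving charge), one needs the paper's construction: place $\Theta(1/\psi)$ units of source at the endpoints of witness edges destroyed by the deletions, set sinks equal to degrees and capacities to $\Theta(1/(\psi\phi))$, run a bounded number of blocking-flow rounds locally, and then either convert the flow paths into new witness edges (repair) or grow a BFS level set in the residual graph to extract a cut whose sparseness is certified by the max-flow/min-cut argument. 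Without this, the step ``export the pruned set $P$ as a sparse cut'' is unjustified and the size bound on $E_{Rest}$ collapses.

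Two further mismatches with the paper's argument are worth flagging. First, your $L_{max}$-level recursion is a recursive cut-matching oracle (the matching player at level $L$ recurses into $\textsc{Cut}_{L-1}$), whereas the paper's hierarchy is over \emph{batch sizes of updates}: level $l$ keeps a witness $W_{X,l}$ valid up to roughly $m^{l/L_{max}}$ accumulated updates (tracked by the vectors $\br_{X,l}$), and a level-$l$ recomputation is obtained by running $\textsc{PruneOrRepair}$ against the level-$(l+1)$ witness, costing $\tilde{O}(m^{(l+1)/L_{max}}/\phi^2)$ while being charged against $\Omega(m^{l/L_{max}})$ units of accumulated $\br$-mass. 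This is where the $m^{1/L_{max}}$ overhead and the quality loss $\psi_l = \psi_{l+1}^4/144$ (hence $\gamma = (\log m)^{4^{O(L_{max})}}$) actually come from. Second, your runtime potential (``each unit of pruned volume corresponds to $\Omega(1)$ edges added to $E_{Rest}$'') cannot work as stated: most witness repairs succeed and add nothing to $E_{Rest}$, so the work must instead be charged to the $\br$-mass, which grows by $\tilde{O}(1/\phi)$ per deletion (each deleted edge of $G$ carries up to $1/(\psi\phi)$ embedding paths of the witness). Finally, for the directed case the paper needs the auxiliary $\bgamma$ vectors to keep the orientation of candidate sparse cuts consistent between the out- and in-versions of the flow subroutine; your proposal does not address this.
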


In the theorem above, the algorithm works against an adaptive adversary, i.e. the adversary can design the update sequence to $G$ on-the-go and based on the previous output. \Cref{thm:mainTheoremRand} can also be derandomized by replacing a randomized subroutine with a deterministic counterpart (as was presented in \cite{bernstein2020deterministic}). This comes however at the cost of increasing $\gamma$ slightly. Still, for some appropriate choice of $L_{max}$, the algorithm maintains a $(m^{o(1)}, \phi/m^{o(1)})$-expander decomposition in time $m^{1+o(1)}/\phi^2$. If vertex splits are disallowed from the update sequence, then the runtime can be improved to $m^{1+o(1)}/\phi + m^{o(1)} t/\phi^2$ where $t$ is the number of updates.

We point out that this matches previous state-of-the-art algorithms \cite{henzinger2020local, wulff2017fully, nanongkai2017dynamic1, nanongkai2017dynamic,saranurak2019expander, chuzhoy2020deterministic, bernstein2020deterministic} to maintain $\phi$-expander decompositions up to a subpolynomial factor in quality and run-time in every setting (i.e. even for the special case of allowing randomization and considering only undirected, simple graphs undergoing only edge deletions).

Interestingly, $\phi$-expander hierarchies as introduced in \cite{goranci2021expander} can also be maintained straight-forwardly using the Theorem above (see Application \#2 in \Cref{subsec:applications}).

\subsection{Techniques}\label{subsec:techiques}

We now give an overview of our techniques. To simplify matters, we present our new algorithm only for \emph{directed} graphs $G$ undergoing edge deletions.

\paragraph{High-level Approach.} The key ingredient to our algorithm is the maintenance of a witness graph $W$ for each expander graph $G$. Intuitively, $W$ is a graph that is easier to work with and that can be used as an explicit certificate that $G$ is an expander.

When $G$ undergoes a set of edge deletions $D$, it turns out that we can leverage our knowledge of $W$ to detect potential sparse cuts in $G \setminus D$. Moreover, setting up flow problems carefully, we can then check if one of the potential sparse cuts is indeed a real sparse cut. If so, we return the sparse cut. Otherwise, we can find a new witness graph $W'$. 

In contrast to our algorithm, previous approaches to expander maintenance did not use witnesses, but rather tried to locate sparse cuts in $G$ directly.
This however came at the loss of not being able to locate the real sparse cuts but rather previous algorithms could only identify a subgraph $(G \setminus D)[X]$ that is still expander (for $X$ being rather large) but could not make more fine-grained statements.

In the next paragraphs, we define what a witness graph is, then explain how to maintain witnesses of $\phi$-expanders that are affected by a large number of deletions and finally sketch how to use such witness maintenance to achieve \Cref{thm:mainTheoremRand}.

\paragraph{Expanders via Witness Graphs.} It is well-known in the literature that given a $\phi$-expander $G$, one can find a $\psi$-expander $W$ over the same vertex set as $G$ such that $\psi = \Omega(1/\log^2(m))$ and degree vectors $\mathbf{deg}_{W} \approx \bDeg$, along with a routing $\Pi_{W \mapsto G}$ such that for each edge $e = (u,v) \in E(W)$, $\Pi_{W \mapsto G}(e)$ maps to a $u$ to $v$ path in $G$; with the additional property that $\Pi_{W \mapsto G}$ has \emph{congestion} at most $\frac{1}{\phi\psi}$ meaning that no edge in $G$ appears on more than $\frac{1}{\phi\psi}$ such paths. In fact, for $G$ being $\phi$-expander, the algorithms in \cite{khandekar2009graph, louis2010cut} compute such a witness $W$ and routing $\Pi_{W \mapsto G}$ in time $\tilde{O}(m/\phi)$, w.h.p. even in directed graphs. We point out that $W$ is also a directed graph.

Given such a graph $W$ and routing $\Pi_{W \mapsto G}$, it is straight-forward to prove that $G$ must be a $\Omega(\phi\psi^2) = \tilde{\Omega}(\phi)$-expander (see \Cref{clm:conductanceForRNull}). Therefore $W$ is often called the \emph{witness graph}. 

\paragraph{Maintaining the Witness Graph of a $\phi$-Expander.} In our approach, we are maintaining a witness graph for each expander graph. The main ingredient towards maintaining the witness graph, is to handle a (large) batch of updates to the expander graph and recover a witness. We call the act of handling these deletions \emph{one-shot pruning}. We give the following Informal Theorem which is made formal in \Cref{sec:one-shotPruning}.

\begin{Informal Theorem}
\label{ref:informalTheoremSloppy}
Given a directed graph $G$, a $\psi$-expander witness $W$ (where $\psi$ as above) over the same vertex set with routing $\Pi_{W \mapsto G}$ of congestion $\frac{1}{\phi\psi}$, a set of edges $D \subseteq E$ with $|D| \ll \phi|E|$, $\deg_{G \setminus D}(v) \approx \deg_{G}(v)$ for all $v \in V$. 

Then there is an algorithm $\textsc{PruneOrRepair}$ that either
\begin{itemize}
    \item returns a $\tilde{\Omega}(\phi)$-sparse cut $(S, V \setminus S)$ in $G \setminus D$, or
    \item returns a new  $\Omega(\psi^3)$-expander $W'$ and embedding $\Pi_{W' \mapsto G \setminus D}$ with congestion $\tilde{O}(\frac{1}{\phi})$ (and therefore certifies that $G\setminus D$ is still $\tilde{\Omega}(\phi)$-expander).
\end{itemize}
The algorithm runs in time $\tilde{O}(|D|/\phi^2)$.
\end{Informal Theorem}

Here, the rather strange-looking assumption that $\deg_{G \setminus D}(v) \approx \deg_{G}(v)$ is purely to simplify the presentation below and can be removed entirely.

Our approach to maintaining the witness is straightforward: when an edge $e_G$ is added to $D$ (and hence deleted in $G\setminus D$), we remove each edge $e_W$ of $W$ that was routed through the deleted edge $e_G$ in the embedding $\Pi_{W \mapsto G}$. To repair the witness, we will attempt to add new edges in $W$,
leaving the endpoints of each deleted edge $e_W$. The heads (starting point) of these new edges will be at the endpoints of edges removed from $W$, but the tails (endpoints) may be at different nodes.  We call the repaired witness $W'$.
For technical reasons, we attempt to add a few more edges to the new witness $W'$ than we deleted from the old witness $W$.
However, before adding these edges, we first want to make sure we can embed them into the updated graph $G$ without too much additional congestion. 
To certify that the new edges of $W'$ are embeddable into with little congestion, we introduce a flow problem whose solution will either let us embed these new witness edges into $G$, or find a sparse cut in $G$.

Importantly, we will be able to use a local algorithm to solve the flow problem on $G$, i.e. we do not need to explore the entire graph, but can instead run an algorithm that only visits a small part of $G$ in the neighborhood of $D$. This is essential to establishing our running time.

\paragraph{Cuts or Witness Maintenance via Flow.}
We  set up a flow problem that lets us implement the witness repair or one-shot pruning described above. 
The flow problem asks us to route flow in the graph $G$.
The flow demands we seek to route are guided by the deletions to $G$, 
and chosen to help us add edges to repair our witness $W$ whenever witness edges embedded into $G$ have been impacted by a deletion in $G$.

In the following paragraphs, we set parameters to match, up to polylogarithmic factors, the parameters in the rest of the article but often simplify by omitting constants since we are relying on assumptions that are not properly quantified in the overview (for example that $\deg_{G \setminus D}(v) \approx \deg_{G}(v)$). We do so to keep the overview intuitive and to avoid overly technical details.

Consider the following flow algorithm on the graph $G \setminus D$. Let $\bDelta \in \mathbb{N}^{V}$ be the amount of flow that has to be routed away from vertices in $V$ (i.e. the source vector). Initially, we set $\bDelta$ to be the all-zero vector. Then for each $e = (u,v) \in D$, we find the edges $e' = (x,y) \in \Pi^{-1}_{W \mapsto G}(D)$, i.e. the edges $e'$ such that $e \in \Pi_{W \mapsto G}(e')$, and place $8/\psi$ units of demand at both vertices $x$ and $y$. The figure below illustrates such a case where in the left graph, the embedding path $\Pi_{W \mapsto G}(e')$ is drawn and can be seen to use the edge $(u,v) = e \in D$.

\begin{figure}[h]
\centering
\includegraphics[width=0.7\textwidth]{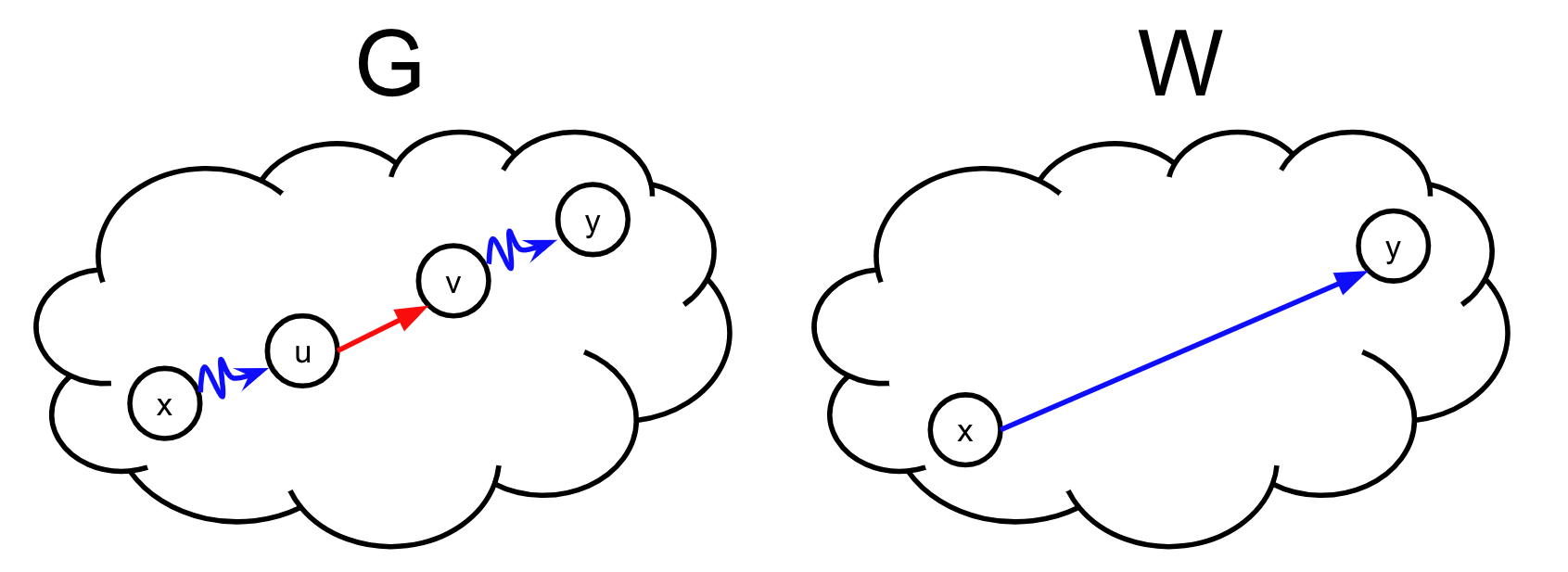}
\end{figure}

We then set-up a sink vector $\bT \in \mathbb{N}^V$ that we set equal to the degree vector $\mathbf{deg}_{G \setminus D}$ of the graph $G \setminus D$. Finally, we define a capacity vector $\bc = \frac{1}{\psi^2\phi} \cdot \bOnes \in \mathbb{R}^{E \setminus D}$ and then try to find a flow $\bflow \in \mathbb{N}^{E \setminus D}$ that sends the maximum amount of source flow to the sinks while respecting the capacities. This can be done using a max-flow algorithm (we use a modification of the blocking flow algorithm which provides similar guarantees as used below). 
We use some basic combinatorial properties of the blocking flow algorithm and our flow problem to ensure the algorithm runs locally, visiting only a small neighborhood around $D$.
We point out that by the assumption $|D| \ll \phi|E|$, we make sure that the flow problem is a diffusion problem, i.e. that $\|\bT\|_1 \geq \|\bDelta\|_1$.

\paragraph{Finding a Sparse Cut (If Source Flow is not Routed).} If $\bflow$ cannot route all the flow away from the sources, or more formally, if there is a vertex $v$ with $[\BB^\top \bflow + \bDelta](v) > \bT(v)$ where $\BB$ is the incident matrix of $G \setminus D$, then we claim the algorithm can extract a $\tilde{O}(\phi)$-sparse cut. 

To see this, let $(S, \overline{S})$ be the min-cut in the flow network. By the max-flow min-cut theorem, we have that the total capacity of edges from $S$ to $\overline{S}$ must be smaller than the total source demand $\bDelta$ on $S$:
\[
    c(E_{G \setminus D}(S, \overline{S})) < \bDelta(S).
\]
By our choice of capacities, this immediately gives that:
\[
    |E_{G \setminus D}(S, \overline{S})| < \psi^2\phi \bDelta(S) = \tilde{O}(\phi \bDelta(S)).
\]
Thus, if we can show that $\bDelta(S) \leq \tilde{O}(\vol_G(S))$, then we can conclude that $(S, \overline{S})$ is indeed a $\tilde{O}(\phi)$-sparse cut (here we implicitly assumed $\vol_G(S) \leq \vol_G(\overline{S})$).

To this end, we recall that $\deg_{G \setminus D}(v) \approx \deg_G(v) \approx \deg_W(v)$ for all vertices $v \in V$. But note that the way we constructed $\bDelta(v)$ is by placing $8/\psi = \tilde{O}(1)$ units on $v$ for each edge incident to $v$ in $W$ that was removed in our procedure. But since $\deg_W(v) \approx \deg_G(v)$, we thus get our desired bound.

\paragraph{Repairing the Witness (If Source Flow is Routed).} If $\BB^\top \bflow + \bDelta \leq \bT$, then the algorithm can use $\bflow$ to \emph{repair} the witness $W$ to obtain a new witness $W'$. Therefore, it initializes $W' = W \setminus \Pi^{-1}(D)$. Then, it runs a path-decomposition algorithm on $\bflow$ and for each $x$ to $y$ path in the decomposition, we add a new edge $(x,y)$ to $W'$. 

Note that this also induces a natural routing $\Pi_{W' \mapsto G}$ by routing along the underlying flow path for each new edge in $W' \setminus W$. It is further not hard to observe that the congestion of $\Pi_{W' \mapsto G}$ is at most the congestion of $\Pi_{W \mapsto G}$ plus an additive term of $\frac{1}{\psi^2\phi}$ which stems from the capacity in the flow problem which upper bounds the number of flow paths routed through the edge.

To verify that $\mathbf{deg}_{G \setminus D} \approx \mathbf{deg}_{W'}$, we can simply use our assumption that $\mathbf{deg}_{G \setminus D} \approx \mathbf{deg}_{G}$ and the fact that for each edge incident to vertex $v$ in $W$ that was in $\Pi^{-1}_{W \mapsto G}(D)$, we place $\tilde{\Theta}(1)$ units of source flow which then translates to new edges with $v$ as its tail (since we can route $\bflow$) while on the other hand, by setting $\bT = \mathbf{deg}_{G \setminus D}$, we ensure that there are at most $\mathbf{deg}_{G \setminus D}(v)$ new edges with head in $v$ in $W'$.

Finally, we prove that for each cut $(S, \overline{S})$ where 
$\vol_{W'}(S) \leq \vol_{W'}(\overline{S})$, we have $|E_{W'}(S, \overline{S})| = \Omega(\frac{1}{\psi^3}) \vol_{W'}(S)$. We point out that this will only show that $W'$ is a $\Omega(\frac{1}{\psi})$-\emph{out}-expander instead of showing that it is an expander. However, by applying the same algorithm to the graphs $G$ and $W'$ with edges reversed, we can recover and show that either a sparse cuts from this procedure is found or a graph $W''$ is found that is both out- and in-expander and therefore expander.

We prove the claim on the expansion of $(S, \overline{S})$ by a simple case analysis (see \Cref{fig:proofIllus} for an illustration of this proof):
\begin{itemize}
    \item If at least half the edge from $E_W(S, \overline{S})$ are also in $E_{W'}(S, \overline{S})$: then the claim follows immediately as this implies
    \[
    |E_{W'}(S, \overline{S})| \geq \frac{1}{2}|E_{W}(S, \overline{S})| \geq \frac{\psi}{2}\vol_W(S) \approx \frac{\psi}{2}\vol_{W'}(S).
    \]
    
    \item Otherwise: then it is not hard to verify that each of the edges that were removed from $W$ from the cut $E_W(S, \overline{S})$ adds $8/\psi$ units of source demand on a vertex in $S$, and therefore $\bDelta(S) \geq 4|E_W(S, \overline{S})|/\psi$. 
    
    We can now use that $|E_W(S, \overline{S})|/\psi \geq \vol_W(S) \approx \vol_{G}(S) \approx \vol_{G \setminus D}(S)$. Thus, we can upper bound the amount of flow that $S$ can absorb by $\bT(S) = \vol_{G \setminus D}(S) \lesssim |E_W(S, \overline{S})|/\psi$. Since the flow $\bflow$ was routed, that means that at least $|E_W(S, \overline{S})|/\psi \approx \vol_{W'}(S)$ units of source demand on $S$ were routed to vertices in $\overline{S}$ and subsequently each such unit of flow added one edge $(x,y)$ where $x \in S, y \in \overline{S}$ to $W'$. 
    
    Thus, we have that $|E_{W'}(S, \overline{S})| \gtrsim \psi \vol_{W'}(S)$.
\end{itemize}
  \begin{figure}[h]
    \centering
    \includegraphics[width=0.9\textwidth]{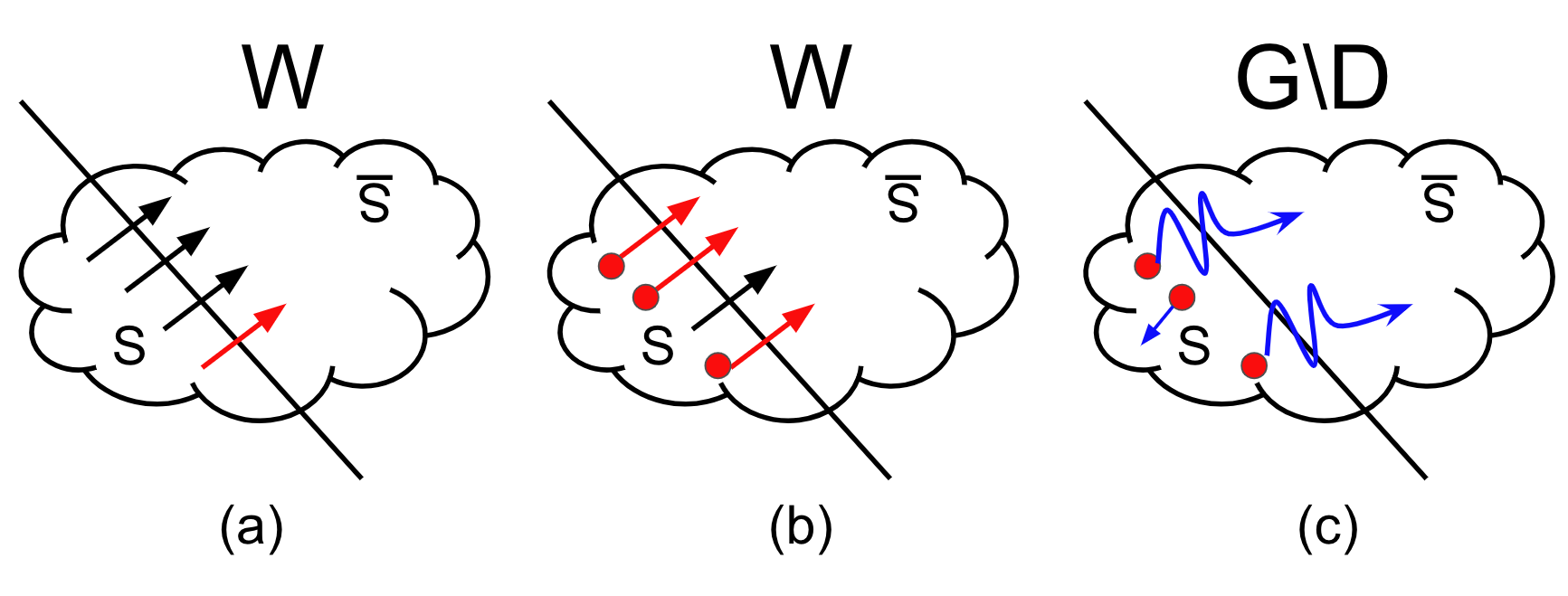}
    \caption{(a) and (b) show the old witness graph $W$ and the edges in $W$ crossing the cut from $S$ to $\overline{S}$. The red edges are the edges in $W$ that do not appear in $W'$. (a) corresponds to the first case in our proof and shows that if most edges survive, the expansion of $W'$ is still sufficient. In the other case, depicted in (b), each edge in $W \setminus W'$ contributes some demand on $S$ (depicted by the red points). Then, in the graph $G \setminus D$ given in (c), the demand on $S$ has to be routed and since it exceeds the sink capacity of $S$, most source flow is routed to vertices in $\overline{S}$. Each such flow path from $S$ to $\overline{S}$ is then be converted into a new edge in $W'$ that crosses the cut.}
    \label{fig:proofIllus}
    \end{figure}
        
\paragraph{From One-Shot Pruning to Expander Decomposition Maintenance via Batching.} Finally, the reader might wonder how to obtain an algorithm to maintain directed expander decompositions from the above one-shot pruning algorithm. At a high level, our algorithm maintains an expander decomposition $\mathcal{X} = \{X_1, X_2, \dots, X_{\tau}\}$ for graph $G$ by invoking \emph{one-shot pruning} upon batches of updates. This batching technique was developed in \cite{wulff2017fully, nanongkai2017dynamic1, nanongkai2017dynamic} and was derived from standard techniques in dynamic algorithms.

In order to make our one-shot pruning work efficiently in this setting, we first have to make it more resilient: a key problem with one-shot pruning in its current form is that it could return a very small sparse cut $(S, \overline{S})$ (i.e. one where $\vol_G(S) \gg \vol_G(\overline{S})$), then prompting us to recurse on almost the same problem again since we want to arrive at some $\overline{S}$ that is indeed expander again. Thus, we extend our one-shot pruning algorithm to always either output a \emph{large} sparse cut or certify that there is no large sparse cut in the witness. The Informal Theorem below makes this more explicit. It is a parameterized (in $R'$) version of the Informal Theorem \ref{ref:informalTheoremSloppy} where changes are colored blue.

\begin{Informal Theorem}
\label{ref:inftheoremParameterized}
Given a directed graph $G$, a $\psi$-expander witness $W$ (where $\psi$ as above) over the same vertex set with routing $\Pi_{W \mapsto G}$ of congestion $\frac{1}{\phi\psi}$, a set of edges $D \subseteq E$ with $|D| \ll \phi|E|$, $\deg_{G \setminus D}(v) \approx \deg_{G}(v)$ for all $v \in V$ {\color{blue} and a parameter $R' \geq 0$}. 

Then there is an algorithm $\textsc{PruneOrRepair}$ that either
\begin{itemize}
    \item returns a $\tilde{\Omega}(\phi)$-sparse cut $(S, V \setminus S)$ in $G \setminus D$ with {\color{blue}$\min\{\vol_{G \setminus D}(S), \vol_{G \setminus D}(V \setminus S)\} \geq R'$}, or
    \item returns {\color{blue}a new set $D' \subseteq D$}, a $\Omega(\psi^3)$-expander $W'$ and embedding {\color{blue}$\Pi_{W' \mapsto G \setminus (D \setminus D')}$} with congestion $\tilde{O}(\frac{1}{\phi})$ (and therefore certifies that {\color{blue}$G \setminus (D \setminus D')$} is still $\tilde{\Omega}(\phi)$-expander) {\color{blue}such that $|D'| \leq R'$}. 
\end{itemize}
The algorithm runs in time $\tilde{O}(|D|/\phi^2)$.
\end{Informal Theorem}

Using this refined version of the Informal Theorem, we can now implement the approach laid out above efficiently by recursing on a witness with no large sparse cuts where the threshold for being ``large'' scales in the depth of the recursion level. In our article, we start the recursion at some large level $L_{max} + 1$ for some appropriately chosen value $L_{max} = O(\sqrt{\log\log(m)})$ and go down in level with each level of recursion until we reach level $0$.

\paragraph{Formal Set-Up of the Batch-Update Framework.}
For the sake of concreteness, we now give a more formal description of the algorithm (without specifying all details yet as this is rather tedious and does not necessarily help intuition). Our algorithm maintains on each level $l = L_{max} +1, L_{max}, \ldots, 0$, for each $X \in \mathcal{X}$, a witness graph $W_{X, l}$. We say that a level $l$ is recomputed whenever $W_{X,l}$ is changed. We additionally maintain the sets $D_{X,l}$ of adversarial deletions since the last time level $l$ was recomputed, and sets $D'_{X, l}$ to capture adversarial deletions not handled during recomputation. 

Initially, we use a static routine to compute a $\phi_0$-expander decomposition and set all witnesses $W_{X,l}$ to the corresponding witness, and set all sets $D_{X,l}$ and $D'_{X,l}$ to be empty. Throughout, we maintain the invariant that each graph $W_{X,l}$ is a witness that $G[X] \cup D_{X,l} \cup D'_{X,l}$ is a $\phi_l$-expander where we choose $\phi \approx  \phi_{L_{max} + 1} \gtrsim  \phi_{L_{max}} \gtrsim \ldots \gtrsim \phi_0 \approx \phi$ and that $|D_{X,l} \cup D'_{X,l}| < m^{l/L_{max}}$. Note that this implies that $W_{X,0}$ proves that $G[X]$ is a $\phi_0 \approx \phi$-expander as desired. So, if the invariant holds after each update, we indeed correctly maintain an expander decomposition.

Let us now describe how we process an adversarial edge deletion. For each deletion to $G[X]$, we add the deleted edge to all sets $D_{X, L_{max}+1}, D_{X, L_{max}}, \ldots, D_{X, 0}$. We then search for the largest $l$ where our invariant on the size of $D_{X, l} \cup D'_{X, l}$ is violated. Whenever this is the case, we invoke our Informal Theorem 
\ref{ref:inftheoremParameterized} on $W_{X, l-1}$ with the set of edges $D_{X, l} \cup D'_{X, l-1}$ and parameter $R' = \frac{1}{2}m^{l/L_{max}}$. 

Let us first consider the case that the algorithm never returns a sparse cut. Then, we have that Informal Theorem \ref{ref:inftheoremParameterized} returns a set $D'$ and a new witness $W'$. We set $W_{X,l}$ to $W'$, set $D_{X,l} = \emptyset$ and $D'_{X,l} = D'$. Then, we recompute in the same way for all lower levels $l-1, l-2, \ldots, 0$ where the invariant is violated. Note that since the witness $W'$ is slightly worse in quality than the original witness $W$, we need to choose $\phi_{l}$ to be slightly worse than $\phi_{l-1}$. Using the approach above, we have that for a level $l$, we only need to recompute the witness roughly every $\frac{1}{2}m^{l/L_{max}}$ adversarial deletions. That is since after each recomputation, the set $D'_{X,l}$ is far from violating the size invariant and $D_{X,l}$ is empty and thus, it takes many adversarial updates (or a higher level update which happens very, very few times) until it violates the invariant again.

On the other hand, when we find a sparse cut with smaller side of volume $\tilde{\Omega}(m^{l/L_{max}})$ as promised by the Informal Theorem for our choice of $R'$ at level $l$, we make a lot of progress. In fact, it is not hard to see that we only pay $\tilde{O}(m^{(l+1)/L_{max}}/\phi^2)$ time to find such a sparse cut by using the fact that the invariant holds for level $l+1$.

We refrain from formalizing this approach even further here and refer the interested reader to 
\Cref{sec:maintainBatching}.

\paragraph{Dealing with Vertex Splits.} In the rest of this paper, because we deal with vertex splits (and edge insertions), maintaining the sets $D_{X,l}$ and $D'_{X,l}$ would not capture all update types and we would need additional sets to capture vertex splits and edge insertions. Also Informal Theorem \ref{ref:inftheoremParameterized} would be difficult to state in a clean way. We therefore introduce $\br_{X,l}$ vectors as a handy representation to unify update types where $\br_{X,l}$ lives in $\mathbb{N}_{\geq 0}^{X}$. 

To understand how we use the $\br_{X,l}$ vectors, let us first describe what we do in case of a deletion. Observe that the crucial piece of information about the deletion of an edge $(u,v)$ in the process of repairing the witness (or finding a cut) is to find the edges $(x, y) \in W_{X,l}$ that use $(u,v)$, and then set up a flow problem adding flow to endpoints $x$ and $y$. 

We suggest to store that information directly by adding one unit to each vertex $x$ and $y$ for each such edge $(x, y)$ to the corresponding vertices in $\br_{X,l}$. Thus, we increase the $\ell_1$ sum of $\br_{X,l}$ by $\tilde{O}(1/\phi)$ for each edge deletion. We deal with edge insertions of an edge $(u,v)$ by simply adding one unit to $\br_{X,l}$ in the components $u$ and $v$. Finally, when splitting a vertex $v$ into $v'$ and $v''$ (where $v''$ has smaller volume), then for each edge $(x,y)$ that embeds into an edge that is now incident to $v''$, we add a unit to $\br_{X,l}$ to the vertices $x$ and $y$. The increase in the $\ell_1$ sum of $\br_{X,l}$ is only increased by $\tilde{O}(\deg_G(v'')/\phi)$ by this operation. This processing of updates to update $\br_{X,l}$ vectors can be directly seen in \Cref{alg:updateAlgo}.

Using this representation, we can write clean statements and unify proofs about various update types.

\paragraph{A Subtle Issue in Directed Graphs.} Finally, when turning to directed graphs, we want to make the reader aware of a rather subtle issues that makes proofs rather finicky. To overcome the issue we introduce $\bgamma_{X,l}$ vectors for each $X \in \mathcal{X}$ and level $l$.
The need for these vectors essentially arises from the following detail: recall that we define a cut $(S, \overline{S})$ to be $\phi$-out-sparse iff $\vol_G(S) \leq \vol_G(\overline{S})$ and $|E_G(S, \overline{S})| < \phi \cdot \vol_G(S)$. But in directed graphs, it turns out to be more useful to detect cuts $(S, \overline{S})$ that are sparse relative to a set of weights that differ slightly from the original vertex degrees that are used to define $\vol_G(S)$.
We specify these weights using a vector $\bgamma_{X,l}$, and denote the weight of a set $S$ by $\bgamma_{X,l}(S)$ (the sum of the weights of vertices in $S$).
We then look for generalized sparse cuts where $\bgamma_{X,l}(S) \leq \bgamma_{X,l}(\overline{S})$ and $|E_G(S, \overline{S})| < \phi \cdot \vol_G(S)$.
This is because after some deletions $D$, we might have that $\vol_G(S) \leq \vol_G(\overline{S})$ but $\vol_{G \setminus D}(S) > \vol_{G \setminus D}(\overline{S})$. Thus, we would have to show that $|E_G(S, \overline{S})| < \phi \cdot \vol_G(\overline{S})$. While the above asymmetry does not cause any problems in the undirected graph problem, it causes problems when we move to directed graphs. This is because we run one version of the algorithm in Informal Theorem \ref{ref:inftheoremParameterized} to find $\phi$-out-sparse cuts and another to find $\phi$-in-sparse cuts. Unfortunately, the above batching can cause the algorithms to be invoked on slightly different sets $D'_1$ and $D'_2$ of previously deleted edges from higher levels. Then, the issue above can mean that some $\phi$-out-sparse or $\phi$-in-sparse cuts are not detected properly. By using the $\bgamma_{X,l}$ vectors we can keep cuts fixed in direction. For the vectors $\bgamma_{X,l}$, we can use the original degree vectors in $G[X]$ which allows us to roughly recover the real $\phi$-out and $\phi$-in-sparse cuts as long as the total amount of volume in $G[X]$ is not changed by a constant fraction due to the deletions $D$. This generalizes seamlessly to insertions and vertex splits.

\subsection{Applications}\label{subsec:applications}

\paragraph{Application \#1: A Simple Proof of Decremental Strongly-Connected Components.} In the decremental strongly-connected components problem, the algorithm is given a decremental $m$-edge graph $G$, that is a graph that undergoes only edge deletions. The goal is to maintain the strongly-connected components (SCCs) in the graph $G$ explicitly over the entire update sequence. 

The currently best deterministic algorithm for this problem \cite{bernstein2020deterministic} obtains total update time $mn^{2/3 + o(1)}$. Here, we give an extremely simple algorithm that achieves $m^{1+2/3 + o(1)}$ total update time which matches the previous result for very sparse graphs. We note if randomization is allowed to solve the above problem, then a $\tilde{O}(m)$ algorithm is known \cite{bernstein2021decremental}. 

We first introduce the following proposition which was already used by previous algorithms for the problem (see \cite{chechik2016decremental,bernstein2020deterministic}). We point out that this proposition is obtained by a very simple and elegant algorithm itself and we encourage the interested reader to consult \cite{lkacki2013improved}.

\begin{theorem}[see \cite{lkacki2013improved}]
\label{thm:lkacki2013improved}
Consider an algorithm that for a decremental graph $G$ maintains a set $S \subseteq E(G)$ that is a super-set of its earlier versions at any time and maintains the SCCs in $G \setminus S$. Then, there is an algorithm that maintains the SCCs of $G$ in additional total time $\tilde{O}(m|S|)$. 
\end{theorem}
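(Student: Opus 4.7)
The plan is to reduce maintaining $\mathcal{C}(G)$ (the SCCs of $G$) to maintaining SCCs in an auxiliary quotient graph $H$ whose cyclic structure is driven by $S$. I would build $H$ with one super-vertex per SCC of $G\setminus S$ and two edge kinds: (i) the cross-SCC edges of $G\setminus S$ contracted to super-vertex endpoints (these form a DAG, being the condensation of $G\setminus S$), and (ii) the edges of $S$ mapped to the super-vertices containing their endpoints. Since $G\setminus S\subseteq G$, every SCC of $G$ is a union of SCCs of $G\setminus S$, and a direct unwinding of the definitions yields $\mathcal{C}(G)=\{\bigcup_{A\in\mathcal{A}} A \,:\, \mathcal{A}\in\mathcal{C}(H)\}$, so maintaining $\mathcal{C}(H)$ is enough. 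The structural fact that drives the running time is that type-(i) edges form a DAG in isolation, so every non-trivial cycle in $H$ uses at least one edge of $S$.

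Next I would describe the events that modify $H$: (a) the black-box reports a split of an SCC of $G\setminus S$, replacing a super-vertex of $H$ by the new pieces with incident edges redistributed; (b) the black-box reports an addition to $S$, which inserts a new edge into $H$; (c) an edge deletion in $G\setminus S$ that does not split its SCC may remove a cross-super-vertex edge from $H$. These signals are precisely what the black-box already emits, so the driver of the new algorithm is purely event-based.

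To bound the additional running time by $\tilde{O}(m|S|)$, I would account for it in two pieces. For events of type (b), every addition to $S$ triggers a fresh Tarjan SCC computation on the current $H$ in time $O(|V(H)|+|E(H)|)=O(m)$; since by the theorem's monotonicity $S$ receives at most $|S|$ additions over the entire update sequence, this contributes $O(m|S|)$ in total. For events of types (a) and (c), between consecutive additions the $S$-edges are fixed and $H$ only shrinks (super-vertex splits and cross-super-vertex deletions); here I would maintain $\mathcal{C}(H)$ via the classical decremental-SCC small-side search, exploring the affected SCC from the event's endpoints in parallel from both sides and stopping as soon as the smaller side finishes. Charging each vertex $O(\log n)$ to the small side per between-addition period gives $\tilde{O}(m|S|)$ across all $|S|$ periods.

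The main obstacle is handling events of type (a) with small-side charging, because they are super-vertex splits rather than plain edge deletions: when an SCC of $G\setminus S$ splits, one must redistribute the edges incident to the split super-vertex of $H$ in time proportional to the smaller resulting piece rather than to the whole. A standard trick is to enumerate the new pieces in parallel, halting once all but the largest have been fully enumerated; this requires that the black-box's representation of $\mathcal{C}(G\setminus S)$ supports such parallel exploration, which is a mild strengthening of its interface. Once this is arranged, combining the two bounds yields the $\tilde{O}(m|S|)$ additional total time claimed.
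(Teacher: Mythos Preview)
The paper does not prove this statement; it cites \L{}\k{a}cki and directs the reader there, so the comparison is to that algorithm.

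Your reduction to the quotient graph $H$ (super-vertices $=$ SCCs of $G\setminus S$; condensation edges of $G\setminus S$ forming a DAG; plus the $|S|$ edges of $S$) is exactly the right set-up and matches how the cited works use the result. The gap is in the between-additions phase, and it is not the one you flag in your last paragraph. You claim decremental updates to $H$ are absorbed by a ``classical decremental-SCC small-side search,'' charging $O(\log n)$ per vertex to the smaller side of each split for $\tilde O(m)$ per period. That charging only pays for events in which an SCC of $H$ actually splits. When an edge $(u,v)$ leaves $H$ but $u$ and $v$ remain strongly connected, any parallel forward/backward exploration from the endpoints must traverse essentially the entire SCC before it can certify that no split occurred, at cost $\Theta(|\mathrm{SCC}|)$ with nothing to charge it to; over $\Theta(m)$ such non-splitting events in a single period this is $\Theta(m^2)$, not $\tilde O(m)$. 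This is exactly the well-known separation between undirected decremental connectivity (where the small-side trick does give $\tilde O(m)$) and directed decremental SCCs (where it does not).

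\L{}\k{a}cki's mechanism is different and does not use small-side search. One fixes a center $v$, maintains forward and backward \emph{reachability} from $v$ via Italiano-style in-degree counters---each edge is inspected $O(1)$ times at that level across the whole decremental sequence, so a level costs $O(m)$ in total---and recurses on the pieces outside $\mathrm{SCC}(v)$. The running time is then $O(m)$ times the recursion depth. Your structural observation that the type-(i) edges form a DAG is precisely what bounds that depth: taking $S$-edge endpoints as centers, after $|S|$ levels every remaining piece is acyclic, giving depth $O(|S|)$ and total time $O(m|S|)$. With this in place your recompute-on-addition step is superfluous; a new $S$-edge just opens one more recursion level.
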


But this means for the graph $G$, we can pick $\phi = m^{-1/3}$ and maintain an $(m^{o(1)}, m^{-1/3 - o(1)})$-expander decomposition on $G$. We let $S = R$ and observe that the fact that expanders have no sparse cuts implies that $\mathcal{X}$ are exactly the connected components of $G \setminus S$. Further $|S| \leq m^{2/3+o(1)}$. The result follows. 

Our dynamic expander decomposition always produces refinements over time, and this exactly matches the requirements of \Cref{thm:lkacki2013improved}, leading to an extremely simple algorithm.
In contrast, because they lacked this guarantee,  \cite{bernstein2020deterministic} needed a much more elaborate approach (spanning over 80 pages):
they maintained for each SCC in $G\setminus S$ an almost-expander that was slowly decaying in size until it had to be reset. To then certify that a single SCC does not break into two pieces, the algorithm has to root decremental single-source shortest path data structures from the contracted set of vertices still in the almost-expanders. New vertices in $S$ were created when distances in an SCC became large which forced sparse cuts. The many moving parts above made this data structure complicated to describe, analyze and implement. 

We believe that the technique of congestion balancing from \cite{bernstein2020deterministic} can be applied to our framework rather straight-forwardly, yielding a slightly more complicated algorithm, but still vastly simpler algorithm, with total update time $mn^{2/3+o(1)}$.

\paragraph{Application \#2: A Simple Proof of the Expander Hierarchy.} We start by defining an expander hierarchy which were introduced in  \cite{goranci2021expander}. We point out that currently there is no sensible generalization of expander hierarchies to directed graphs, thus we let all graphs $G$ be undirected in this section.

We define $G[X]^{\tau}$ to be the graph $G$ induced by vertex set $X$ where each vertex $v \in X$ receives an additional number of $\lceil \tau \cdot |E(\{v\}, V \setminus X)| \rceil$ self-loops. For a partition $\mathcal{X}$, we denote by $G[\mathcal{X}]^{\tau}$ the union of the graphs $G[X]^{\tau}$ for $X \in \mathcal{X}$. This allows us to define Boundary-Linked expander decompositions.

\begin{definition*}[Undirected Boundary-Linked Expander Decomposition]
  Given an \emph{undirected} graph $G$ and parameters $\phi,\alpha \in
  (0,1], \beta > 0$,
  we say that a tuple $(\mathcal{X}, R)$ forms an $(\alpha,\beta,
  \phi)$-expander decomposition
  of $G$ where $\mathcal{X}$ is a partition of $V$ and $R \subseteq E$ if 
(1) for each $X \in \mathcal{X}$, cluster $G[X]^{\alpha/\phi}$ is a $\phi$-expander,
and (2) $R$ is the set of edges not in any cluster and is of size at most $\beta \phi m$.
\end{definition*}

We can derive the following Corollary from \Cref{thm:mainTheoremRand}, our main result.

\begin{theorem}[Undirected Boundary-Linked Dynamic Expander
  Decomposition]
  \label{thm:bndryLnkdDynED}
  Consider an $m$-edge undirected graph $G$ undergoing a sequence of $O(m)$ updates consisting of edge deletions, vertex splits and self-loop insertions, parameters $\phi \in (0,1)$ and $L_{max} \in \mathbb{N}_{\geq 1}$. 

Then, we can maintain a $(1/\gamma,\gamma, \phi/\gamma)$-expander decomposition
$\mathcal{X}$ for $\gamma = (\log(m))^{5^{O(L_{max})}}$ with the
properties that at any stage (1) the current partition $\mathcal{X}$ is
a refinement of all its earlier versions, and (2) the set $R$ is a
super-set of all its earlier versions. The algorithm takes total time
$m^{1+1/L_{max}}\gamma/\phi^2$ assuming that at most $\tilde{O}(m)$ self-loops are inserted over the course of the algorithm and succeeds with high probability.
\end{theorem}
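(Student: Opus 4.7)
The plan is to reduce Theorem~\ref{thm:bndryLnkdDynED} to Theorem~\ref{thm:mainTheoremRand} by invoking the latter as a black box while intercepting every edge added to $E_{Rest}$: whenever the underlying algorithm moves an edge $(u,v)$ into $E_{Rest}$, we schedule $\lceil 1/\phi\rceil$ additional self-loop insertions at $u$ and at $v$. Adversarial deletions, vertex splits, and self-loop insertions to $G$ are forwarded unchanged, and self-loops at a splitting vertex are distributed to the side that retains the boundary edge that spawned them. Thus the simulated graph $G^+$ fed to the black box always contains at each vertex $v$ of each current cluster $X$ at least as many self-loops as the definition of $G[X]^{1/\phi}$ requires.

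\paragraph{Correctness of boundary-linking.}
By induction on the update sequence, for every current cluster $X$ and vertex $v \in X$, the self-loops the interceptor has installed at $v$ number at least $|E_G(\{v\}, V\setminus X)|\cdot\lceil 1/\phi\rceil$: every current boundary edge at $v$ was added to $E_{Rest}$ at the moment $v$'s cluster was separated from the cluster of its other endpoint, contributing $\lceil 1/\phi\rceil$ self-loops at $v$, and these self-loops persist through subsequent refinements and follow the spawning boundary edges on splits. Since $k\lceil 1/\phi\rceil$ is an integer at least $k/\phi$ it dominates $\lceil k/\phi\rceil$, so the volume $\vol_+$ in $G^+$ pointwise dominates the volume $\vol_\star$ in $G[X]^{1/\phi} = G[X]^{\alpha/(\phi/\gamma)}$ with $\alpha = 1/\gamma$. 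Choosing the black-box input $\phi''$ so that its output expansion $\phi''/\gamma_1$ equals $\phi/\gamma$ (with $\gamma_1 = (\log m)^{4^{O(L_{max})}}$ denoting the black-box quality), Theorem~\ref{thm:mainTheoremRand} gives for every cut $(S,\overline S)$ inside $X$ the bound $|E(S,\overline S)| \geq (\phi/\gamma)\min(\vol_+(S),\vol_+(\overline S)) \geq (\phi/\gamma)\min(\vol_\star(S),\vol_\star(\overline S))$, proving that $G[X]^{1/\phi}$ is a $(\phi/\gamma)$-expander.

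\paragraph{Feedback accounting and runtime.}
Let $M$ be the total number of edges plus self-loops ever fed to the black box. Theorem~\ref{thm:mainTheoremRand} guarantees $|E_{Rest}| \leq \gamma_1\phi'' M = \phi\gamma_1^2 M/\gamma$, so the self-loops the interceptor adds total $O(\gamma_1^2 M/\gamma)$. Choosing $\gamma$ to be a sufficiently high fixed power of $\gamma_1$ contracts the feedback, yielding $M = \tilde{O}(m)$ given the hypothesis that external self-loop insertions total $\tilde{O}(m)$. Then the bound $|E_{Rest}| \leq O(\phi m/\gamma_1) \leq \gamma\phi m$ is immediate, and the running time $\tilde{O}(M^{1+1/L_{max}}\gamma_1/\phi''^2) = \tilde{O}(m^{1+1/L_{max}}\gamma_1^{O(1)}/\phi^2)$ is absorbed into the stated $\tilde{O}(m^{1+1/L_{max}}\gamma/\phi^2)$ by choosing the constant hidden inside $5^{O(L_{max})}$ large enough that $(\log m)^{5^{O(L_{max})}}$ dominates any fixed power of $(\log m)^{4^{O(L_{max})}}$. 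Refinement of $\mathcal{X}$ and monotonicity of $R$ are inherited directly from Theorem~\ref{thm:mainTheoremRand}.

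\paragraph{Main obstacle.}
The central technical difficulty is the closed feedback loop in which $|E_{Rest}|$ drives new self-loops, which inflate $M$, which in turn permits $|E_{Rest}|$ to grow. The fix is to drive the black box with $\phi''$ noticeably smaller than $\phi$ by a factor $\gamma/\gamma_1$, where $\gamma$ is a high enough fixed power of $\gamma_1$; this simultaneously contracts the feedback and delivers the target expansion $\phi/\gamma$, at a $1/\phi''^2$ cost in the running time that is precisely what forces the jump from $4^{O(L_{max})}$ to $5^{O(L_{max})}$ in the quality exponent.
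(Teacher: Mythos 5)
Your proposal is correct and follows essentially the same route as the paper's proof sketch: simulate the main dynamic expander decomposition on an augmented copy of $G$, installing $\Theta(1/\phi)$ regularizing self-loops at each endpoint of every edge that enters $R$, so that each cluster $X$ of the simulated graph dominates $G[X]^{1/\phi}$ in volume, and then bound the resulting feedback loop. Your treatment of that feedback is in fact more careful than the paper's (which informally asserts ``$2\phi m\cdot\frac{1}{\phi}=2m$ self-loops, within budget''): by driving the black box at $\phi''=\phi\gamma_1/\gamma$ with $\gamma$ a high power of $\gamma_1$ you make the feedback genuinely contracting, which is exactly the slack the jump from $4^{O(L_{max})}$ to $5^{O(L_{max})}$ in the quality exponent is meant to absorb.
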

\begin{proof}[Proof sketch.]
  We run the Dynamic Expander Decomposition of
  Theorem~\ref{thm:mainTheoremDet} on a copy of $G$ which we denote by $H$.
  When $G$ is changed by a dynamic update, we make the same change to
  $H$.
  But additionally, in $H$, we add self-loops to vertices, whenever an edge incident on the vertex
  enters $R$.
  We call these \emph{regularizing self-loops}.
  
  When a new edge $e$ enters $R$, for each of its endpoints $u$,
  let $R_u$ be the set of edges in $R$ incident of $u$, and let $s_u$
  be the number of regularizing self-loops placed on $u$ in so far.
  If $|R_u|\cdot \frac{1}{\phi}  > s_u$, we add additional
  regularizing self-loops to $u$ until this is no longer the case.
  Note that adding these self-loops may cause further changes to the
  partition, which may in turn cause additional edges to be added to
  $R$, and this may require us to add yet more regularizing
  self-loops. However, in a moment, we will argue that this process
  does not create too many cut edges.
  First, though, let us observe that for each $X \in \mathcal{X}$,
  the regularizing self-loops precisely ensures that $G[X]^{\frac{1}{\phi}} = H[X]$ is
  a $\frac{\phi}{\gamma}$-expander with boundary-linkedness
  parameter $\alpha = \frac{1}{\gamma}$.

  Finally, we need to argue that the addition of regularizing
  self-loops does not mean that we cut too many edges.
  The underlying guarantee of \Cref{thm:mainTheoremDet} ensures
  that starting with $m$ edges, after $O(m)$ updates, we cut at most
  $\phi m$ edges (i.e. put them into $R$).
  However, this implies that we add most
  $2\phi m \cdot \frac{1}{\phi}  = 2m$ regularizing self-loops.
  Thus, the addition of self-loops does not exceed our budget, and still
  leaves us able to receive further $O(m)$ self-loops updates to $G$,
  albeit with a slightly smaller budget.
\end{proof}

Next, we define expander hierarchies using terminology inspired by \cite{goranci2021expander} (but slightly adapted for convenience).

\begin{definition}[Undirected Dynamic Expander Hierarchy]
An $(\alpha, \beta, \phi)$-expander hierarchy is recursively
defined to consist of levels $0 \leq i \leq k$ where we have graphs
$G_i$ where $G_0 = G$ and an $(\alpha,\beta, \phi)$-expander decomposition
$\mathcal{X}_i$ of $G_i$ and we define $G_{i+1}$
recursively to be the graph $G$ after contracting the sets in the
expander decomposition $\mathcal{X}_{i+1}$ and removing self-loops, and finally have that
$G_k$ consists of only a single vertex.
\end{definition}

Finally, we can prove the main result of this section.

\begin{theorem}\label{thm:expanderHierarchy}
Consider an $m$-edge undirected graph $G$ undergoing a sequence of $\tilde{O}(m)$ updates consisting of edge deletions, vertex splits and edge insertions, parameters $\phi \in (0,1/4\gamma)$ and $L_{max} \in \mathbb{N}_{\geq 1}$. 
We can maintain a $(1/\gamma, 2\gamma, \phi/\gamma)$-expander hierarchy with $k =
  \log_{\frac{2\gamma}{\phi}}(m) + 1$ levels with total update time
  $O(m^{1+1/L_{max}}\gamma/\phi^2)$.
  The algorithm works against an adaptive adversary and succeeds with
  high probability.
\end{theorem}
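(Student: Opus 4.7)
The plan is to build the hierarchy bottom-up by repeated application of \Cref{thm:bndryLnkdDynED}. I would set $G_0 := G$ and, at each level $i$, run an instance of the boundary-linked dynamic decomposition on $G_i$ to maintain $(\mathcal{X}_i, R_i)$; then define $G_{i+1}$ as $G_i$ with the clusters of $\mathcal{X}_i$ contracted and self-loops discarded. The recursion terminates when $G_k$ is a single vertex.

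Propagating updates between levels is the core mechanic. External updates hit $G_0$ only; whenever the level-$i$ algorithm refines $\mathcal{X}_i$ by splitting a cluster $X$ into $X_1 \cup X_2$, I would translate this into a vertex split of the super-vertex representing $X$ in $G_{i+1}$, and for every edge of $E_{G_i}(X_1, X_2)$ that migrates into $R_i$ I would perform a corresponding edge insertion in $G_{i+1}$ between the two new super-vertices. Deletions of edges already in $R_i$ become deletions in $G_{i+1}$, and vertex splits / self-loop insertions at level $i$ propagate analogously. The monotone-superset property of $R_i$ and the refinement property of $\mathcal{X}_i$ (both guaranteed by \Cref{thm:bndryLnkdDynED}) make this translation well-defined, and ensure that each level is presented with an update sequence of the admissible type.

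For the level count, each $|R_i| \leq \gamma \cdot (\phi/\gamma) \cdot \tilde{m}_i = \phi \tilde{m}_i$ by \Cref{thm:bndryLnkdDynED}, where $\tilde{m}_i$ denotes the cumulative peak edge count at level $i$; so $\tilde{m}_{i+1}$ is bounded by $\phi \tilde{m}_i$ up to a constant-factor cumulative overhead absorbed by taking the hierarchy's $\beta$ to be $2\gamma$ rather than $\gamma$. The assumption $\phi < 1/(4\gamma)$ then forces a shrinkage per level that yields the stated bound $k = \log_{2\gamma/\phi}(m) + 1$. For the runtime, level $i$ costs $O(\tilde{m}_i^{1+1/L_{max}} \gamma/\phi^2)$ by \Cref{thm:bndryLnkdDynED}, and summing geometrically across levels is dominated by level $0$, giving the total $O(m^{1+1/L_{max}} \gamma/\phi^2)$. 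Adaptive-adversary robustness and the high-probability guarantee transfer directly from \Cref{thm:bndryLnkdDynED} applied independently at each level.

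The main obstacle I anticipate is handling the fact that the hierarchy input supports arbitrary edge insertions at $G_0$, while \Cref{thm:bndryLnkdDynED} admits only self-loop (and not arbitrary) insertions. A natural workaround is to route every externally-inserted edge directly into $R_0$, forcing a cluster split at level $0$ if its endpoints lie in a common cluster so that the insertion manifests as a vertex split plus an edge insertion at $G_1$ rather than a genuine edge insertion at $G_0$. Careful accounting is then required to verify that the cumulative update load at each level still fits into the $\tilde{O}(\tilde{m}_i)$ budget required by \Cref{thm:bndryLnkdDynED}, and this is exactly what motivates the doubling of $\beta$ from $\gamma$ to $2\gamma$ in the hierarchy statement.
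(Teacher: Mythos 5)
Your overall architecture (one instance of \Cref{thm:bndryLnkdDynED} per level, refinements propagated upward as vertex splits, geometric shrinkage of edge counts) matches the paper's proof. But there is a genuine gap in how new inter-cluster edges enter $G_{i+1}$. When a cluster $X$ at level $i$ splits into $X_1, X_2$, the edges of $E_{G_i}(X_1,X_2)$ that enter $R_i$ must appear as \emph{new} edges between the two super-vertices in $G_{i+1}$ (before the split they were internal to $X$, hence discarded self-loops). You propose to realize them as ``a corresponding edge insertion in $G_{i+1}$,'' but \Cref{thm:bndryLnkdDynED} only admits edge deletions, vertex splits, and \emph{self-loop} insertions — arbitrary edge insertions are not a supported update type, so the level-$(i{+}1)$ data structure cannot process them. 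The paper's fix is the missing idea: first insert the required number of self-loops on the super-vertex of $X$ (admissible), then perform the vertex split in such a way that those self-loops become the crossing edges between the two new super-vertices. Your proposed workaround for external insertions at $G_0$ — force a cluster split so that ``the insertion manifests as \ldots{} an edge insertion at $G_1$ rather than \ldots{} at $G_0$'' — does not resolve anything, since the same theorem is invoked at level $1$ and an arbitrary edge insertion is just as inadmissible there; you have only pushed the problem up one level.

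The second omission is the restart mechanism. The paper handles external edge insertions by placing them directly into the sets $R_i$ of every level they cross (never feeding them to any level's data structure), and restarts the decompositions on $G_i, G_{i+1}, \ldots, G_k$ whenever $R_i$ exceeds a threshold of roughly $(2\gamma\phi)^i m$. Without restarts your runtime accounting does not close: each instance at level $i$ starts from a graph with $\tilde m_i \approx (2\gamma\phi)^i m$ edges but can accumulate up to $\tilde O(m)$ updates cascading from below and from external insertions, which exceeds the $\tilde O(\tilde m_i)$ update budget that \Cref{thm:bndryLnkdDynED} tolerates. The paper's accounting is that level $i$ is restarted at most $\tilde O(1/(\gamma\phi)^i)$ times at cost $O((\gamma\phi)^i m^{1+1/L_{max}}\gamma/\phi^2)$ each, so every level contributes comparably (the sum is \emph{not} geometric and dominated by level $0$ as you claim), and one sums over the $k$ levels.
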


\begin{proof}[Proof sketch.]
We construct the dynamic expander hierarchy as follows:
  
  We construct the dynamic
  expander decomposition of $G_i$ at each level using
  \Cref{thm:bndryLnkdDynED}.
  When a partition gets refined at level $i$, this corresponds to a
  vertex in level $i+1$ splitting into two, with edges resulting
  between. We can maintain the
  graph at level $i+1$ by first inserting self-loops on the vertex
  corresponding to the vertex which is about to split, and then
  splitting the vertex while turning the self-loops into edges between
  the vertices. For edge insertions to $G$, we simply add the inserted edge directly to the set $R$ of edges in between clusters in all graphs $G_i$ where it crosses. Whenever there are more than $(2\gamma \beta)^i m$ edges in the set $R$ at a graph $G_i$, we restart the expander decompositions on graph $G_i, G_{i+1}, \ldots, G_k$ via \Cref{thm:bndryLnkdDynED}.
  
  By the guarantees of this theorem, using induction on the level
  $i$, we can directly show that $G_i$ has at most $(2\gamma\phi)^i m$ edges
  initially, and receives at most $\tilde{O}(m)$
  updates, all of the forms allowed by \Cref{thm:bndryLnkdDynED}.

  At level $k = \log_{\gamma/\phi}(m)+1$, we have $<1$ edge left,
  and so the graph must be a single vertex.
  The running time for each time we run \Cref{thm:bndryLnkdDynED} at level $i$ is
  $O((\gamma\phi)^i m^{1+1/L_{max}}\gamma/\phi^2)$. It is not hard to verify that at level $i$, the algorithm in \Cref{thm:bndryLnkdDynED} is restarted at most $\tilde{O}(1/(\gamma \phi)^i)$ times. It remains to sum over the levels to obtain the run-time guarantees.
\end{proof}

\paragraph{Application \#3: Dynamic Connectivity with Subpolynomial Worst-Case Update/Query Times.} We can use the fact that we can derandomize \Cref{thm:expanderHierarchy} by using the deterministic version of \Cref{thm:mainTheoremRand}, and we can turn amortized update times in worst-case update times by using standard rebuilding techniques (see for example \cite{goranci2021expander}). Again, both changes come at the cost of increasing the constant $\gamma$, however, we can still find a $\phi = 1/m^{o(1)} \ll 1/\gamma$ such that the hierarchy runs with initialization time $m^{1+o(1)}$ and then processes each update in time $m^{o(1)}$. 

As discovered in \cite{goranci2021expander}, dynamic expander hierarchies immediately imply a simple dynamic connectivity algorithm. Since we have streamlined the implementation of dynamic expander hierarchies further, this gives an even simpler dynamic connectivity algorithm. 

For convenience, we describe this algorithm here: more precisely, we explain how to use the expander hierarchy to answer connectivity queries: for any two vertices $u,v \in V$, one can travel upwards in the hierarchy by going from the vertex $x$ in a graph $G_i$ to the vertex $y$ in $G_{i+1}$ where $x$ is in the expander that was contracted to obtain $y$. One can then compare the vertices that $u$ and $v$ reach in graph $G_k$ by traveling upwards repeatedly and if they are the same, $u$ and $v$ must be connected, otherwise they are not connected. The query can be implemented in $k = o(\log(m))$ time.

\section{Preliminaries}

\paragraph{Graphs.} In this article, we deal with directed, unweighted multi-graphs $G$. We let $E(G)$ denote the edge set of $G$ and $V(G)$ the vertex set. While technically in multi-graphs $G$, an edge $e \in E(G)$ cannot be encoded only by its endpoints, we commonly abuse notation and write $e = (u,v)$ to mean that $e$ is an edge with tail in $u$ and head in $v$. We let $\overleftarrow{G}$ denote the graph $G$ where edges are reversed.

\paragraph{Dynamic Graphs.} We consider dynamic graphs $G$, that is graph $G$ that undergo updates consisting of edge deletions and vertex splits. In the case of a vertex split of $v$, the adversary specifies the edges incident to $v$ that are moved to a new vertex $v'$ that is split from $v$. We assume that the adversary always specifies a vertex split update such that after the update the degree of $v'$ is at most the degree of $v$. Additionally, we allow for self-loop insertions.

\paragraph{Degree and Volume.} We define the degree $\deg_G(v)$ of a vertex $v \in V(G)$ to be the number of edges incident to $v$ where a self-loop counts $2$ units towards the volume of $v$. For any subset $S \subseteq V$, we define the volume $\vol_G(S) = \sum_{v \in S} \deg_G(v)$. 

\paragraph{Cuts.} When the context is clear, we define for a vertex subset $S$ in graph $G$, $\overline{S} = V(G) \setminus S$ and let $E_G(S, \overline{S})$ be the set of edges in $E(G)$ with tail in $S$ and head not in $S$. Given a vector $\br \in \mathbb{N}_{\geq 0}^{V(G)}$, we generalize the notions and say a cut $(S, \overline{S})$ where $\vol_G(S) + \br(S) \leq \vol_G(\overline{S}) + \br(\overline{S})$ is $(\br, \phi)$-out-sparse if $|E_G(S, \overline{S})| + \br(S) < \phi  (\vol_G(S) + \br(S))$ where $\br(S) = \sum_{s \in S} \br(s)$. When the vector $\br$ is not given explicitly, we assume $\br = \bNull$ (where $\bNull$ denotes the all-0 vectors) and also say a cut is $\phi$-out-sparse or $\phi$-in-sparse.

\paragraph{Expander.} We say that a graph $G$ and vector $\br \in \mathbb{N}_{\geq 0}^{V(G)}$ form an $(\br, \phi)$-out-expander if there is no $(\br, \phi)$-out-sparse cut. We say that $G$ is a $(\br, \phi)$-expander if both $G$ and $\overleftarrow{G}$ are $(\br, \phi)$-out-expander.

\paragraph{Embedding.} Given graphs $G$ and $W$ over the same vertex set. We say that a function $\Pi_{W \mapsto G}$ is an embedding of $W$ into $G$, if for each $e = (u,v) \in E(W)$, $\Pi_{W \mapsto G}(e)$ is a $u$-to-$v$ path in $G$. We let the inverse of an embedding, denoted $\Pi^{-1}_{W \mapsto G}$ map any set of edges $E' \subseteq E(G)$ to the set of edges in $E(W)$ whose embedding paths contain an edge in $E'$. We define the congestion of $\Pi_{W \mapsto G}$ by $cong(\Pi_{W \mapsto G}) = \max_{e \in E(G)} |\{ e' \in E(W) | e \in \Pi_{W \mapsto G}(e') \}|$. 

\paragraph{Witness.} To prove that a graph $G$ is an expander one can compute a well-known expander $W$ and embed it into $G$ with low congestion. Thus, $W$ is witnessing that $G$ is expander. Here, we generalize the concept slightly.

\begin{definition}[$R$-Witness]\label{def:alphaWitness}
Given a graph $G$, vectors $\br, \bgamma \in \mathbb{N}_{\geq 0}^{V(G)}$,  parameters $\phi, \psi \in (0,1)$ and $R \in \mathbb{N}_{\geq 0}$, we say that a graph $W$ over the same vertex set as $G$ along with an embedding $\Pi_{W \mapsto G}$ of $W$ into $G$ is an $(R,\phi,\psi)$-out-witness of $(G,\br)$ with respect to $\bgamma$ if
\begin{enumerate}
    \item $\| \br \|_1 \leq R$, and  \label{prop:alphaWitness1}
    \item we have $\deg_W(v) + \br(v) \in [\deg_{G}(v),\frac{1}{\psi}\deg_{G}(v)]$.\label{prop:alphaWitness4}
    \item for every cut $(S, \overline{S})$ with $\bgamma(S) \leq \bgamma(\overline{S})$, we have $|E_W(S, \overline{S})| + \br(S) \geq \psi(\vol_W(S) + \br(S))$, and
    \label{prop:alphaWitness2}

    \item $\Pi_{W \mapsto G}$ has congestion $\frac{1}{\psi\phi}$, and \label{prop:alphaWitness3}
\end{enumerate}
We say that $W$ is an $(R, \phi, \psi)$-witness of $(G,\br)$ with respect to $\bgamma$ if $W$ it is an $(R,\phi,\psi)$-out-witness of $G$ w.r.t. $\bgamma$ and $\overleftarrow{W}$ is an $(R,\phi,\psi)$-out-witness of $\overleftarrow{G}$ w.r.t. $\bgamma$.
\end{definition}

\begin{claim}\label{clm:conductanceForRGeneral}\label{clm:conductanceForRNull}
Given a graph $G$, if there exists a $(R, \phi, \psi)$-witness $W$ for $(G, \bNull)$ with respect to any $\bgamma$, then $G$ is a $\psi^2\phi$-expander.
\end{claim}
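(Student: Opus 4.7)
I want to show that for every cut $(S,\overline{S})$ in $G$ with $\vol_G(S)\le \vol_G(\overline{S})$, we have $|E_G(S,\overline{S})|\ge \psi^2\phi\,\vol_G(S)$, and symmetrically for $\overleftarrow{G}$. This is the usual ``expander witness implies expander'' argument, combining expansion of $W$ with the congestion of the embedding, except that \Cref{def:alphaWitness}(3) states expansion relative to $\bgamma$-weight rather than $\vol_G$-weight. I will handle this mismatch by a case split on which side of the cut is $\bgamma$-smaller.

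\textbf{Step 1: Lower bound on $|E_W(S,\overline{S})|$.} Fix a cut $(S,\overline{S})$ with $\vol_G(S)\le \vol_G(\overline{S})$. If $\bgamma(S)\le \bgamma(\overline{S})$, then applying property (3) of $W$ (with $\br=\bNull$) gives $|E_W(S,\overline{S})|\ge \psi\,\vol_W(S)$, and by property (2) with $\br=\bNull$ we have $\vol_W(S)\ge \vol_G(S)$. If instead $\bgamma(\overline{S})\le \bgamma(S)$, I apply property (3) to $\overleftarrow{W}$ on the cut $(\overline{S},S)$ in $\overleftarrow{G}$: this yields $|E_W(S,\overline{S})|=|E_{\overleftarrow{W}}(\overline{S},S)|\ge \psi\,\vol_{\overleftarrow{W}}(\overline{S})=\psi\,\vol_W(\overline{S})\ge \psi\,\vol_G(\overline{S})\ge \psi\,\vol_G(S)$. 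In both subcases we conclude $|E_W(S,\overline{S})|\ge \psi\,\vol_G(S)$.

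\textbf{Step 2: Congestion transfers the bound to $G$.} Each edge $e'=(u,v)\in E_W(S,\overline{S})$ is embedded by $\Pi_{W\mapsto G}$ as a directed $u$-to-$v$ walk in $G$, which must traverse at least one edge of $E_G(S,\overline{S})$ (since the walk starts in $S$ and ends in $\overline{S}$). By property (4), every edge of $G$ is used by at most $\tfrac{1}{\psi\phi}$ embedding paths, so
\[
|E_G(S,\overline{S})|\;\ge\;\psi\phi\cdot|E_W(S,\overline{S})|\;\ge\;\psi^2\phi\,\vol_G(S).
\]
Hence $G$ has no $\psi^2\phi$-out-sparse cut. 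Repeating the identical argument with $G,W$ replaced by $\overleftarrow{G},\overleftarrow{W}$ (and using that $\overleftarrow{W}$ is an $(R,\phi,\psi)$-out-witness of $\overleftarrow{G}$) shows $\overleftarrow{G}$ has no $\psi^2\phi$-out-sparse cut either, so $G$ is a $\psi^2\phi$-expander.

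\textbf{Where the subtlety lies.} The only nontrivial point is the mismatch between ``$\vol_G$-smaller side'' (the notion relevant for the conclusion) and ``$\bgamma$-smaller side'' (the notion in the witness definition). The two-case argument together with the fact that both $W$ and $\overleftarrow{W}$ are witnesses handles this cleanly: no matter which side is $\bgamma$-smaller, one of $W$ or $\overleftarrow{W}$ produces the required lower bound on $|E_W(S,\overline{S})|$ in terms of $\vol_G(S)$, and the embedding then transports the bound to $G$ with the loss of a factor $\psi\phi$ from congestion. No quantitative calculation beyond chaining these inequalities is required.
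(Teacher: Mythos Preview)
Your proof is correct and follows essentially the same approach as the paper: a case split on which side is $\bgamma$-smaller, using Property~\ref{prop:alphaWitness2} together with the degree bound (Property~\ref{prop:alphaWitness4}) to get $|E_W(S,\overline{S})|\ge\psi\vol_G(S)$, then the congestion bound (Property~\ref{prop:alphaWitness3}) to pass to $G$. The only cosmetic difference is that the paper bounds $\min\{|E_W(S,\overline{S})|,|E_W(\overline{S},S)|\}$ in one pass and thus handles both directions simultaneously, whereas you treat $G$ and $\overleftarrow{G}$ separately; this is immaterial.
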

\begin{proof}
Given any cut $(S,\overline{S})$ where $\vol_G(S) \leq \vol_G(\overline{S})$. By \Cref{def:alphaWitness}, Properties \ref{prop:alphaWitness2} and \ref{prop:alphaWitness4}, we either have $\bgamma(S) \leq \bgamma(\overline{S})$ which implies $\min\{|E_{W}(S, \overline{S})|,|E_{W}(\overline{S}, S)|\} \geq \psi\vol_W(S) \geq \psi\vol_G(S)$; or we have that $\bgamma(S) > \bgamma(\overline{S})$ which implies $\min\{|E_{W}(S, \overline{S})|,|E_{W}(\overline{S}, S)|\} \geq \psi\vol_W(\overline{S}) \geq \psi\vol_G(\overline{S}) \geq \psi\vol_G(S)$.

That is, in either case, we can conclude $\min\{|E_{W}(S, \overline{S})|,|E_{W}(\overline{S}, S)|\} \geq \psi\vol_G(S)$. It remains to use Property \Cref{prop:alphaWitness3} to argue that $|E_{W}(S, \overline{S})| \leq \frac{1}{\phi\psi}|E_{G}(S, \overline{S})|$ since each edge in $G$ is used on at most $\frac{1}{\phi\psi}$ embedding paths of $\Pi_{W \mapsto G}$. The same argument holds for $|E_{W}(\overline{S}, S)|$ which completes the proof.
\end{proof}

We use the following result regarding the computation of witnesses. We use $\psi_{CMG}$ throughout the rest of the paper for a fixed input graph.

\begin{theorem}[see \cite{khandekar2009graph,louis2010cut,chuzhoy2020deterministic, bernstein2020deterministic}] \label{thm:cutMatching}
There is a randomized algorithm $\textsc{CutOrEmbed}(G, \phi, R)$ that given an $m$-edge graph $G$ and parameters $\phi \in (0,1), 0 \leq R$ outputs either
\begin{enumerate}
    \item a set $S \subseteq V$ where $R \leq \vol_G(S)$ with $|E_G(S, \overline{S})| < \phi \vol_G(S)$, or
    \item a vector $\br \in \mathbb{N}_{\geq 0}^{V(G)}$
    , and a graph $W$ and embedding $\Pi_{W \mapsto G}$ that form an $(R, \phi, \psi_{CMG})$-witness of $(G, \br)$ w.r.t. $\bgamma = \bDeg$ where $\psi_{CMG} = \Omega(1 / \log^2(m))$.
\end{enumerate}
The algorithm runs in time $\tilde{O}(m/\phi)$  and succeeds with probability at least $1-n^{-c}$ for any pre-fixed constant $c > 0$.
\end{theorem}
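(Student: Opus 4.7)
The plan is to invoke the cut-matching game (CMG) framework of Khandekar-Rao-Vazirani, as adapted to directed graphs and to the $\br$-weighted setting by \cite{chuzhoy2020deterministic, bernstein2020deterministic}. The algorithm proceeds in $T = O(\log^2 m)$ rounds. In each round, a \emph{cut player} produces a bipartition $(A_t, B_t)$ of $V(G)$ with $|A_t| \approx |B_t|$ (balanced by vertex degrees) designed to quickly drive up the spectral gap of the union of matchings found so far; a \emph{matching player} then sets up a max-flow instance on $G$ that attempts to route a (near) perfect degree-weighted matching $M_t$ from $A_t$ to $B_t$ with edge capacity $\frac{1}{\phi \psi_{CMG}}$. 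The union $W = \bigcup_t M_t$ (together with its reverse to handle the directed case) is added to the witness.

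If in some round the matching player fails to route enough flow, the min-cut $(S, \overline{S})$ separating the unrouted demand from the sinks yields a cut with $|E_G(S,\overline{S})| < \phi \vol_G(S)$; the matching player's demand scaling is set so that whenever the cut player's bipartition forces a deficit, the smaller side has volume at least $R$, giving the first output of the theorem. If instead all $T$ rounds succeed, the standard KRV spectral analysis (adapted to the directed, degree-weighted setting) guarantees that $W$ is a $\psi_{CMG} = \Omega(1/\log^2 m)$-expander, verifying Properties~\ref{prop:alphaWitness2} and \ref{prop:alphaWitness3} of \Cref{def:alphaWitness}; the $\br$ vector is constructed by placing on each $v$ exactly the deficit $\deg_G(v) - \deg_W(v)$ arising from unmatched demand in the rounds, which simultaneously ensures Property~\ref{prop:alphaWitness4} (since each round contributes $O(1/\psi_{CMG})$ to $\deg_W(v) + \br(v)$) and preserves the expansion bound when $\br$ is added to both sides of the cut inequality. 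The congestion bound in Property~\ref{prop:alphaWitness3} is enforced directly by the capacities in the flow problems.

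For the directed case, I would run the game symmetrically: each round solves flow instances in both $G$ and $\overleftarrow{G}$ so that $W$ is simultaneously an out-witness and in-witness, matching the stronger definition needed for our application. Total runtime is $T = O(\log^2 m)$ rounds of max-flow on a graph of capacity $O(1/(\phi \psi_{CMG}))$, each solvable in $\tilde O(m/\phi)$ via a blocking-flow/push-relabel routine, yielding the claimed $\tilde O(m/\phi)$ total time. The cut player can be implemented randomly via the KRV random projection, giving the high probability bound.

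The main obstacle in writing out a full proof is verifying, with careful bookkeeping, that the $\br$ vector constructed from deficits across all $T$ rounds simultaneously certifies (i) the degree-bracket $\deg_W(v) + \br(v) \in [\deg_G(v), \deg_G(v)/\psi_{CMG}]$, (ii) the $(\br, \psi_{CMG})$-expansion of $W$ with respect to $\bgamma = \bDeg$ and not with respect to $\mathbf{deg}_W$, and (iii) that the $\|\br\|_1 \leq R$ bound emerges from the early-termination rule that discards sparse-cut candidates of volume below $R$. In particular, one must show that any demand not routed by the matching player corresponds either to a large enough deficit set to output as a cut (the first output branch) or to a bounded residual compatible with the $R$-budget on $\br$; the reconciliation between ``small sparse cut encountered, but too small to return'' and ``add this deficit to $\br$ instead'' is the delicate invariant that the algorithm of \cite{chuzhoy2020deterministic, bernstein2020deterministic} maintains across rounds.
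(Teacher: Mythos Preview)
The paper does not prove this theorem; it is stated in the preliminaries as a known result imported from the cited references \cite{khandekar2009graph,louis2010cut,chuzhoy2020deterministic,bernstein2020deterministic} and used as a black box. Your sketch of the cut-matching game approach is broadly consistent with how those references establish the result, but since there is no proof in the present paper to compare against, no further comparison is possible here.
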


\paragraph{Flow.} A \emph{flow-problem} $\mathcal{I} = (G, \bc, \bDelta, \bT)$ consists of a graph $G$, with capacities $\bc \in \mathbb{N}_{\geq 0}^{E(G)}$, and source and sink functions $\bDelta, \bT \in \mathbb{N}_{\geq 0}^{E(V)}$. Letting $\BB$ be the incidence matrix of $G$. Then a vector $\bflow \in \mathbb{N}_{\geq 0}^{E(G)}$ is a \emph{pre-flow} if $\bNull \leq \bflow \leq \bc$ (entry-wise). Given a pre-flow $\bflow$ for a flow problem $\mathcal{I}$ as above, we define the \emph{flow absorption} vector $\abs = \min\{ \BB^\top \bflow + \bDelta, \bT\}$ to be the entry-wise minimum. We define the \emph{excess flow} $\ex =  \BB^\top \bflow + \bDelta - \abs, $. We say that $\bflow$ is an $R$-\emph{flow} if it is a pre-flow and additionally $\|\ex\|_1 \leq R$. Given a pre-flow $\bflow$, we define the residual graph $G_{\bflow}$ to be the graph obtained by adding for each edge $e = (u,v) \in E(G)$, an edge $\overrightarrow{e} = (u,v)$ to $G_{\bflow}$ of \emph{residual capacity} $\bc(e) - \bflow(e)$ and an edge $\overleftarrow{e} = (v,u)$ of capacity $\bflow(e)$. We let $\bc_{\bflow}$ be the residual capacities on the residual graph.

\paragraph{Misc.} We use $[k]$ to denote the set $\{0,1,\ldots, k\}$.

\section{One-Shot Pruning}
\label{sec:one-shotPruning}

The main result of this section is the following Lemma which either outputs a (large) sparse cut, or outputs a better witness. Note that the Lemma inputs a witness $W$ but can only output an out-witness $W'$ (we can remedy this by running the algorithm on the same parameters but with $G$ replaced by $\overleftarrow{G}$). 

\begin{lemma}\label{lma:asymPruning}
Given an $n$-vertex, $m$-edge graph $G$, vectors $\br, \bgamma \in \mathbb{N}_{\geq 0}^{V(G)}$, an $(R, \phi, \psi)$-witness $W$ of $(G, \br)$ w.r.t. $\bgamma$, for $\psi \leq \psi_{CMG}/8$, and a threshold $R' \in \mathbb{R}_{\geq 0}$ such that $R' \leq R \leq \psi m/8$. Then, the procedure $\textsc{PruneOrRepair}(G,\br,W, \Pi_{W \mapsto G}, \phi, \psi, R')$ given in \Cref{alg:betterwitness} either outputs
\begin{enumerate}
    \item  \label{prop:returnsCut} a set $S \subseteq V(G)$ with $R' \leq \vol_G(S) + \br(S) \leq 8R/\psi$ where $|E_G(S, \overline{S})| < {\phi} (\vol_G(S) + \br(S))$, or
    \item a new vector $\br' \in \mathbb{N}_{\geq 0}^{V(G)}$ and a new graph $W'$ with embedding $\Pi_{W' \mapsto G}$ that form an $(R', \phi, \psi')$-out-witness of $(G,\br')$ w.r.t. $\bgamma$, for $\psi' = \frac{\psi^2}{6}$.\label{prop:returnsWitness}
\end{enumerate}
The procedure can be implemented in time $\tilde{O}(R/(\psi^2 \phi))$.
\end{lemma}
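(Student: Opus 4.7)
My plan is to reduce the problem to a single local max-flow computation on $G$, following the sketch in \Cref{subsec:techiques}. I would set up a flow problem $\mathcal{I} = (G, \bc, \bDelta, \bT)$ with uniform capacity $\bc = \frac{1}{\psi^2\phi}\bOnes$, source $\bDelta$ proportional to $\br$ (with the proportionality constant of order $1/\psi$, matching the $8/\psi$ factor in the sketch), and sink $\bT = \bDeg$. The hypothesis $R \leq \psi m/8$ combined with Property~\ref{prop:alphaWitness4} of the input witness guarantees $\|\bDelta\|_1 \leq \|\bT\|_1$, so the instance is a diffusion problem with total source $O(R/\psi)$. I would then run a blocking-flow augmenting routine on the residual graph, halting as soon as either all source is absorbed or the residual excess drops to at most $R'$. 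Standard local blocking-flow accounting, in which each augmenting path either saturates a capacity-$\tfrac{1}{\psi^2\phi}$ edge or spends at least one unit of sink, then yields the promised $\tilde{O}(R/(\psi^2\phi))$ runtime.

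\textbf{Case split on termination.} If the algorithm terminates with more than $R'$ residual source, max-flow/min-cut exposes a saturated cut $(S,\overline{S})$ with $\bc(E_G(S,\overline{S})) < \bDelta(S)$; dividing out the chosen constants in $\bc$ and $\bDelta$ gives $|E_G(S,\overline{S})| < \phi(\vol_G(S) + \br(S))$, and the excess-to-sink accounting at $S$ pins down the volume window $R' \leq \vol_G(S)+\br(S) \leq 8R/\psi$, completing case~\ref{prop:returnsCut}. Otherwise we have a pre-flow $\bflow$ with $\|\ex\|_1 \leq R'$; I would set $\br' := \ex$, take an integral path decomposition of $\bflow$, and define $W'$ by placing one edge $(x,y)$ per decomposed $x$-to-$y$ path and using the flow path itself as $\Pi_{W' \mapsto G}$ for that edge.

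\textbf{Easy properties.} Property~\ref{prop:alphaWitness1} is immediate from $\br' := \ex$ and the halting threshold. Property~\ref{prop:alphaWitness3} holds because each edge of $G$ carries at most $\bc(e) = \frac{1}{\psi^2\phi}$ units of flow and therefore appears on at most that many decomposed paths, matching the required congestion bound. Property~\ref{prop:alphaWitness4} (the degree sandwich) follows by bookkeeping: each vertex $v$ contributes $\bDelta(v)-\ex(v)$ tail-endpoints and at most $\bT(v) = \deg_G(v)$ head-endpoints to $W'$, so Property~\ref{prop:alphaWitness4} of the input witness combined with the chosen constants in $\bDelta$ places $\deg_{W'}(v)+\br'(v)$ in the interval $[\deg_G(v), \frac{1}{\psi'}\deg_G(v)]$.

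\textbf{Main obstacle.} The delicate step is Property~\ref{prop:alphaWitness2} with $\psi' = \psi^2/6$: for every cut $(S,\overline{S})$ with $\bgamma(S)\leq \bgamma(\overline{S})$ we need $|E_{W'}(S,\overline{S})|+\br'(S) \geq \psi'(\vol_{W'}(S)+\br'(S))$. I would argue this by the two-case dichotomy illustrated in \Cref{fig:proofIllus}. Either (a) at least half of $E_W(S,\overline{S})$ survives into $E_{W'}(S,\overline{S})$, in which case Property~\ref{prop:alphaWitness2} applied to the input witness $W$ together with the degree sandwich to pass from $\vol_W$ to $\vol_{W'}$ already yields the required expansion up to constants; or (b) at least half of $E_W(S,\overline{S})$ is discarded, and each discarded edge contributes $\Theta(1/\psi)$ units of source to $S$, so that $\bDelta(S) \gtrsim |E_W(S,\overline{S})|/\psi$ exceeds $\bT(S) = \vol_G(S)$ by an $\Omega(\psi)$ factor after invoking Property~\ref{prop:alphaWitness2} on $W$. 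Since the flow routes all but $\br'(S)$ units of source at $S$, a $\Omega(\psi)$ fraction of the sources in $S$ must be paired with sinks in $\overline{S}$, and each such routed unit becomes an edge of $W'$ crossing the cut. Carefully tracking the constants through both cases — in particular using $\psi \leq \psi_{CMG}/8$ to absorb slack between $\vol_G$, $\vol_W$, and $\vol_{W'}$ via the two degree sandwiches — yields the $\psi^2/6$ bound. This is the only place where the exact constants are load-bearing; everything else is bookkeeping on the flow problem and path decomposition.
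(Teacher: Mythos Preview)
Your construction of $W'$ is the central gap. You define $W'$ to consist \emph{only} of the flow-path edges, but then your expansion argument speaks of edges of $E_W(S,\overline{S})$ ``surviving into'' $E_{W'}(S,\overline{S})$ --- which cannot happen under your construction except by accident. Worse, the degree lower bound you claim in the ``Easy properties'' paragraph is false for this $W'$: take any vertex $v$ with $\br(v)=0$ that happens to lie outside the support of the flow (such vertices exist whenever $\|\br\|_1$ is small, which is the regime of interest). Then $\bDelta(v)=0$, no path starts or ends at $v$, so $\deg_{W'}(v)=0$ and $\br'(v)=\ex(v)=0$, yet $\deg_G(v)$ can be positive. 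You appeal to Property~\ref{prop:alphaWitness4} of the \emph{input} witness $W$ to fix this, but your $W'$ has discarded $W$ entirely, so that property is unavailable. The paper's algorithm sets $W'\gets W$ first and then \emph{adds} the flow-path edges on top; this is not cosmetic --- it is what makes the degree lower bound and the expansion argument go through.

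Your two-case expansion dichotomy is also lifted from the informal deletion-based sketch in \Cref{subsec:techiques}, not the $\br$-vector setting of the lemma. Once $W'\supseteq W$, \emph{all} of $E_W(S,\overline{S})$ survives, so ``half survives / half discarded'' is not the split. The real dichotomy is whether $|E_W(S,\overline{S})|$ or $\br(S)$ dominates in the input guarantee $|E_W(S,\overline{S})|+\br(S)\geq\psi(\vol_W(S)+\br(S))$; when $\br(S)$ dominates you must further split on whether $\br'(S)$ is still a constant fraction of $\br(S)$ (if so, $\br'(S)$ itself supplies the expansion; if not, the routed flow $\frac{8}{\psi}(\br(S)-\br'(S))$ overflows the sink $\bT(S)$ and forces many flow paths --- hence many new $W'$-edges --- across the cut). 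That is a three-case analysis, and it also dictates the paper's choice $\br'(u)\gets\br(u)-(\text{\# paths with tail }u)$ rather than your $\br':=\ex$, so that $\br'$ lives at the same scale as $\br$ and the case split on $\br'(S)$ versus $\br(S)/2$ is meaningful.
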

\begin{remark}
Note that $\bgamma$ does not have to be passed as an argument to the procedure.
\end{remark}

\begin{algorithm}
\label{lne:defineCapacities} 
\DontPrintSemicolon
Define the flow problem $\mathcal{I} = (G, \bDelta, \bT, \bc)$ for $\bc =  \frac{16}{\psi\phi} \cdot \bOnes$; sink $\bT = \bDeg + \br$ and source function 
$\bDelta = \frac{8}{\psi}\br.$\label{lne:setUpFlowProblem}\\
Compute flow $\bflow$ by running Dinitz's Blocking Flow algorithm for $h = \frac{16\log(m)}{\psi\phi}$ rounds on instance $\mathcal{I}$.\label{lne:computeBlockingFlow}\\
\If(\label{lne:ifFlowIsRouted}){$\bflow$ is an $R'$-flow}{
    $W' \gets W$; $\Pi_{W' \mapsto G} \gets \Pi_{W \mapsto G}$; $\br' \gets \br$.\label{lne:initNewWit}\\
    Let $\mathcal{P}_{\bflow}$ be flow path decomposition of $\bflow$.\label{lne:specialFlowDecomposition}\\
    \ForEach(\label{mapUVPathsToWitnessEdges}){$u$-to-$v$ path $\pi \in \mathcal{P}_{\bflow}$}{
        Add edge $e=(u,v)$ to $W'$; $\Pi_{W' \mapsto G}(e) = \pi$; $\br'(u) \gets \br'(u) - 1$.\label{lne:updateWitness}
    }
    \Return $(W', \Pi_{W'\mapsto G}, \br')$
}\Else(\label{lne:algoEnterElse}){
    $S \gets \{v \in V(G) |\; \ex > 0 \}$.\label{lne:initS}\;
    \While{$|E_G(S, \overline{S})| \geq \phi(\vol_G(S) + \br(S))$}{
        $S \gets S \cup \{ v \in V \;|\; dist_{G_{\bflow}}(S, v) = 1\}$.\label{lne:augmentS}
    }
    \Return $S$\label{lne:returnSparseCut}
}\caption{$\textsc{PruneOrRepair}(G,\br, W, \Pi_{W \mapsto G}, \phi, \psi, R')$}
\label{alg:betterwitness}
\end{algorithm}

The algorithm works by setting up a flow instance $\mathcal{I}$ that tries to find for each unit $\br(v)$ a path from $v$ to an arbitrary other vertex in the graph while minimizing congestion and the number of flow paths ending in each vertex. We then run the Blocking Flow algorithm by Dinitz \cite{dinitz2006dinitz} for $h$ rounds on the flow instance $\mathcal{I}$. Our later analysis relies on the following well-known fact.

\begin{fact}\label{lma:propertiesOfF}
Given a flow instance $\mathcal{I}$ and height parameter $h$, the blocking flow algorithm by Dinitz run for $h$ rounds outputs a pre-flow $\bflow$ such that in the residual network $G_{\bflow}$ there is no path from any vertex $s \in V$ where $\ex(s) > 0$ to a vertex $t \in V$, with $\abs(t) < \bT(t)$ consisting of at most $h$ edges. 
\end{fact}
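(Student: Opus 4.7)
The fact is the multi-source/multi-sink, pre-flow translation of the classical monotonicity invariant underlying Dinitz's algorithm: each blocking-flow phase strictly increases the shortest residual augmenting-path length. I would carry the proof out in three stages.

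First, I would reduce to a single-source/single-sink instance. Lift $\mathcal{I}$ to $\mathcal{I}^*$ by adding a super-source $s^*$ with arcs $(s^*,v)$ of capacity $\bDelta(v)$ and a super-sink $t^*$ with arcs $(v,t^*)$ of capacity $\bT(v)$ for every $v \in V$. A pre-flow $\bflow$ on $\mathcal{I}$ lifts to a feasible $s^*$-$t^*$ flow $\bflow^*$ on $\mathcal{I}^*$ via $\bflow^*(s^*,v) = \bDelta(v) - \ex(v)$ and $\bflow^*(v,t^*) = \abs(v)$. Under this lift, $\ex(s) > 0$ iff arc $(s^*,s)$ has residual capacity, and $\abs(t) < \bT(t)$ iff arc $(t,t^*)$ does, so a residual path of length $k$ in $G_{\bflow}$ from such an $s$ to such a $t$ extends, by prepending $s^*$ and appending $t^*$, to a residual $s^*$-to-$t^*$ path of length $k+2$ in the lifted residual graph, and conversely the lifted residual distance controls the original one.

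Second, I would re-derive the classical distance-monotonicity invariant in this lifted setting. Writing $d_i$ for the residual $s^*$-to-$t^*$ distance after the $i$-th blocking-flow phase, I would show $d_i > d_{i-1}$. Let $\ell(v)$ denote the BFS distance from $s^*$ computed at the start of phase $i$; the phase's level graph consists of the arcs $(u,v)$ with $\ell(v) = \ell(u)+1$, and its blocking flow saturates at least one such arc on every shortest $s^*$-to-$t^*$ path inside it. Any residual augmenting path $P$ after the phase therefore contains at least one arc that is either a pre-existing non-level arc ($\ell(v) \leq \ell(u)$) or a newly created reverse arc from saturation ($\ell(v) = \ell(u) - 1$). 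Since every residual arc $(u,v)$ still satisfies $\ell(v) \leq \ell(u)+1$, telescoping $\ell(v) - \ell(u)$ along $P$ yields $\ell(t^*) - \ell(s^*) = d_{i-1}$, forcing at least $d_{i-1}$ arcs of $P$ to contribute $+1$; the presence of a non-$+1$ arc then forces the total number of arcs on $P$ to strictly exceed $d_{i-1}$.

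Finally, iterating this invariant over the $h$ rounds gives $d_h \geq d_0 + h$, and the initial distance $d_0 \geq 2$ is immediate because every $s^*$-to-$t^*$ path must use the two augmentation arcs. Translating back through the lift of the first step, any residual path in $G_{\bflow}$ from a vertex with $\ex > 0$ to a vertex with $\abs < \bT$ has more than $h$ edges, which is the desired conclusion. The main (and essentially only) obstacle is bookkeeping: making the pre-flow-to-flow correspondence precise, checking that the BFS levels used internally by the blocking-flow subroutine agree with residual distances in the lifted graph, and tracking the $+2$ offset from $s^*$ and $t^*$; the analytic heart is simply the telescoping observation that saturating a level graph strictly increases the residual $s^*$-$t^*$ distance, which is the textbook content of Dinitz's analysis.
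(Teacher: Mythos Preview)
The paper states this as a well-known fact without proof, so there is no ``paper's approach'' to compare against; your reduction to the classical single-source/single-sink Dinitz monotonicity is exactly the standard way to see it and is essentially correct.

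There is, however, an off-by-one gap in your final step. From $d_0 \geq 2$ and $d_h \geq d_0 + h$ you only get $d_h \geq h+2$, hence an inner residual distance of at least $h$, whereas the Fact asserts strictly more than $h$. The fix is immediate once you notice that in your own lifted instance the initial flow $\bflow^*_0$ already saturates, for every vertex $v$, at least one of the two augmentation arcs $(s^*,v)$ and $(v,t^*)$: by the definitions in the paper, $\ex(v) > 0$ forces $\abs(v) = \bT(v)$, so no vertex can simultaneously have positive residual on both arcs, and therefore no length-$2$ residual $s^*$-to-$t^*$ path exists. This gives $d_0 \geq 3$, hence $d_h \geq h+3$, and after stripping the two augmentation arcs every inner residual path has length at least $h+1$, which is what the Fact claims.

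A smaller bookkeeping point: the formula $\bflow^*(s^*,v) = \bDelta(v) - \ex(v)$ can be negative for an \emph{arbitrary} pre-flow (whenever net inflow at $v$ exceeds $\bT(v)$), so your sentence ``a pre-flow $\bflow$ on $\mathcal{I}$ lifts to a feasible $s^*$-$t^*$ flow'' is not literally true in general. It is true for every pre-flow produced along the Dinitz run, which is all you need; the cleanest phrasing is that running the paper's pre-flow Dinitz on $\mathcal{I}$ is round-by-round identical to running ordinary Dinitz on $\mathcal{I}^*$ started from $\bflow^*_0$, and then quoting the classical invariant.
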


Note that we do not compute a $0$-flow which is achieved when Blocking Flow is run for $n$ rounds. Instead, we stop after only $h = O(\log(m)/(\psi\phi))$ rounds to ensure that the subprocedure can be implemented efficiently. Depending on whether the flow is then an $R'$-flow or not, we either use the flow to repair the witness graph $W$ by constructing $W'$ from $W$, or otherwise find a sparse cut $S$ in $G$.

We believe that the behaviour of the procedure is best understood by carefully inspecting the ensuing proof of \Cref{lma:asymPruning}. 

\paragraph{Proof of Case \ref{prop:returnsCut}.}
Let us first assume that the algorithm enters the else-statement starting in \Cref{lne:algoEnterElse}. Let us denote by $S_i$ the set $S$ constructed in the $i$-th iteration of the while-loop and by $S_0$ the set $S$ at initialization. Observe that we can alternatively characterize each $S_i$ by
$S_0 = \{v \in V(G) |\; \ex > 0 \}$ and for $i > 0$ by $S_i = \{ v \in V \;|\; dist_{G_{\bflow}}(S_0,v) \leq i\}$. Note that by \Cref{lma:propertiesOfF} and the definition of $\bT$, we have that $S_h$ absorbed at least $\vol_G(S_{h}) + \br(S_{h})$ units of flow. But the total amount of demand put at all vertices is $\| \bDelta \|_1 = \frac{8}{\psi}\| \br\|_1 \leq \frac{8}{\psi} R$ and so we must have $\vol_{G}(S_h) + \br(S_h) \leq \frac{8}{\psi} R$.

Let us assume first that the while-loop is terminated after $i \leq h$ iterations. Then, since $S_0 \subseteq S_i \subseteq S_h$, and since the vertices in $S_0$ are incident to at least $R'$ units of excess flow, we have $R' \leq \br(S_0) \leq \br(S_i)$ and combined with the while-loop condition, we clearly have that the cut $S_i$ returned in  \Cref{lne:returnSparseCut} is a valid output.

It remains to prove that the while-statement is indeed last entered for some $i \leq h$. We prove by contradiction by showing that $S_{i+1}$ has significantly larger volume than $S_i$ for each $i < h$ and therefore $S_h$ has volume larger than $8R/\psi$ which gives a contradiction by the argument above.

More precisely, we use that since, for $i \leq h$, $S_i$ is not a sparse cut, we have that $|E_G(S_i, \overline{S_i})|  \geq \phi (\vol_G(S_i) + \br(S_i))$. We next want to argue that the set $E_{G_{\bflow}}(S_i, \overline{S_i})$ is of comparable size to argue that $S_{i+1}$ is significantly larger in volume than $S_i$. But note that an edge $e$ in $E_G(S_i, \overline{S_i})$ does only not appear in $G_{\bflow}$ if $c = \bc(e)$ units of flow are routed in the edge. On the other hand, for any edge $\overleftarrow{e}$ in $E_G(\overline{S_i}, S_i)$, we have that an anti-parallel edge appears in $G_{\bflow}$ if 
any flow is routed on this edge. But note that the amount of flow leaving $S_i$ is clearly upper bound by $\bDelta(S_i)$. Thus,
\begin{align*}
    |E_{G_{\bflow}}(S_i, \overline{S_i})| &\geq \frac{c\cdot |E_G(S_i, \overline{S_i})| - \bDelta(S_i)}{c}\\
    &\geq \frac{c \cdot \phi (\vol_G(S_i) + \br(S_i)) - \frac{8}{\psi} \br(S_i)}{c} \geq \frac{1}{2} \phi (\vol_G(S_i) + \br(S_i)).
\end{align*}
for our choice of $c = \frac{16}{\psi\phi}$. We obtain by definition of $S_{i+1}$ that $\vol_G(S_{i+1}) \geq \left(1 + \frac{\phi}{2}\right)(\vol_G(S_{i}) + \br(S_{i})) \geq \left(1 + \frac{\phi}{2}\right)\vol_G(S_{i})$. Using induction, we thus get that 
\[
    \vol_G(S_{i}) + \br(S_{i}) \geq \left(1 + \frac{\phi}{2}\right)^i (\vol_G(S_{0}) + \br(S_{0})).
\]
Note that we can repeat this argument for all $i$, one can easily calculate that at level $h$ (where we use that $S_0$ is non-empty since otherwise we would have an $R'$-flow), we have $\vol_G(S_h) + \br(S_h) > m \geq 8R/\psi$. But this gives a contradiction, as desired.

\paragraph{Proof of Case \ref{prop:returnsWitness}.} We prove that $W'$, $\br'$ and $\Pi_{W' \mapsto G}$ form an $(R', \phi', \psi')$-out witness with respect to $\bgamma$. Let us therefore prove each property that is required by \Cref{def:alphaWitness} one-by-one:
\begin{enumerate}
    \item \uline{$\| \br' \|_1 \leq R'$:} We initialize $\br'$ to $\br$ in  \Cref{lne:initNewWit} and then decrease $\| \br' \|_1$ by $1$ in each iteration of \Cref{lne:updateWitness}. But since $\bflow$ is an $R'$-flow, the path decomposition of $\bflow$ holds at least $\|\br\|_1 - R'$ paths, each resulting in an iteration of the foreach-loop that executes \Cref{lne:updateWitness}.  
    
    \item \underline{$\forall v \in V(G)$, we have $\deg_{W'}(v) + \br'(v) \in [\deg_{G}(v),\frac{1}{\psi'}\deg_{G}(v)]$:} We first use that $\deg_W(v) + \br(v) \geq  \deg_{G}(v)$ by assumption on $W$. But note that we have $\br'(u)$ being equal to $\br(u)$ minus the number of edges added to $W'$ with tail in $u$, as can be seen from inspection of \Cref{lne:updateWitness}. Thus the lower bound holds.
    
    For the upper bound, we use that each vertex $v \in V$, has sink $\bT(v) = \deg_G(v) + \br(v)$. This upper bounds the number of paths that end in $v$ in the flow path decomposition and thus also edges added to $v$ with $v$ in its head. Thus, $\deg_{W'}(v) + \br'(v) \leq 2(\deg_{W}(v) + \br(v)) + \deg_{G}(v) \leq \frac{3}{\psi}\deg_{G}(v)$ by assumption on $W$. The last upper bound is significantly tighter than then the Lemma stipulates and we will use this tighter bound in proving the remaining properties.
    
    \item \uline{Expansion of cuts in $W'$:} Let us fix any cut $(S, \overline{S})$ where $\gamma(S) \leq \gamma(\overline{S})$. By assumption $|E_W(S, \overline{S})| + \br(S) \geq \psi (\vol_W(S) + \br(S))$. Let us do a case analysis:
    \begin{itemize}
        \item \uline{If $|E_W(S, \overline{S})| \geq \br(S)$:} Since $W' \supseteq W$, we have $|E_{W'}(S, \overline{S})| \geq |E_W(S, \overline{S})|$. At the same time, we have that $\vol_{W'}(S) + \br'(S) \leq \frac{3}{\psi} \vol_G(S)$ by the previously obtained degree bound. But from the guarantees on $W$ and $\br$, we thus have that $\vol_{W'}(S) + \br'(S) \leq \frac{3}{\psi} (\vol_{W}(S) + \br(S))$. Combining these insights, we obtain
        \[
        |E_{W'}(S, \overline{S})| \geq \frac{\psi}{2} (\vol_W(S) + \br(S)) \geq \frac{\psi^2}{6} (\vol_{W'}(S) + \br'(S)).
        \]
        \item \uline{If $|E_W(S, \overline{S})| < \br(S)$ \textbf{and} $\br'(S) > \frac{1}{2} \br(S)$:} We immediately get that
        \[
            |E_{W'}(S, \overline{S})| + \br'(S) > |E_{W}(S, \overline{S})| + \frac{1}{2}\br(S) \geq \frac{1}{2}\psi(\vol_W(S) + \br(S)) \geq \frac{\psi^2}{6}(\vol_{W'}(S) + \br'(S))
        \]
        where we use in the last inequality that $\vol_{W}(S) + \br(S) \geq \deg_G(v) \geq \frac{\psi}{3}(\vol_{W'}(S) + \br'(S))$ by assumption on $W$ and the last property.
        \item \underline{If $|E_W(S, \overline{S})| < \br(S)$ \textbf{and} $\br'(S) \leq \frac{1}{2} \br(S)$:} We have $\frac{8}{\psi}(\br(S) - \br'(S)) \geq \frac{4}{\psi} \br(S)$ paths in the flow decomposition (see \Cref{lne:specialFlowDecomposition}) with tails in $S$. But we also have that at most $\bT(S) = \vol_{G}(S) + \br(S)$ many of these edges have their head in $S$. The rest has their heads in $\overline{S}$. Thus $|E_{W'}(S, \overline{S})| \geq \frac{4}{\psi} \br(S) - \bT(S) \geq \frac{4}{\psi} \br(S) - (\vol_{G}(S) + \br(S))$. 
        
        But by assumption on $W$ and the current case assumption, we have $\vol_G(S) \leq \vol_W(S) + \br(S) \leq \frac{1}{\psi}(|E_W(S, \overline{S})| + \br(S)) < \frac{2}{\psi}\br(S)$. Thus, 
        $|E_{W'}(S, \overline{S})| \geq \frac{1}{\psi} \br(S) \geq \frac{1}{4}(\vol_G(S) + \br(S)) \geq \frac{\psi}{12}(\vol_{W'}(S) + \br'(S))$ (where we use the degree bound on $W'$ from the previous property in the last inequality).
        
    \end{itemize}
    
    \item \underline{$\Pi_{W' \mapsto G}$ has congestion at most $\frac{1}{\psi'\phi}$:} This follows straight-forwardly from the congestion of $\Pi_{W \mapsto G}$ and the fact that the embedding paths added to embed the new edges in $W'$ are taken from the flow path decomposition where the flow is routed through edges with capacities $\frac{16}{\psi \phi}$.
\end{enumerate}

\paragraph{Runtime Analysis.} Let us first analyze the run-time required to find the pre-flow $\bflow$. We assume for this section that the reader has basic familiarity with the classic Blocking Flow algorithm by Dinitz. This algorithm maintains a pre-flow $\bflow$ initialized to carry zero flow on every edge. Then, in each round a BFS algorithm is performed from an artificial super-source vertex $s$ on the residual graph $G'_{\bflow}$ obtained from the current $G_{\bflow}$ after adding the super-source vertex $s$ with an edge from $s$ to each vertex $v$ with residual capacity set equal to the current excess $\ex(v)$. Then, whenever the BFS discovers a new vertex $w$ with $\abs(w) < \bT(w)$, the algorithm can take a new flow path from the vertex $v$ after $s$ on the BFS tree path between $s$ and $w$ and add the flow path to $\bflow$ where the amount of flow is equal to the minimum residual capacity of any edge on the path. Any edge that has its residual capacity during this round decreased to $0$ remains removed from the graph that the BFS is performed on. 

Using this implementation, it is straight-forward to see that the BFS only explores out-edges in $G'_{\bflow}$ incident to $s$ and vertices where $\abs(w) = \bT(w)$. But the total volume of the latter set of vertices is at most $O(\vol_G(S_{h}))$ which we analyzed earlier to be at most $O(R/\psi)$. Since the number of edges incident to $s$ is at most $R$, we can conclude that each round consist of a BFS over $O(R/\psi)$ many edges along with the flow routing described above. Using a cut-link tree to route the flows, each round can thus be executed in time $\tilde{O}(R/\psi)$. The run-time for $h$ rounds of Blocking Flow is thus $\tilde{O}(R/(\psi^2\phi))$.

Finally, it is not hard to see that the if-condition in \Cref{lne:returnSparseCut} and the construction and updates of the set $S$ in Lines \ref{lne:initS} and \ref{lne:augmentS} can be done in $\vol_G(S)+\br(S)$ per iteration. But recall that $\vol_G(S)+\br(S) \leq O(R/\psi)$ and there are at most $h$ for-loop iteration.

\section{Maintaining Directed Expander Decomposition via Batching}
\label{sec:maintainBatching}

We now give the algorithm and analysis behind our main result in \Cref{thm:mainTheoremRand}. 

\paragraph{High-Level Algorithm.} The algorithm for \Cref{thm:mainTheoremRand} works by maintaining an expander decomposition $\mathcal{X} = \{X_1, X_2, \dots, X_{\tau}\}$ for graph $G$ at all times and for each expander $X_i$ it  batches updates to the graph $G[X_i]$ using standard batching techniques. This allows us to leverage the pruning algorithm from \Cref{lma:asymPruning} in the most effective way. 

More precisely, we maintain $L_{max} + 1$ levels of update batches in the algorithm for each set $X \in \mathcal{X}$. For each $X \in \mathcal{X}$, the algorithm maintains
\begin{itemize}
    \item a family of witness graphs $W_X = \{ W_{X,0}, W_{X,1}, \dots, W_{X, L_{max}}\}$.
    \item a family of vectors $R_X = \{\br_{X,0}, \br_{X,1}, \dots, \br_{X, L_{max}}\}$ where each vector lives in $\mathbb{N}_{\geq 0}^{V(G)}$ (but is supported only on $X$) and keeps track of the updates that need to be dealt with in each level.
\end{itemize}

\paragraph{Initialization.} To initialize, we set $\mathcal{X}$ to consist only of the set $V$, and set, for each $l$,  $W_{V,l}$ to the empty graph, $\br_{V,l} = \bDeg$, let $\psi_{L_{max}} = \frac{\psi_{CMG}}{2}$ and 
$\psi_l = \frac{\psi_{l+1}^{4}}{144}$ for $0 \leq l < L_{max}$. We initialize vector $\bgamma_V = \bDeg$ and the set $R$ to be the empty set. We then invoke procedure $\textsc{Update}(t)$ for $t=0$ which is described in the next paragraph.

\begin{algorithm}
\caption{Pseudocode to Process Updates}
\label{alg:updateAlgo}
\DontPrintSemicolon
\SetKwInOut{Input}{Input}
\SetKwInOut{Output}{Output}
\SetKw{Break}{Break}
\SetKw{Return}{Return}
\SetKwProg{Proc}{procedure}{}{}
\small
\Proc{$\textsc{ApplyUpdate}(u)$}{
    \lIf(\label{lne:initialIfCheck}){$u$ encodes the deletion $(x,y)$ where $x \in X, y \in Y$ for $X \neq Y \in \mathcal{X}$}{
    	\Return
    }
    \Else{
        Let $X \in \mathcal{X}$ be the cluster that the update $u$ is applied to.\\
        
       	\For{$l \in \{0, 1, \ldots, L_{max}\}$}
        {
            \If(\label{lne:uEncodesDel}){$u$ encodes the deletion of an edge $e = (x,y)$}{
    	        \ForEach(\label{lne:forEachLoopDeletion}){$(a,b) \in \Pi_{W_{X,l} \mapsto G}^{-1}(e)$}{
    	            Increment $\br_{X,l}(a)$ and $\br_{X,l}(b)$; Delete $(a,b)$ from $W_{X,l}$. 
    	        }
        	}
        	\lElseIf(\label{lne:uEncodesIns}){$u$ encodes self-loop insertion at vertex $x$}{
        	     Increase $\br_{X,l}(x)$ by $2$.\label{lne:addMassBcInsertion}
        	}\ElseIf(\label{lne:uEncodesSplit}){$u$ encodes a vertex split of $v$ with new vertex $v'$}{
        	    Let $W_{X,l}$ be the graph over the vertex set $X$ that now includes $v'$ and with the edge set obtained by remapping all $xy$-paths in $\Pi_{W_{X,l} \mapsto G}$ to $(x,y)$ edges in $W_{X,l}$.\\
        	    $\br_{X,l}(v) \gets \br_{X,l}(v) + \deg_{W_{X,l}}(v'); \br_{X,l}(v') \gets \deg_{W_{X,l}}(v'); \bgamma_{X}(v') \gets 0$.\label{lne:addALittleExtraSplit}\\
        	     \ForEach{$e=(x,y) \in E(W_{X,l})$ where $v' \in \Pi_{W_{X,l} \mapsto G}(e)$}{
        	        Increment $\br_{X,l}(x)$ and $\br_{X,l}(y)$; Delete $e$ from $W_{X,l}$.\label{lne:changeW}
        	    }
        	}
        	\While(\label{lne:whileLoopCappingCapacitiies}){there exists $v \in V$ with $\deg_{W_{X,l}}(v) + \br_{X,l}(v) > \frac{ \deg_{G[X]}(v)}{\psi_l}$}{
        	    \ForEach{edge $(u,v)$ or $(v,u)$ in $E(W_{X,l})$}{
        	        $\br_{X,l}(u) \gets \br_{X,l}(u) + 1$; Remove the edge from $W_{X,l}$.\label{lne:changeW2}
        	    }
        	    $\br_{X,l}(v) \gets \deg_{G[X]}(v)/\psi_l$.
        	}
        	
        }
    }
}

\Proc{$\textsc{Update}(t)$}{
    \lIf{$t > 0$}{Invoke $\textsc{ApplyUpdate}$ to the $t$-th update.\label{lne:firstApplyUpdate}}
    \While(\label{lne:mainWhileLoop}){$\exists X \in \mathcal{X}$ where $\exists l \in [L_{max}]$ with $\|\br_{X,l}\|_1 \geq \frac{\psi_{l}}{8} |E(G[X])|^{l/L_{max}}$}{
        Let $X$ and $l$ be such that the while condition holds for them and $l$ is the maximum integer for which the condition holds.\label{lne:chooseXandL}\\
        \If{$l = L_{max}$}{
            Run $\textsc{CutOrEmbed}(G[X], \phi,  \frac{\psi_{L_{max}}}{16} |E(G[X]|)$ and if it returns a cut, let this cut be denoted by $S$; otherwise, set $W_{X,l'}, \br_{X,l'}$ equal to the returned witness and $\br$ vector as specified in \Cref{thm:cutMatching} for each $l' \in \{0, 1, \ldots, L_{max}\}$ and $\bgamma_X= \mathbf{deg_{G[X]}}$.\label{lne:newWitness}
        }\Else{
            Run procedures $\textsc{PruneOrRepair}(G[X],\br_{X, l +1} , W_{X, l +1}, \Pi_{W_{X, l +1} \mapsto G[X]}, \phi, \psi_{l+1}, \frac{\psi_{l}}{32}|E(G[X])|^{l/L_{max}})$, 
            $\textsc{PruneOrRepair}(\overleftarrow{G}[X],\br_{X, l +1}, W_{X, l +1}, \Pi_{W_{X, l +1} \mapsto \overleftarrow{G}[X]}, \phi, \psi_{l+1}, \frac{\psi_{l}}{32}|E(G[X])|^{l/L_{max}})$; if either of them returns a cut, let that cut be stored in $S$; otherwise let the witnesses and $\br$ vectors returned be denoted by $(W_1, \br_1)$ and $(\overleftarrow{W_2}, \br_2)$; set $W_{X,l}$ to $W_1 \cup W_2$ and $\br_{X,l}$ to $\br_1 + \br_2$.\label{lne:computeWitnessViaHigherLvlExpander}\label{lne:newWitness2}
            
        }
        
        \If(\label{lne:elseSplitWitness}){a cut $S$ was returned}{
                Add the smaller set of edges $E_{G}(X \setminus S, S)$ or $E_{G}(S, X \setminus S)$ to $R$.\label{lne:addSmallerSideToR}\\ \lForEach(\label{lne:forEachRemoveSparseCut}){edge $e \in E_{G}(X \setminus S, S) \cup  E_{G}(S, X \setminus S)$}{
                Delete $e$ via $\textsc{ApplyUpdate}$.\label{lne:artificialUpdates}
            }
            Replace $X$ in $\mathcal{X}$ by $S$ and $X \setminus S$.\label{lne:updateXinExpDecomp}\\
            \ForEach{$l \in \{0, 1, \ldots, L_{max}\}$ and $X' \in \{ S, X \setminus S\}$}
            {
                Let $W_{X',l}$ be assigned the induced graph $W_{X,l}[X']$; let  $\br_{X', l}$ be the vector $\br_{X,l}$ restricted to the set $X'$; let $\bgamma_{X',l}$ be the vector $\bgamma_{X}$ restricted to $X'$.\label{lne:induceOnSubcluster}\\
            }
        }
    
    }
}
\end{algorithm}

\paragraph{Update.}

The update algorithm given in \Cref{alg:updateAlgo} consists of an utility procedure $\textsc{ApplyUpdate}$ and the main procedure $\textsc{Update}$. The procedure $\textsc{ApplyUpdate}$ handles intermediate updates to the low-level data structures during the processing of an update to the graph $G$. The procedure $\textsc{Update}$ computes the new expander decomposition after executing an update to the graph. Again, we believe that the procedures are best understood by analyzing them.

\paragraph{Analysis ($\textsc{ApplyUpdate}$).} We start by arguing about the procedure $\textsc{ApplyUpdate}$ which processes updates to $G$ and forwards them to the witness graphs. 

\begin{claim}\label{clm:applyUpdateWorksWell}
For any invocation of procedure $\textsc{ApplyUpdate}$, for any cluster $X \in \mathcal{X}$, and level $l \in [L_{max}]$, if
$\min\{|E_{W_{X,l}}(S, \overline{S}) + \br_{X,l}(S)|, |E_{W_{X,l}}(\overline{S}, S) + \br_{X,l}(S)|\} \geq \psi_l (\vol_{W_{X,l}}(S) + \br_{X,l}(S))$
whenever $\bgamma_X(S) \leq \bgamma_X(\overline{S})$ where $\overline{S}= X \setminus S$ holds before the invocation, then it also holds after the invocation for the updates in $W_{X,l}$ and $\br_{X,l}$.
\end{claim}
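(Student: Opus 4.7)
My plan is to prove the claim by decomposing a single invocation of $\textsc{ApplyUpdate}$ into atomic operations and verifying the expansion bound is preserved by each. Inspecting \Cref{alg:updateAlgo}, every atomic step is one of:
\begin{itemize}
\item[(I)] remove an edge $(a,b) \in W_{X,l}$ and simultaneously add $1$ to $\br_{X,l}$ at one or both of its endpoints;
\item[(II)] add a nonnegative integer $c$ to $\br_{X,l}(x)$ at a single vertex $x$;
\item[(S)] the vertex-split step, comprising the introduction of $v'$ with $\bgamma_X(v') \gets 0$, the relabeling of $W$-edges incident to $v$ whose $G$-path now lands at $v'$, and the updates $\br_{X,l}(v) \gets \br_{X,l}(v) + k$ and $\br_{X,l}(v') \gets k$ where $k := \deg_{W_{X,l}}(v')$ after relabeling.
\end{itemize}
The deletion branch (Lines \ref{lne:uEncodesDel}--\ref{lne:forEachLoopDeletion}), the self-loop branch (Line \ref{lne:uEncodesIns}), and the degree-capping while-loop (Line \ref{lne:whileLoopCappingCapacitiies}) are all sequences of (I)'s and (II)'s, and the split branch is an (S) followed by a sequence of (I)'s (the deletions in Line \ref{lne:changeW}).

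I would then verify that each atomic step preserves the inequality for every cut $(S,\overline S)$ with $\bgamma_X(S)\le \bgamma_X(\overline S)$. For (I), a four-case split on whether $a,b$ sit in $S$ or $\overline S$ shows that both $|E_{W_{X,l}}(S,\overline S)| + \br_{X,l}(S)$ and $|E_{W_{X,l}}(\overline S, S)| + \br_{X,l}(S)$ do not decrease, while $\vol_{W_{X,l}}(S) + \br_{X,l}(S)$ does not increase: the unit removed from the relevant edge count is exactly matched by the unit added to $\br_{X,l}$ at the in-$S$ endpoint, and the unit lost from $\vol_{W_{X,l}}(S)$ is absorbed by the same $\br_{X,l}$ increment. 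Hence the ratio cannot shrink. For (II), if $x \in S$ then both numerators and the denominator grow by $c$, and the inequality is preserved since $\psi_l \leq 1$; if $x \in \overline S$ the $S$-side is untouched. Moreover, $\bgamma_X$ is modified only in (S) by lowering $\bgamma_X(v')$ to $0$, so the family of admissible cuts does not shrink.

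The main obstacle is step (S), since a single edge relabeling can destroy a crossing on one side without creating a compensating one, and because the denominator can grow in the middle of (S). I would couple each new cut $(S, \overline S) \subseteq X \cup \{v'\}$ with its pre-split counterpart $\hat S := S \cap X$, which is admissible since $\bgamma_X(v')=0$ gives $\bgamma_X(S) = \bgamma_X(\hat S)$. A case split on whether $v,v'$ land on the same or opposite sides of $(S,\overline S)$ then shows: in the same-side cases the relabeling leaves crossings unchanged and only shifts $\vol_{W_{X,l}}$ within that side, and together with the $2k$ units added to $\br_{X,l}$ on that side, both numerators and the denominator grow by exactly $2k$, preserving the inequality by $\psi_l \leq 1$; in the opposite-side cases the relabeling perturbs crossings and $\vol_{W_{X,l}}$ by at most $k$ on each side, but the $k$ units placed on $v$ and the $k$ units placed on $v'$ absorb the perturbations into both numerators, and when additionally combined with the (I) deletions of the $k$ relabeled edges in Line \ref{lne:changeW} (whose $G$-paths necessarily traverse $v'$), the denominator ends up growing by exactly $2k$ while each numerator grows by at least $2k$, again preserving the inequality. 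Any remaining (I) deletions in Line \ref{lne:changeW} are ordinary atomic steps. Composing the per-step guarantees across the invocation yields the claim.
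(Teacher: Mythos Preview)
Your atomic-step strategy is sound for the deletion, self-loop, and split branches, and it is in spirit the same argument the paper gives. The gap is in the degree-capping while-loop starting at \Cref{lne:whileLoopCappingCapacitiies}.

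There, for the vertex $v$ being capped, each incident edge $(u,v)$ or $(v,u)$ is removed and only $\br_{X,l}(u)$ is incremented, never $\br_{X,l}(v)$. Your step (I) with a single-endpoint increment does \emph{not} satisfy the property you claim for it: if $v\in S$ and $u\in\overline S$, then removing $(v,u)$ drops $|E_{W_{X,l}}(S,\overline S)|$ by $1$ while $\br_{X,l}(S)$ is unchanged, so $|E_{W_{X,l}}(S,\overline S)|+\br_{X,l}(S)$ strictly decreases. Your sentence ``the unit removed from the relevant edge count is exactly matched by the unit added to $\br_{X,l}$ at the in-$S$ endpoint'' silently assumes the in-$S$ endpoint is the one receiving the increment, which is false precisely when the capped vertex $v$ lies in $S$. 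Moreover, the final assignment $\br_{X,l}(v)\gets \deg_{G[X]}(v)/\psi_l$ can be a strict \emph{decrease} (the while-condition only says $\deg_{W_{X,l}}(v)+\br_{X,l}(v)>\deg_{G[X]}(v)/\psi_l$), so it is not an instance of your step (II) either.

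The paper patches exactly this hole by a different decomposition: it first proves the inequality for the set $S\setminus S^{iso}$, where $S^{iso}$ collects the vertices of $S$ that are isolated in $W^{NEW}_{X,l}$. Any vertex entering the capping loop becomes isolated, so for $S\setminus S^{iso}$ every capped vertex lies in the complement, which forces the single increment in \Cref{lne:changeW2} to land on the $S$-side whenever a crossing edge is removed, and makes the reassignment of $\br_{X,l}(v)$ irrelevant to $\br_{X,l}(S)$. One then reattaches the isolated vertices of $S^{iso}$; each contributes nothing to $\vol_{W^{NEW}_{X,l}}(S)$ or the crossing counts and adds the same amount $\br^{NEW}_{X,l}(v)$ to both numerators and the denominator, which preserves the inequality since $\psi_l\le 1$. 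Your atomic framework can be salvaged by adopting this isolated-vertex reduction before running the per-step argument; without it, the capping loop is a genuine counterexample to the invariance of (I).
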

\begin{proof}
First, we observe that if the update $u$ satisfies the condition of the if-statement in \Cref{lne:initialIfCheck}, then no changes are executed and we can therefore ignore the case.

Otherwise, the update $u$ affects a cluster $X \in \mathcal{X}$. We use superscripts $OLD$ and $NEW$ to denote variables in the state just before the invocation and just after the invocation of $\textsc{ApplyUpdate}$ respectively. Let us consider any cut $(S, \overline{S})$ where $\bgamma(S) \leq \bgamma(\overline{S})$. We define $S^{iso} \subseteq S$ to be the vertices in $S$ that are isolated in $W^{NEW}_{X,l}$ after $\textsc{ApplyUpdate}$. 

Let us first analyze the case when $S^{iso} = \emptyset$. In this case, we have that no vertex $v \in S$, has entered the while-loop starting in \Cref{lne:whileLoopCappingCapacitiies}. Let us do a case analysis for the udpate types:
\begin{itemize}
    \item \uline{For $u$ edge deletion:} Since no $v \in S$ entered the while-loop in \Cref{lne:whileLoopCappingCapacitiies}, we have in this case that $|E_{W^{OLD}_{X,l}}(S, \overline{S})| + \br^{OLD}_{X,l}(S) \leq |E_{W^{NEW}_{X,l}}(S, \overline{S} )| + \br^{NEW}_{X,l}(S)$ since for each deleted edge from $E_{W^{OLD}_{X,l}}(S, \overline{S})$, the procedure places one unit to $\br_{X,l}$ on an entry in $S$ (and one on an entry in $\overline{S}$). Further, we have $\vol_{W_{X,l}^{NEW}}(S) + \br_{X,l}^{NEW}(S) = \vol_{W_{X,l}^{OLD}}(S) + \br_{X,l}^{OLD}(S)$ since we add to the vector $\br_{X,l}$ what is lost in volume and do not enter the final while-loop by assumption. Thus
    \begin{align*}
    |E_{W^{NEW}_{X,l}}(S, \overline{S} )| + \br^{NEW}_{X,l}(S) &\geq
     |E_{W^{OLD}_{X,l}}(S, \overline{S} )| + \br^{OLD}_{X,l}(S) \\ &\geq \psi_l (\vol_{W_{X,l}^{OLD}}(S) + \br_{X,l}^{OLD}(S)) 
    \geq \psi_l (\vol_{W_{X,l}^{NEW}}(S) + \br_{X,l}^{NEW}(S)).
    \end{align*}
    \item \uline{For $u$ self-loop insertion:} This case can be verified straight-forwardly.
    \item \uline{For $u$ a vertex split:} 
    Let vertex $v$ be split into $v$ and $v'$. Recall that we assume that no vertex $s \in S$ is isolated in $W^{NEW}_{X,l}$. Thus, since we delete all edges that have $v'$ on their embedding path to obtain $W_{X,l}^{NEW}$, we must have that $v' \not\in S$.
    
    The remaining case analysis can be made closely to the argument for $u$ being an edge deletion when paying special attention to the case where $v$ is in $S$ and one has to use that we add $\vol_{W_{X,l}}(v')$ to $\br_{X,l}(v)$.
\end{itemize}

To prove for the case where $S^{iso} \neq \emptyset$, note that we can use the proof above to show that the claim holds for the set $S \setminus S^{iso}$. It then remains to observe that adding a set of isolated vertices to any set $S'$ that has the properties of our claim, does not invalidate the claim as it only adds mass to the $\br_{X,l}$ vector. The claim in its full generality follows.
\end{proof}

For the rest of the analysis, we often look at the graph maintained internally by our data structure which is defined below.

\begin{definition}[Maintained Graph]
At any point in the algorithm, we let $G_U$ denote the graph $G$ after applying all the updates to $G$ on which the procedure $\textsc{ApplyUpdate}$ was run (also the ones issued by the algorithm in \Cref{lne:artificialUpdates}).
\end{definition}
\begin{remark}
Technically, the definition of $G_U$ is not well-defined for the times spent within the procedure $\textsc{ApplyUpdate}$ but we avoid such ambiguities by only using $G_U$ when talking about times before or after such procedure calls.
\end{remark}

Observe that by the definiton above, we have at the end of each stage, i.e. after processing each the current update to $G$, that $G_U \subseteq G$ since we invoke $\textsc{ApplyUpdate}$ on each update to $G$ within the same stage (see \Cref{lne:firstApplyUpdate}). We start by proving the following rather simple structural claim.

\begin{claim}\label{clm:embeddingIsNicelyMaintained}
Before and after any invocation of $\textsc{ApplyUpdate}$, we have that for every $X \in \mathcal{X}$ and level $l \in [ L_{max}]$, the embedding $\Pi_{W_{X,l} \mapsto G}$ maps each edge $(u,v)$ in $W_{X,l}$ to a $u$-to-$v$ path in $G_U[X]$.
\end{claim}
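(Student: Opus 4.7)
The plan is to prove the claim by induction on the sequence of operations performed by the algorithm, establishing the stronger statement that the invariant (each edge $(u,v)\in W_{X,l}$ has $\Pi_{W_{X,l}\mapsto G}(u,v)$ a $u$-to-$v$ path in $G_U[X]$) holds at every well-defined moment during the execution, not only across $\textsc{ApplyUpdate}$ boundaries. The base case is immediate: after initialization every $W_{X,l}$ is the empty graph, so the invariant holds vacuously. For the inductive step, I would separately analyze (i) what happens inside a single call to $\textsc{ApplyUpdate}$, and (ii) what happens between such calls while $\textsc{Update}$ is running.

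For (i), I would do a case analysis on the update type $u$ at a fixed level $l$. If $u$ is the deletion of an edge $e \in E(G_U[X])$, the for-loop of \Cref{lne:forEachLoopDeletion} deletes from $W_{X,l}$ exactly those edges whose embedding path contains $e$, so surviving edges still map to paths entirely in $G_U[X]$ with $e$ removed. If $u$ is a self-loop insertion, neither $W_{X,l}$ nor any embedding path changes. If $u$ is a vertex split of $v$ into $v$ and $v'$, the remapping step first relabels the endpoints of each edge of $W_{X,l}$ to match where its underlying $\Pi_{W_{X,l} \mapsto G}(e)$ path actually starts and ends in the new $G_U[X]$ (which now contains $v'$); then the loop of \Cref{lne:changeW} removes every edge whose embedding path touches $v'$. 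The while-loop of \Cref{lne:whileLoopCappingCapacitiies} only removes edges, never modifying the paths of the survivors. In all cases, each surviving edge still maps to a valid path in the updated $G_U[X]$.

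For (ii), there are only two kinds of non-trivial transitions between successive $\textsc{ApplyUpdate}$ invocations within $\textsc{Update}$. Whenever $\textsc{CutOrEmbed}$ is invoked in \Cref{lne:newWitness} or $\textsc{PruneOrRepair}$ is invoked in \Cref{lne:computeWitnessViaHigherLvlExpander} and a new witness is returned, the output comes with an embedding into the current $G[X]$ by the guarantees of \Cref{thm:cutMatching} and \Cref{lma:asymPruning}; since no $\textsc{ApplyUpdate}$ call has happened in between, this $G[X]$ coincides with the current $G_U[X]$, so replacing $W_{X,l}$ by the new witness preserves the invariant at level $l$, while the other levels are untouched. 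The more delicate transition is cluster splitting in \Cref{lne:induceOnSubcluster}. By the time it is executed, the algorithm has just invoked $\textsc{ApplyUpdate}$ on every edge of $E_G(S,X\setminus S)\cup E_G(X\setminus S,S)$ via \Cref{lne:artificialUpdates}, so $G_U$ contains no such cut edge. By part (i) applied to those invocations, every surviving $(u,v) \in W_{X,l}$ still has its embedding path in the updated $G_U[X]$, and a path with both endpoints in the same side $X' \in \{S, X\setminus S\}$ cannot leave $X'$ because leaving $X'$ would require a cut edge that has been removed from $G_U$. Hence restricting to $W_{X,l}[X']$ preserves the invariant for each newly created cluster.

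The main obstacle is the vertex-split case of (i): the underlying edges of each embedding path in $G$ are unchanged as edges, but their incidence changes when the adversary reassigns edges from $v$ to $v'$. I would have to argue carefully that after the remapping in the body of \Cref{lne:uEncodesSplit}, the endpoints of each $W_{X,l}$-edge really do match the actual endpoints of its embedding path in the post-split $G_U[X]$, and that after the deletion loop in \Cref{lne:changeW}, every surviving edge has its entire path contained in $G_U[X]$ with the correctly identified endpoints. The remainder of the proof is essentially bookkeeping, verifying for each line of \Cref{alg:updateAlgo} which edges and vertices of $W_{X,l}$ and $G_U[X]$ it can possibly affect and checking that no such change breaks the invariant.
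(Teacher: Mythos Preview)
Your proposal is correct and follows essentially the same induction-over-operations approach as the paper's own proof; in fact you are more thorough, since you explicitly treat the self-loop case, the capping while-loop, and the cluster-splitting step in \Cref{lne:induceOnSubcluster}, all of which the paper glosses over. One small imprecision: your justification that the freshly computed witness embeds into $G_U[X]$ ``since no $\textsc{ApplyUpdate}$ call has happened in between'' is not quite the right reason---the actual point (also used by the paper) is that every artificial deletion issued in \Cref{lne:artificialUpdates} removes an inter-cluster edge, so for any current cluster $X$ one always has $G_U[X]=G[X]$, regardless of earlier $\textsc{ApplyUpdate}$ calls.
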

\begin{proof}
We note that by our initialization procedure, before the first invocation of $\textsc{Update}(t)$ (i.e. when $t = 0$), the claim holds. Next, we note that during each invocation of $\textsc{ApplyUpdate}$, if $u$ encodes an edge deletion, we remove all paths from $W_{X,l}$ that are embed into the affected edge (see the if-case in  \Cref{lne:uEncodesDel}). If $u$ encodes a vertex split of $v$ splitting of $v'$, then each embedding path that went through $v$ by having an edge $(x,v)$ entering and an edge $(v,y)$ leaving might no longer be a real path if exactly one of the endpoints is mapped to $v'$ instead of $v$. But in this case $v'$ is on the embedding path, and it is exactly such embedding paths that are removed in the if-case in \Cref{lne:uEncodesSplit}. Finally, it is easy to see that whenever we compute an entirely new witness and witness embedding (see \Cref{lne:newWitness2}), the embeddings are found in the current graph $G_U[X] = G[X]$.
\end{proof}

\paragraph{Analysis (Correctness).} Before we can argue about correctness, let us make the following definitions.

\begin{definition}[Subcluster]
Given a vertex $x$ in $G$ at any stage $t$, we say that it \emph{originates} from a vertex $y$ at an earlier stage $t' \leq t$ in $G$ if $x$ was obtained from a sequence of adversarial vertex splits applied to $y$. Given a cluster $X \in \mathcal{X}$ at any stage $t$ and a cluster $Y \in \mathcal{X}$ at a later stage $t' \geq t$, we say $Y$ is a \emph{subcluster} of $X$ if all vertices in $Y$ originate from vertices in $X$.
\end{definition}

\begin{definition}\label{def:xInit}
For any cluster $X \in \mathcal{X}$ and level $l$, let $X^{INIT, l}$ be the most recent subcluster of $X$ such that \Cref{lne:newWitness} or \Cref{lne:newWitness2} was executed on $X^{INIT, l}$ and $l$ and the witness $W_{X^{INIT, l}, l}$ and vector $\br_{X^{INIT, l}, l}$ were (re-)initialized during the execution of this line.
\end{definition}

We can now argue that \Cref{alg:updateAlgo} correctly maintains witness graphs. 

\begin{invariant}\label{inv:mainInv}
Every time the condition of the while-loop starting in \Cref{lne:mainWhileLoop} is evaluated, we have for every $X \in \mathcal{X}$ and level $l \in [L_{max}]$, that
$W_{X,l}$ is a $(\infty, \phi, \psi_l)$-witness of $(G[X], \br_{X,l})$ with respect to $\bgamma_X$. Whenever $W_{X,l}$ is (re-)initialized, we further have that it is a $(\infty, \phi, 2\psi_l)$-witness of $(G[X], \br_{X,l})$ with respect to $\bgamma_X$.
\end{invariant}
\begin{proof}
We prove the invariant by induction over the times that \Cref{lne:mainWhileLoop} is evaluated. 

\uline{Base case:} Before the first time that the while-loop condition is evaluated, we have by our initialization procedure that $\mathcal{X} = \{V\}$ and that for each $l$, vector $\br_{V,l} = \bDeg$ (also since \Cref{lne:firstApplyUpdate} is skipped when we invoke $\textsc{Update}(0)$). Thus, we trivially have that $W_{V, l} = (V, \emptyset)$ is a $(\infty, \phi, \psi_l)$-witness of $(G, \br_{V,l})$ w.r.t. $\bgamma_V = \bDeg$, which establishes the base case.

\uline{Inductive Step: } For any cluster $X \in \mathcal{X}$ and $l$, let $X^{INIT,l}$,  $W_{X^{INIT,l}, l}$, and $\br_{X^{INIT,l}, l}$ be defined as in \Cref{def:xInit}. 

Consider first the case that $W_{X^{INIT,l}, l}$ was (re-)initialized after the last time that the invariant held when the while-loop condition was executed. Then, in between these two times, a single iteration of the while-loop in \Cref{lne:mainWhileLoop} is performed on exactly $X = X^{INIT,l}$. We distinguish by cases:
\begin{itemize}
    \item \underline{If $W_{X,l}$ was (re-)initialized in \Cref{lne:newWitness}:} Then by \Cref{thm:cutMatching}, we have that $W_{X, l}$ is a $(\infty, \phi, 2\psi_{L_{max}})$-witness of $(G[X], \br_{X,l})$ with respect to $\mathbf{deg_{G[X]}}$ where $\psi_{L_{max}} \geq \psi_{l}$. Since at the same time, the algorithm (re-)sets $\bgamma_X = \mathbf{deg_{G[X]}}$, the invariant follows.
    
    \item \uline{Otherwise:} we have that $W_{X,l}$ was (re-)initialized in \Cref{lne:newWitness2}. But this implies that $W_{X, l+1}$ was not (re-)initialized since the last time that the while-loop condition was executed; and clearly also $G$ and $\bgamma_X$ were not changed since then. Using further the maximality of $l$ (see \Cref{lne:chooseXandL}), thus, we can use the induction hypothesis to argue that $W_{X, l+1}$ is a $(R, \phi, \psi_{l+1})$-witness of $(G, \br_{X, l+1})$ w.r.t. $\bgamma_X$ for $R \leq \frac{\psi_{l+1}}{8} |E(G[X])|^{(l+1)/L_{max}} \leq \frac{\psi_{l+1}}{8} |E(G[X])|$. 
    
    Thus, the assumptions of \Cref{lma:asymPruning} are satisfied when the algorithm invokes the two procedures executed in \Cref{lne:newWitness2} to obtain $W_{X,l}$ and $\br_{X,l}$, return witnesses $W_1$ and $\overleftarrow{W_2}$ along with vectors $\br_1$ and $\br_2$. By \Cref{lma:asymPruning}, $W_1$ (analogously $\overleftarrow{W_2}$) is a $(\infty, \phi, \frac{\psi_{l+1}^2}{6})$-out-witness of $(G[X], \br_1)$ w.r.t. $\bgamma_X$ (analogously $(\overleftarrow{G[X]}, \br_2)$). 
    
    It remains to verify that $W_{X,l} = W_1 \cup W_2$ is a $(\infty, \phi, \psi_{l}/2)$-witness of $(G, \br_1 + \br_2)$ w.r.t. $\bgamma_X$. We note that the witness properties given in \Cref{def:alphaWitness} are trivial to prove except for Property \ref{prop:alphaWitness2} which we next prove carefully.
    
    For convenience, we define $\widehat{\psi} = \frac{\psi_{l+1}^2}{6}$. Consider first any cut $(S, \overline{S})$ where  $\bgamma_X(S) \leq \bgamma_X(\overline{S})$ (the vector we use in \Cref{lma:asymPruning}). By properties of $W_1$, we have that $|E_{W_1}(S, \overline{S})| + \br_1(S) \geq \widehat{\psi}(\vol_{W_1}(S) + \br_1(S))$. But note that by the properties of $W_1$ and $W_2$, we have 
    \[
        \vol_{W_{X,l}}(S) + \br_{X,l}(S) \leq (\vol_{W_1}(S) + \br_1(S)) (1+ \frac{1}{\widehat{\psi}}) \leq \frac{2}{\widehat{\psi}} \cdot (\vol_{W_1}(S) + \br_1(S)). 
    \]
    Combining these insights, we can conclude that 
    \[|E_{W_{X,l}}(S, \overline{S})| + \br_{X,l}(S) \geq \frac{\widehat{\psi}^{2}}{2} \left(\vol_{W_{X,l}}(S) + \br_{X,l}(S)\right).\]
    Using same analysis on $W_2$ establishes that 
    \[|E_{W_{X,l}}(\overline{S}, S)| + \br_{X,l}(S) \geq \frac{\widehat{\psi}^{2}}{2} \left(\vol_{W_{X,l}}(S) + \br_{X,l}(S)\right).\]
    
    Using that $\frac{\widehat{\psi}^{2}}{2} = \frac{\psi_{l+1}^4}{72} = 2\psi_l$, we can therefore conclude that $W_{X,l}$ is a $(\infty, \phi, \psi_l)$-witness of $(G[X], \br_{X,l})$.
\end{itemize}

It remains to argue for the invariant in the case where $W_{X^{INIT,l}, l}$ was not (re-)initialized after the last time that the invariant held when the while-loop condition was executed). 

We consider the following cases:
\begin{itemize}
    \item \uline{If a new stage has started, after the last time that the invariant held:} in this case an adversarial update $u$ was applied to $G$. We note that $\textsc{ApplyUpdate}$ preserves the cut-expansion properties by \Cref{clm:applyUpdateWorksWell}, and the embedding property follows from \Cref{clm:embeddingIsNicelyMaintained}. Further, it is not hard to see that the quantity $\deg_{G[X]}(v) + \br_{X,l}(v)$ does not decrease due to invoking procedure $\textsc{ApplyUpdate}$ except if the quantity exceeds the degree of $v$ in $G[X]$ by a large quantity in which case it is normalized (in the while-loop starting in \Cref{lne:whileLoopCappingCapacitiies}) which provides us with the degree preserving property of witness $W_{X,l}$.
    
    \item \uline{If no new stage has started:} then the underlying graph $G[X]$ was not changed. The only possible change to the cluster $X$ is that it might have been undergoing changes due to the updates applied in \Cref{lne:forEachRemoveSparseCut} and/or might have been induced. But note that we argued above that applying updates via $\textsc{ApplyUpdate}$ does not affect correctness, and it is not hard to verify that inducing does not affect correctness either since we induce in such a way that already now edge crosses between the newly induced clusters.
\end{itemize}
\end{proof}

\begin{corollary}
\label{lma:witnessCorrectlyMaintained}
At the end of every stage $t$, for any $X \in \mathcal{X}$ and level $l \in \{0,1,\ldots, L_{max}\}$, 
$W_{X,l}$ is a $(\frac{\psi_{l}}{8}|E(G[X])|^{l/L_{max}} - 1, \phi, \psi_l)$-witness of $(G[X], \br_{X,l})$ w.r.t. $\bgamma_X$.
\end{corollary}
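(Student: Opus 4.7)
The corollary follows essentially for free by composing Invariant~\ref{inv:mainInv} with the termination condition of the main while-loop in $\textsc{Update}(t)$. The plan is a two-step argument: use the invariant to get all the witness properties except the size bound on $\br_{X,l}$, then read off the size bound from the negation of the while-loop guard.

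First I would argue that the end of stage $t$ coincides with a point at which the while-loop condition on \Cref{lne:mainWhileLoop} was evaluated and returned false. Indeed, the only way $\textsc{Update}(t)$ returns is by exiting this while-loop, and no modification to $\mathcal{X}$, $W_{X,l}$, $\br_{X,l}$, $\bgamma_X$ or $G$ occurs between that final check and the end of the procedure. Therefore the state at the end of stage $t$ is identical to the state at that last evaluation. Invariant~\ref{inv:mainInv} then guarantees that for every $X \in \mathcal{X}$ and every $l \in [L_{max}]$, $W_{X,l}$ with its embedding $\Pi_{W_{X,l} \mapsto G[X]}$ satisfies Properties \ref{prop:alphaWitness4}, \ref{prop:alphaWitness2}, and \ref{prop:alphaWitness3} of \Cref{def:alphaWitness} with expansion parameter $\psi_l$ and congestion $\tfrac{1}{\psi_l\phi}$, relative to $(G[X], \br_{X,l})$ and $\bgamma_X$. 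The only property still to be verified is Property~\ref{prop:alphaWitness1}, the bound $\|\br_{X,l}\|_1 \leq R$.

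Next I would use the failure of the while-loop guard. Since the loop terminated, for every $X \in \mathcal{X}$ and every $l \in [L_{max}]$ we have $\|\br_{X,l}\|_1 < \tfrac{\psi_l}{8}|E(G[X])|^{l/L_{max}}$. Because $\br_{X,l} \in \mathbb{N}_{\geq 0}^{V(G)}$, the left-hand side is a non-negative integer, so this strict real-valued inequality yields the (real-valued) bound
\[
\|\br_{X,l}\|_1 \;\leq\; \tfrac{\psi_l}{8}|E(G[X])|^{l/L_{max}} - 1,
\]
which is exactly Property~\ref{prop:alphaWitness1} with $R = \tfrac{\psi_l}{8}|E(G[X])|^{l/L_{max}} - 1$. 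Combined with the previous paragraph, this establishes that $W_{X,l}$ is indeed an $(R, \phi, \psi_l)$-witness of $(G[X], \br_{X,l})$ with respect to $\bgamma_X$, which is the conclusion of the corollary.

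I do not anticipate any real obstacle; the only thing worth being careful about is the small bookkeeping subtlety that the $G[X]$ appearing in the corollary (and in the while-loop guard) is the current graph at the end of stage $t$, which matches the $G[X]$ used by the invariant at the moment of the final guard evaluation, since neither $G$ nor the partition $\mathcal{X}$ change between that evaluation and the return of $\textsc{Update}(t)$.
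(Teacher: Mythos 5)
Your proposal is correct and is essentially identical to the paper's own (one-line) proof: the paper also deduces the corollary from \Cref{inv:mainInv} together with the failure of the while-loop condition in \Cref{lne:mainWhileLoop} at termination. The only (shared) pedantic point is that an integer strictly below a non-integer bound $B$ need only satisfy $\leq \lceil B\rceil - 1$ rather than $\leq B-1$, but this is exactly the reading the paper itself intends for the ``$-1$'' in the statement, so there is no gap relative to the paper.
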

\begin{proof}
Assuming that the algorithm finishes in finite time, we have that after each while-loop the claim holds by the while-loop condition and \Cref{inv:mainInv}.
\end{proof}

Overall correctness follows by \Cref{lma:witnessCorrectlyMaintained} for all $X \in \mathcal{X}$ and level $0$ combined with \Cref{clm:conductanceForRNull}.

\paragraph{Analysis (Set $R$).} From the algorithm, it is clear that $R$ is a set that only grows over time since the only place in the algorithm where edges are added to $R$ is in \Cref{lne:addSmallerSideToR}. We further note that whenever we add edges to $R$ before we decompose $X$ into $S$ and $X \setminus S$, by \Cref{lma:asymPruning}, we add a batch of at most $\phi \min\{ \vol_{G[X]}(S) + \br_{X,l}(S), \vol_{G[X]}(X \setminus S) + \br_{X,l}(X \setminus S)\} \leq \frac{2\phi}{\psi_{l}} \min\{ \vol_{G[X]}(S), \vol_{G[X]}(X \setminus S)\}$ edges where the inequality follows from \Cref{lma:witnessCorrectlyMaintained}. Thus, we can charge the cut to the edges on the smaller side. Since each edge appears at most $O(\log m)$ times on the smaller side of the cut, we can bound the total size of $R$ by $\tilde{O}\left(\frac{\phi}{\psi_{CMG}\psi_{0}} m\right)$.

\paragraph{Analysis (Run-time).} Finally, let us argue about the total run-time of the algorithm.

\begin{claim}\label{clm:increaseInVectors}
The total amount that the vectors $\br_{X,l}$ (over all $X$ and $l$) are increased in the procedure $\textsc{ApplyUpdate}$ is $\tilde{O}(m \cdot L^2_{max} \cdot \frac{1}{\phi \psi_{0}})$.
\end{claim}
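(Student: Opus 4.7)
\emph{Plan of attack.} There are four sources of positive change to the vectors $\br_{X,l}$ inside $\textsc{ApplyUpdate}$: (i) edge deletions at \Cref{lne:forEachLoopDeletion}, which add $2\,|\Pi^{-1}_{W_{X,l}\mapsto G}(e)|$ units per level; (ii) self-loop insertions at \Cref{lne:addMassBcInsertion}, adding $2$ units per level; (iii) vertex-splits at \Cref{lne:addALittleExtraSplit}--\Cref{lne:changeW}, adding $O(\deg_{W_{X,l}}(v')) + 2\cdot|\{e\in E(W_{X,l}) : v'\in\Pi_{W_{X,l}\mapsto G}(e)\}|$ units per level; and (iv) the ``normalizing'' while-loop at \Cref{lne:whileLoopCappingCapacitiies}, which for each triggered vertex $v$ dumps $\deg_{W_{X,l}}(v)$ units onto its neighbors and resets $\br_{X,l}(v)$ to $\deg_{G[X]}(v)/\psi_l$. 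I will bound (i)--(iii) directly using the witness congestion guarantee, and handle (iv) by amortization against the events in (i)--(iii).

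\emph{Bounds for (i)--(iii).} By \Cref{inv:mainInv}, before each call to $\textsc{ApplyUpdate}$ we have $\mathrm{cong}(\Pi_{W_{X,l}\mapsto G})\le 1/(\psi_l \phi)\le 1/(\psi_0 \phi)$, and also $\deg_{W_{X,l}}(v)\le \deg_{G[X]}(v)/\psi_l$. Hence a single deletion contributes at most $2/(\psi_0\phi)$ per level, a self-loop insertion $2$ per level, and a split of $v'$ at most $O(\deg_G(v')/(\psi_0\phi))$ per level. Summing across the $L_{max}+1$ levels gives $O(L_{max}/(\psi_0\phi))$ per deletion or insertion, and $O(\deg_G(v')\, L_{max}/(\psi_0\phi))$ per split. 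I next bound the total number of each update type routed to $\textsc{ApplyUpdate}$: the $\tilde O(m)$ adversarial updates plus the artificial deletions issued in \Cref{lne:artificialUpdates}. For the artificial ones, when cluster $X$ is split by $(S,X\setminus S)$ with $\vol_{G[X]}(S)\le \vol_{G[X]}(X\setminus S)$, the edges removed satisfy $|E_G(S,\overline S)|+|E_G(\overline S,S)|\le \vol_{G[X]}(S)$ (every such edge contributes a unit to $\vol(S)$), and the standard ``smaller-side'' amortization shows $\sum \vol_{G[X]}(S)=\tilde O(m)$. The same telescoping over the split tree (each edge is moved to a new $v'$ at most $O(\log m)$ times because $\deg(v')\le \deg(v)/2$ by the update assumption) yields $\sum_{\text{splits}}\deg_G(v') = \tilde O(m)$. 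Combining these with the per-update bounds gives $\tilde O(m\, L_{max}/(\psi_0\phi))$ mass from sources (i)--(iii).

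\emph{Handling (iv) by amortization.} The normalization loop is where I expect the main technical difficulty: a single trigger can dump up to $\deg_{G[X]}(v)/\psi_l$ mass onto neighbors, which in principle dwarfs the update that caused the trigger. The plan is to introduce the potential $\Phi_{X,l}=\sum_v(\deg_{W_{X,l}}(v)+\br_{X,l}(v))$ and observe that one normalization iteration at $v$ strictly decreases $\Phi_{X,l}$ (it zeroes $\deg_W(v)$ and caps $\br(v)$, the dumped $\deg_W(v)$ mass is exactly balanced by the loss of the incident edges, and the cap strictly drops the excess that triggered the loop). The net effect is that the extra mass added to the $\br$ vectors during normalization is at most a constant factor times the mass injected by the triggering (i)--(iii) event. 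Since $\Phi_{X,l}$ starts at a value bounded after recomputation by $\|\br_{X,l}\|_1 + 2|E(G[X])|/\psi_l = \tilde O(m/\psi_0)$, the total amount of mass processed across normalizations is within an $O(1)$ factor of the (i)--(iii) mass per recomputation epoch.

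\emph{Putting it together.} Summing (i)--(iv) across all recomputation epochs at each level and cluster, the total increase to all $\br_{X,l}$ is bounded by $\tilde O(m\, L_{max}/(\psi_0\phi))$ times the number of epochs that can be active. Because each of the $L_{max}+1$ levels can undergo $\tilde O(1)$ recomputation epochs per ``unit of work'' charged from the higher levels (cf.\ the telescoping batch sizes $m^{l/L_{max}}$ in the main while-loop), this accounting gives one extra factor of $L_{max}$, reaching $\tilde O(m\,L_{max}^2/(\psi_0\phi))$ as claimed. The most delicate part of the write-up is verifying that cascading normalizations across levels (where a split at level $l$ triggers further artificial updates and hence more normalizations at lower levels) does not inflate the bound beyond this factor, which is where the potential-function argument from the previous paragraph is essential.
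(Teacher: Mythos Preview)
Your treatment of sources (i)--(iii) is essentially the same as the paper's: bound the per-level contribution via the congestion of the witness embedding and the degree bound $\deg_{W_{X,l}}(v)\le \deg_{G[X]}(v)/\psi_l$, then use the standard ``smaller-side'' telescoping for vertex splits and for the artificial deletions in \Cref{lne:artificialUpdates}. That part is fine.

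The gap is in (iv). You correctly verify that your potential $\Phi_{X,l}=\sum_v(\deg_{W_{X,l}}(v)+\br_{X,l}(v))$ strictly decreases during one normalization iteration, but the amount it decreases by is only the \emph{overshoot} $(\deg_{W}(v)+\br(v))-\deg_{G[X]}(v)/\psi_l$, which can be as small as $1$. The mass actually dumped onto neighbours is $\deg_{W}(v)$, which can be as large as $\Theta(\deg_{G[X]}(v)/\psi_l)$. So the inequality ``extra mass added during normalization is at most a constant factor times the injected (i)--(iii) mass'' simply does not follow from the potential argument: a single unit of injected mass can trigger a normalization that dumps $\Theta(\deg_{G[X]}(v)/\psi_l)$ units. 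Your global potential cannot distinguish this.

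The paper's argument for (iv) uses a genuinely different idea: it exploits the per-vertex slack between re-initialization and the trigger threshold. By \Cref{inv:mainInv}, immediately after a (re-)initialization the witness is a $(\infty,\phi,2\psi_l)$-witness, so $\deg_{W}(v)+\br(v)\le \deg_{G[X]}(v)/(2\psi_l)$; the trigger requires $\deg_{W}(v)+\br(v)>\deg_{G[X]}(v)/\psi_l$. Hence either the left side grew by a constant factor or $\deg_{G[X]}(v)$ shrank by a constant factor since the last re-initialization. The paper then bounds the total per-vertex growth of $\deg_{W}(v)+\br(v)$ (only self-loop insertions and the split line \Cref{lne:addALittleExtraSplit} contribute) and the total per-vertex shrinkage of $\deg_{G[X]}(v)$, and charges each normalization's dumped mass $O(\deg_{G[X]}(v)/\psi_l)$ to one of these two budgets. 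This per-vertex accounting is what buys the bound, and it is absent from your proposal. Your final paragraph's appeal to ``recomputation epochs'' for the extra $L_{max}$ factor is also not how the bound arises; the second $L_{max}$ in the paper comes from the fact that the self-loop/split contributions to $\deg_W+\br$ are themselves accumulated across all $L_{max}+1$ levels.
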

\begin{proof}
We distinguish by updates. For edge deletions, we increase the vectors $\br_{X,l}$ by $2$ for each edge in $W_{X,l}$ embed into the edge deleted. Since we maintain $W_{X,l}$ to be a witness by \Cref{lma:witnessCorrectlyMaintained}, we conclude that there are at most $\frac{1}{\phi \psi_{l}} \leq \frac{1}{\phi \psi_{0}}$ such edges, and therefore the total contribution by all of the at most $m$ edge deletions is $O(m \cdot L_{max} \cdot \frac{1}{\phi \psi_{0}})$. Self-loop insertions increase vectors on each level by $2$ and therefore we have total increase $O(m \cdot L_{max})$ from self-loop insertions.

For vertex splits where $v$ is split into $v$ and $v'$, we add $2\vol_{W_{X,l}}(v') = O(\frac{1}{\psi_0} \vol_G(v'))$ directly to the vector entries of $v$ and $v'$ by  \Cref{lma:witnessCorrectlyMaintained}. Additionally, we remove all embedding paths through the vertex $v'$. But note that the number of such embedding paths by  \Cref{lma:witnessCorrectlyMaintained} can be at most $O(\vol_{G}(v') \cdot L_{max} \cdot \frac{1}{\phi \psi_{0}})$. But since each edge can be on the side of the vertex split with smaller volume, i.e. incident to $v'$, for at most $O(\log(m))$ times, we have that the total increase from vertex splits is bound by $\tilde{O}(m \cdot L_{max} \cdot \frac{1}{\phi \psi_{0}})$.

Finally, we account for increases in $\br_{X,l}$ vectors due to the while-loop starting in \Cref{lne:whileLoopCappingCapacitiies}. We start by observing that whenever a vertex $v$ is isolated in the while-loop in \Cref{lne:whileLoopCappingCapacitiies}, the amount that we increase the vector $\br_{X,l}$ (for $v \in X$) is upper bound by the current degree $\deg_{W_{X,l}}(v)$. By induction on the invocations of $\textsc{ApplyUpdate}$, we can bound $\deg_{W_{X,l}}(v)$ by $O(\deg_{G[X]}(v)/\psi_l)$.

But note that since we prove that immediately after the re-initialization of each $W_{X,l}$, we have that it is a $(\infty, \phi, 2\psi_l)$ witness of $(G, \br_{X,l})$ (see \cref{inv:mainInv}), we have that $\deg_{W_{X,l}}(v) + \br_{X,l}(v) \leq \deg_{G[X]}(v)/(2\psi_l)$. But since a vertex $v$ only gets isolated in \Cref{lne:whileLoopCappingCapacitiies} if $\deg_{W_{X,l}}(v) + \br_{X,l}(v) > \deg_{G[X]}(v)/\psi_l$, then either $\deg_{W_{X,l}}(v) + \br_{X,l}(v)$ has increased by a factor of at least $4/3$ or $\deg_{G[X]}(v)/\psi_l$ has decreased by factor at least $2/3$.

Let us first argue about the quantity $\deg_{W_{X,l}} + \br_{X,l}$. It is not hard to see that when edges are deleted from $W_{X,l}$ (either in \Cref{lne:changeW} or in \Cref{lne:changeW2}), the algorithm compensates by adding an additional unit to $\br_{X,l}$ at the endpoints of the deleted edge. Thus, $\deg_{W_{X,l}} + \br_{X,l}$ remains unchanged. However, the quantity $\deg_{W_{X,l}} + \br_{X,l}$ might be changed in \Cref{lne:addMassBcInsertion} or \Cref{lne:addALittleExtraSplit}. Both times, the quantity increases, in the former by $2$ in the coordinate of the vertex where a new self-loop is added, and in the latter by the degree of the vertex (in $W_{X,l}$ which is at most $O(1/\psi_l)$ times the degree of the same vertex in $G$) that is split off. We can thus bound the total amount of increases in $\|\deg_{W_{X,l}} + \br_{X,l}\|_1$ over all $X$ and $l$ by $\tilde{O}(L_{max} \cdot m/\psi_l)$ since each edge appears at most $O(\log(m))$ times on the smaller side of a vertex split. By our previous reasoning, this implies that these changes in $\deg_{W_{X,l}} + \br_{X,l}$ can increase the vector $\br_{X,l}$ over all levels $l$ and clusters $X$ by at most $\tilde{O}(L^2_{max} \cdot m /\psi^{2}_{0})$ (here we use that $\phi \leq \psi_0$ by assumption).

For the total number of changes to $\|\deg_{G[X]}\|_1$ over all $X$ and $l$, we can further straight-forwardly obtain the upper bound $\tilde{O}(m)$. Using the reasoning from before, we thus obtain a total of at most $\tilde{O}(L^2_{max} \cdot m /\psi_{0})$ in increase in vectors $\br_{X,l}$. 
\end{proof}

\begin{lemma}
The algorithm takes total time $\tilde{O}(m^{1+1/L_{max}} \cdot L_{max} \cdot (\log(m))^{4^{O(L_{max})}} /\phi^2)$.
\end{lemma}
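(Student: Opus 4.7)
The plan is to decompose the total runtime into four components: (i) all invocations of $\textsc{ApplyUpdate}$, (ii) $\textsc{CutOrEmbed}$ calls inside the main while-loop of $\textsc{Update}$ (at level $L_{max}$), (iii) $\textsc{PruneOrRepair}$ calls inside the same while-loop (at levels $l<L_{max}$), and (iv) the cut-handling bookkeeping in Lines~\ref{lne:addSmallerSideToR} through \ref{lne:induceOnSubcluster}. For (i), the inner loops of $\textsc{ApplyUpdate}$ perform $\tilde{O}(1)$ work per unit of growth they create in the $\br_{X,l}$ vectors, so by \Cref{clm:increaseInVectors} its total cost is $\tilde{O}(mL_{max}^2/(\phi\psi_0))$. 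Substituting $\psi_0 = \Theta(\psi_{CMG}^{4^{L_{max}}})$ and hence $1/\psi_0 = (\log m)^{2\cdot 4^{L_{max}}}$, this is well within the target.

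For (ii) and (iii), I will use a per-level amortization. Each trigger of the main while-loop at level $l<L_{max}$ on cluster $X$ runs $\textsc{PruneOrRepair}$ on $\br_{X,l+1}$, which, by the maximality of $l$ in \Cref{lne:chooseXandL}, has $\ell_1$-norm at most $\frac{\psi_{l+1}}{8}|E(G[X])|^{(l+1)/L_{max}}$; by \Cref{lma:asymPruning} each such trigger costs $\tilde{O}(|E(G[X])|^{(l+1)/L_{max}}/(\psi_{l+1}\phi))$, and a level-$L_{max}$ trigger costs $\tilde{O}(|E(G[X])|/\phi)$ by \Cref{thm:cutMatching}. Between two consecutive level-$l$ triggers on the same cluster lineage, either $\br_{X,l}$ grew via $\textsc{ApplyUpdate}$ by at least $\frac{3\psi_l}{32}|E(G[X])|^{l/L_{max}}$ (because the firing threshold is $\frac{\psi_l}{8}|E(G[X])|^{l/L_{max}}$ while $\textsc{PruneOrRepair}$ resets the vector to mass at most $\frac{\psi_l}{32}|E(G[X])|^{l/L_{max}}$), or the interval began right after a $\textsc{CutOrEmbed}$ event at level $L_{max}$ whose returned $\br$ exceeds the lower-level thresholds. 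Charging the genuine triggers against $\textsc{ApplyUpdate}$-induced growth yields amortized cost per unit of growth $\tilde{O}(m^{1/L_{max}}/(\psi_l\psi_{l+1}\phi))$, and multiplying by the $\tilde{O}(mL_{max}^2/(\phi\psi_0))$ total-growth bound gives $\tilde{O}(m^{1+1/L_{max}}L_{max}^2/(\psi_0^3\phi^2))$ per level, hence $\tilde{O}(m^{1+1/L_{max}}L_{max}^3/(\psi_0^3\phi^2))$ after summing over the $L_{max}+1$ levels. The cascading triggers following each $\textsc{CutOrEmbed}$ I charge directly to the top-level event; since each cascade contributes at most one trigger per lower level and the costs sum geometrically, these do not dominate. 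Part (iv) costs $\tilde{O}(\min\{\vol_G(S),\vol_G(X\setminus S)\})$ per iteration, and each edge is on the smaller side $O(\log m)$ times (exactly the argument used to bound $|R|$), contributing only $\tilde{O}(m)$.

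Combining (i)--(iv) and plugging $1/\psi_0^{O(1)} = (\log m)^{4^{O(L_{max})}}$ (with the $L_{max}^{O(1)}$ factors absorbed since $L_{max}=O(\sqrt{\log\log m})$) yields the claimed $\tilde{O}(m^{1+1/L_{max}}L_{max}(\log m)^{4^{O(L_{max})}}/\phi^2)$ runtime. The main obstacle is to handle the cross-level interaction cleanly: a $\textsc{CutOrEmbed}$ at level $L_{max}$ simultaneously resets every $\br_{X,l}$ and its output can exceed every lower-level firing threshold, producing cascading triggers that must be charged to the top-level event rather than double-counted against $\textsc{ApplyUpdate}$-growth. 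A related subtlety is that cluster splits redistribute $\br_{X,l}$ between subclusters, so the amortization is cleanest stated through a global potential $\Phi_l = \sum_{X\in\mathcal{X}} \frac{8\|\br_{X,l}\|_1}{\psi_l|E(G[X])|^{l/L_{max}}}$ which decreases by a constant per genuine trigger and whose total increase is controlled via \Cref{clm:increaseInVectors}.
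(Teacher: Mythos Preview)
Your plan is correct in spirit and rests on the same amortization idea as the paper, but the paper's execution is considerably simpler. Instead of separating ``genuine'' triggers (charged to $\textsc{ApplyUpdate}$ growth) from ``cascading'' triggers after a $\textsc{CutOrEmbed}$, the paper charges \emph{every} while-loop iteration at level $l$ directly to the drop it causes in $\|\br_{X,l}\|_1$: the iteration fires only when $\|\br_{X,l}\|_1\geq\frac{\psi_l}{8}|E(G[X])|^{l/L_{max}}$ and (when it returns a witness) resets the vector to $\|\br_1+\br_2\|_1\leq\frac{\psi_l}{16}|E(G[X])|^{l/L_{max}}$, since each of the two calls uses $R'=\frac{\psi_l}{32}|E(G[X])|^{l/L_{max}}$ (so your reset value $\frac{\psi_l}{32}$ is off by a factor of two). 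By maximality of $l$ the cost of the iteration is $\tilde O(\|\br_{X,l+1}\|_1/(\psi_{l+1}^2\phi))\leq\tilde O(|E(G[X])|^{(l+1)/L_{max}}/(\psi_0\phi))$, so the charge per unit removed is $\tilde O(m^{1/L_{max}}/(\psi_0^2\phi))$. Total units removed over all $X$ and $l$ are bounded by the initial mass $2m\cdot(L_{max}+1)$ plus the total increase from \Cref{clm:increaseInVectors}, yielding $\tilde O(m^{1+1/L_{max}}L_{max}^2/(\phi^2\psi_0^3))$ with no separate cascade bookkeeping.

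Your proposed normalized potential $\Phi_l=\sum_{X}\frac{8\|\br_{X,l}\|_1}{\psi_l|E(G[X])|^{l/L_{max}}}$ is the one place where the plan would actually fail as stated: when a cluster $X$ is replaced by $S$ and $X\setminus S$, the numerators split additively but each denominator $|E(G[\cdot])|^{l/L_{max}}$ strictly shrinks, so $\Phi_l$ can jump \emph{up} at every split, and you would have to separately bound these increases. The paper avoids this entirely by using the raw mass $\sum_X\|\br_{X,l}\|_1$, which is exactly preserved under the restriction in \Cref{lne:induceOnSubcluster}. If you drop the normalization and use raw mass, your per-level amortization collapses to the paper's argument and the genuine/cascading distinction becomes unnecessary.
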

\begin{proof}
Whenever the procedure $\textsc{PruneOrRepair}$ is run on a set $X \in \mathcal{X}$ and level $l \in [L_{max}]$, it does so since $\|\br_{X,l}\|_1 \geq \frac{\psi_{l}}{8} |E(G[X])|^{l/L_{max}}$ by the condition of the while-loop in \Cref{lne:mainWhileLoop}. It then re-sets $(W_{X,l}, \br_{X,l})$ in \Cref{lne:newWitness2} such that $\|\br_{X,l}\|_1 = \|\br_{1}\|_1 + \|\br_{2}\|_1 \leq 2 \cdot \frac{\psi_{l}}{32} |E(G[X])|^{l/L_{max}} = \frac{\psi_{l}}{16} |E(G[X])|^{l/L_{max}}$. Thus, each such computation decreases the $\ell_1$-sum of all vectors $\br_{X', l'}$ over all $X'$ and $l'$ by at least $\frac{\psi_{l}}{16} |E(G[X])|^{l/L_{max}}$.

But note that the invocation of $\textsc{PruneOrRepair}$ takes time $\tilde{O}(\|\br_{X,l+1}\|_1 / \psi_{0}^2\phi)$ by \Cref{lma:asymPruning}. Since we always pick the largest $l$ for which the while-loop condition in \Cref{lne:mainWhileLoop} is satisfied first, this implies that the run-time is at most $\tilde{O}(|E(G[X])|^{(l+1)/L_{max}} / \psi_{0}\phi)$. Thus, we can charge time spent in these invocations of $\tilde{O}(|E(G[X])|^{1/L_{max}}/\psi_{0}^2\phi) = \tilde{O}(m^{1/L_{max}}/\psi_{0}^2\phi)$ to each unit that we remove from $\br_{X,l}$ due to this invocation.

Combining this insight with the fact that initially $\|\br_{V,l}\|_1 = 2m$ for all levels $l$ and with the increase bound from \Cref{clm:increaseInVectors}, we can bound the total time spend for all such invocations by $\tilde{O}(m^{1+1/L_{max}} \cdot L_{max}^2 \cdot \frac{1}{\phi^2 \psi_{0}^3})$.

It remains to observe that by the analysis from \Cref{clm:increaseInVectors}, we can also bound the total run-time of all invocations of $\textsc{ApplyUpdate}$ by  $\tilde{O}(m \cdot L_{max} \cdot \frac{1}{\phi \psi_{0}})$. The time of all other operations is subsumed by the time spend on the invocations to either $\textsc{PruneOrRepair}$ or $\textsc{ApplyUpdate}$.
\end{proof}

\section*{Acknowledgements}

The authors are grateful for insightful discussions with Li Chen, Yang P. Liu, and Sushant Sachdeva that ultimately lead to the key insight in this article. We are very grateful for feedback from Richard Peng, Thatchaphol Saranurak, Jason Li and Simon Meierhans on an early draft of this article that helped to streamline presentation.

\bibliographystyle{alpha}
\bibliography{main}

\newcommand{\etalchar}[1]{$^{#1}$}
\begin{thebibliography}{vdBLN{\etalchar{+}}20}

\bibitem[BBG{\etalchar{+}}22]{bernstein2020fully}
Aaron Bernstein, Jan van~den Brand, Maximilian~Probst Gutenberg, Danupon
  Nanongkai, Thatchaphol Saranurak, Aaron Sidford, and He~Sun.
\newblock Fully-dynamic graph sparsifiers against an adaptive adversary.
\newblock {\em ICALP 2022}, 2022.

\bibitem[BGS20]{bernstein2020deterministic}
Aaron Bernstein, Maximilian~Probst Gutenberg, and Thatchaphol Saranurak.
\newblock Deterministic decremental reachability, scc, and shortest paths via
  directed expanders and congestion balancing.
\newblock In {\em 2020 IEEE 61st Annual Symposium on Foundations of Computer
  Science (FOCS)}, pages 1123--1134. IEEE, 2020.

\bibitem[BGS22]{bernstein2021deterministic}
Aaron Bernstein, Maximilian~Probst Gutenberg, and Thatchaphol Saranurak.
\newblock Deterministic decremental sssp and approximate min-cost flow in
  almost-linear time.
\newblock pages 1000--1008, 2022.

\bibitem[BGWN21]{bernstein2021decremental}
Aaron Bernstein, Maximilian~Probst Gutenberg, and Christian Wulff-Nilsen.
\newblock Decremental strongly connected components and single-source
  reachability in near-linear time.
\newblock {\em SIAM Journal on Computing}, (0):STOC19--128, 2021.

\bibitem[CDK{\etalchar{+}}21]{chalermsook2021vertex}
Parinya Chalermsook, Syamantak Das, Yunbum Kook, Bundit Laekhanukit, Yang~P
  Liu, Richard Peng, Mark Sellke, and Daniel Vaz.
\newblock Vertex sparsification for edge connectivity.
\newblock In {\em Proceedings of the 2021 ACM-SIAM Symposium on Discrete
  Algorithms (SODA)}, pages 1206--1225. SIAM, 2021.

\bibitem[CGL{\etalchar{+}}20]{chuzhoy2020deterministic}
Julia Chuzhoy, Yu~Gao, Jason Li, Danupon Nanongkai, Richard Peng, and
  Thatchaphol Saranurak.
\newblock A deterministic algorithm for balanced cut with applications to
  dynamic connectivity, flows, and beyond.
\newblock In {\em 2020 IEEE 61st Annual Symposium on Foundations of Computer
  Science (FOCS)}, pages 1158--1167. IEEE, 2020.

\bibitem[CGP{\etalchar{+}}20]{chu2020graph}
Timothy Chu, Yu~Gao, Richard Peng, Sushant Sachdeva, Saurabh Sawlani, and
  Junxing Wang.
\newblock Graph sparsification, spectral sketches, and faster resistance
  computation via short cycle decompositions.
\newblock {\em SIAM Journal on Computing}, (0):FOCS18--85, 2020.

\bibitem[CHI{\etalchar{+}}16]{chechik2016decremental}
Shiri Chechik, Thomas~Dueholm Hansen, Giuseppe~F Italiano, Jakub
  {\L}{\k{a}}cki, and Nikos Parotsidis.
\newblock Decremental single-source reachability and strongly connected
  components in o (m√ n) total update time.
\newblock In {\em 2016 IEEE 57th Annual Symposium on Foundations of Computer
  Science (FOCS)}, pages 315--324. IEEE, 2016.

\bibitem[Chu21]{chuzhoy2021decremental}
Julia Chuzhoy.
\newblock Decremental all-pairs shortest paths in deterministic near-linear
  time.
\newblock In {\em Proceedings of the 53rd Annual ACM SIGACT Symposium on Theory
  of Computing}, pages 626--639, 2021.

\bibitem[CK19]{chuzhoy2019new}
Julia Chuzhoy and Sanjeev Khanna.
\newblock A new algorithm for decremental single-source shortest paths with
  applications to vertex-capacitated flow and cut problems.
\newblock In {\em Proceedings of the 51st Annual ACM SIGACT Symposium on Theory
  of Computing}, pages 389--400, 2019.

\bibitem[CKK{\etalchar{+}}06]{chawla2006hardness}
Shuchi Chawla, Robert Krauthgamer, Ravi Kumar, Yuval Rabani, and D~Sivakumar.
\newblock On the hardness of approximating multicut and sparsest-cut.
\newblock {\em computational complexity}, 15(2):94--114, 2006.

\bibitem[CKL{\etalchar{+}}22]{chen2022maximum}
Li~Chen, Rasmus Kyng, Yang~P Liu, Richard Peng, Maximilian~Probst Gutenberg,
  and Sushant Sachdeva.
\newblock Maximum flow and minimum-cost flow in almost-linear time.
\newblock {\em arXiv preprint arXiv:2203.00671}, 2022.

\bibitem[CS21]{chuzhoy2021deterministic}
Julia Chuzhoy and Thatchaphol Saranurak.
\newblock Deterministic algorithms for decremental shortest paths via layered
  core decomposition.
\newblock In {\em Proceedings of the 2021 ACM-SIAM Symposium on Discrete
  Algorithms (SODA)}, pages 2478--2496. SIAM, 2021.

\bibitem[Din06]{dinitz2006dinitz}
Yefim Dinitz.
\newblock Dinitz’algorithm: The original version and even’s version.
\newblock In {\em Theoretical computer science}, pages 218--240. Springer,
  2006.

\bibitem[GRST21]{goranci2021expander}
Gramoz Goranci, Harald R{\"a}cke, Thatchaphol Saranurak, and Zihan Tan.
\newblock The expander hierarchy and its applications to dynamic graph
  algorithms.
\newblock In {\em Proceedings of the 2021 ACM-SIAM Symposium on Discrete
  Algorithms (SODA)}, pages 2212--2228. SIAM, 2021.

\bibitem[HRW20]{henzinger2020local}
Monika Henzinger, Satish Rao, and Di~Wang.
\newblock Local flow partitioning for faster edge connectivity.
\newblock {\em SIAM Journal on Computing}, 49(1):1--36, 2020.

\bibitem[JS22]{jin2022fully}
Wenyu Jin and Xiaorui Sun.
\newblock Fully dynamic st edge connectivity in subpolynomial time.
\newblock In {\em 2021 IEEE 62nd Annual Symposium on Foundations of Computer
  Science (FOCS)}, pages 861--872. IEEE, 2022.

\bibitem[KLOS14]{kelner2014almost}
Jonathan~A Kelner, Yin~Tat Lee, Lorenzo Orecchia, and Aaron Sidford.
\newblock An almost-linear-time algorithm for approximate max flow in
  undirected graphs, and its multicommodity generalizations.
\newblock In {\em Proceedings of the twenty-fifth annual ACM-SIAM symposium on
  Discrete algorithms}, pages 217--226. SIAM, 2014.

\bibitem[KRV09]{khandekar2009graph}
Rohit Khandekar, Satish Rao, and Umesh Vazirani.
\newblock Graph partitioning using single commodity flows.
\newblock {\em Journal of the ACM (JACM)}, 56(4):1--15, 2009.

\bibitem[KT18]{kawarabayashi2018deterministic}
Ken-ichi Kawarabayashi and Mikkel Thorup.
\newblock Deterministic edge connectivity in near-linear time.
\newblock {\em Journal of the ACM (JACM)}, 66(1):1--50, 2018.

\bibitem[KVV04]{kannan2004clusterings}
Ravi Kannan, Santosh Vempala, and Adrian Vetta.
\newblock On clusterings: Good, bad and spectral.
\newblock {\em Journal of the ACM (JACM)}, 51(3):497--515, 2004.

\bibitem[{\L}{\k{a}}c13]{lkacki2013improved}
Jakub {\L}{\k{a}}cki.
\newblock Improved deterministic algorithms for decremental reachability and
  strongly connected components.
\newblock {\em ACM Transactions on Algorithms (TALG)}, 9(3):1--15, 2013.

\bibitem[Li21]{li2021deterministic}
Jason Li.
\newblock Deterministic mincut in almost-linear time.
\newblock In {\em Proceedings of the 53rd Annual ACM SIGACT Symposium on Theory
  of Computing}, pages 384--395, 2021.

\bibitem[Liu20]{liu2020vertex}
Yang~P Liu.
\newblock Vertex sparsification for edge connectivity in polynomial time.
\newblock {\em arXiv preprint arXiv:2011.15101}, 2020.

\bibitem[Lou10]{louis2010cut}
Anand Louis.
\newblock Cut-matching games on directed graphs.
\newblock {\em arXiv preprint arXiv:1010.1047}, 2010.

\bibitem[LSY19]{liu2019short}
Yang~P Liu, Sushant Sachdeva, and Zejun Yu.
\newblock Short cycles via low-diameter decompositions.
\newblock In {\em Proceedings of the Thirtieth Annual ACM-SIAM Symposium on
  Discrete Algorithms}, pages 2602--2615. SIAM, 2019.

\bibitem[NS17]{nanongkai2017dynamic1}
Danupon Nanongkai and Thatchaphol Saranurak.
\newblock Dynamic spanning forest with worst-case update time: adaptive, las
  vegas, and o (n1/2-$\varepsilon$)-time.
\newblock In {\em Proceedings of the 49th Annual ACM SIGACT Symposium on Theory
  of Computing}, pages 1122--1129, 2017.

\bibitem[NSWN17]{nanongkai2017dynamic}
Danupon Nanongkai, Thatchaphol Saranurak, and Christian Wulff-Nilsen.
\newblock Dynamic minimum spanning forest with subpolynomial worst-case update
  time.
\newblock In {\em 2017 IEEE 58th Annual Symposium on Foundations of Computer
  Science (FOCS)}, pages 950--961. IEEE, 2017.

\bibitem[OZ14]{orecchia2014flow}
Lorenzo Orecchia and Zeyuan~Allen Zhu.
\newblock Flow-based algorithms for local graph clustering.
\newblock In {\em Proceedings of the twenty-fifth annual ACM-SIAM symposium on
  Discrete algorithms}, pages 1267--1286. SIAM, 2014.

\bibitem[Pen16]{peng2016approximate}
Richard Peng.
\newblock Approximate undirected maximum flows in o (m polylog (n)) time.
\newblock In {\em Proceedings of the twenty-seventh annual ACM-SIAM symposium
  on Discrete algorithms}, pages 1862--1867. SIAM, 2016.

\bibitem[PY19]{parter2019optimal}
Merav Parter and Eylon Yogev.
\newblock Optimal short cycle decomposition in almost linear time.
\newblock In {\em 46th International Colloquium on Automata, Languages, and
  Programming (ICALP 2019)}. Schloss Dagstuhl-Leibniz-Zentrum fuer Informatik,
  2019.

\bibitem[Sar21]{saranurak2021simple}
Thatchaphol Saranurak.
\newblock A simple deterministic algorithm for edge connectivity.
\newblock In {\em Symposium on Simplicity in Algorithms (SOSA)}, pages 80--85.
  SIAM, 2021.

\bibitem[ST04]{spielman2004nearly}
Daniel~A Spielman and Shang-Hua Teng.
\newblock Nearly-linear time algorithms for graph partitioning, graph
  sparsification, and solving linear systems.
\newblock In {\em Proceedings of the thirty-sixth annual ACM symposium on
  Theory of computing}, pages 81--90, 2004.

\bibitem[SW19]{saranurak2019expander}
Thatchaphol Saranurak and Di~Wang.
\newblock Expander decomposition and pruning: Faster, stronger, and simpler.
\newblock In {\em Proceedings of the Thirtieth Annual ACM-SIAM Symposium on
  Discrete Algorithms}, pages 2616--2635. SIAM, 2019.

\bibitem[vdBLL{\etalchar{+}}21]{van2021minimum}
Jan van~den Brand, Yin~Tat Lee, Yang~P Liu, Thatchaphol Saranurak, Aaron
  Sidford, Zhao Song, and Di~Wang.
\newblock Minimum cost flows, mdps, and ℓ1-regression in nearly linear time
  for dense instances.
\newblock In {\em Proceedings of the 53rd Annual ACM SIGACT Symposium on Theory
  of Computing}, pages 859--869, 2021.

\bibitem[vdBLN{\etalchar{+}}20]{van2020bipartite}
Jan van~den Brand, Yin-Tat Lee, Danupon Nanongkai, Richard Peng, Thatchaphol
  Saranurak, Aaron Sidford, Zhao Song, and Di~Wang.
\newblock Bipartite matching in nearly-linear time on moderately dense graphs.
\newblock In {\em 2020 IEEE 61st Annual Symposium on Foundations of Computer
  Science (FOCS)}, pages 919--930. IEEE, 2020.

\bibitem[WN17]{wulff2017fully}
Christian Wulff-Nilsen.
\newblock Fully-dynamic minimum spanning forest with improved worst-case update
  time.
\newblock In {\em Proceedings of the 49th Annual ACM SIGACT Symposium on Theory
  of Computing}, pages 1130--1143, 2017.

\end{thebibliography}

\end{document}